\newtheorem{theorem}{Theorem}[section]
\newtheorem{lemma}[theorem]{Lemma}
\newtheorem{proposition}[theorem]{Proposition}
\newtheorem{corollary}[theorem]{Corollary}
\theoremstyle{definition}
\newtheorem{remark}[theorem]{Remark}
\newtheorem{definition}[theorem]{Definition}
\newtheorem{assumption}[theorem]{Assumption}
\numberwithin{equation}{section}
\newcommand{\bbR}{\mathbb{R}}
\newcommand{\cN}{\mathcal{N}}
\newcommand{\bbE}{\mathbb{E}}
\newcommand{\toW}{\overset{W_2}{\to}}
\newcommand{\MLE}{\operatorname{MLE}}
\newcommand{\mmse}{\operatorname{mmse}}
\newcommand{\diag}{\operatorname{diag}}
\newcommand{\Tr}{\operatorname{Tr}}
\newcommand{\eps}{\varepsilon}
\newcommand{\DKL}{D_{\mathrm{KL}}}
\newcommand{\bX}{\mathbf{X}}
\newcommand{\bY}{\mathbf{Y}}
\newcommand{\bP}{\mathbf{P}}
\newcommand{\bff}{\mathbf{f}}
\newcommand{\bg}{\mathbf{g}}
\newcommand{\bu}{\mathbf{u}}
\newcommand{\bv}{\mathbf{v}}
\newcommand{\bx}{\mathbf{x}}
\newcommand{\bU}{\mathbf{U}}
\newcommand{\bV}{\mathbf{V}}
\newcommand{\bW}{\mathbf{W}}
\newcommand{\bZ}{\mathbf{Z}}
\newcommand{\bF}{\mathbf{F}}
\newcommand{\bG}{\mathbf{G}}
\newcommand{\Id}{\mathrm{Id}}
\newcommand{\cP}{\mathcal{P}}
\newcommand{\btheta}{\pmb{\theta}}
\newcommand{\bTheta}{\mathbf{\Theta}}
\newcommand{\bO}{\mathbf{O}}
\newcommand{\bQ}{\mathbf{Q}}
\newcommand{\bA}{\mathbf{A}}
\newcommand{\bB}{\mathbf{B}}
\DeclareMathOperator*{\argmax}{arg\,max}
\newcommand{\der}{\mathsf{d}}
\DeclareMathOperator*{\PL}{PL}
\NewDocumentCommand{\xincludegraphics}{O{}m}
{
	\tl_clear:N \l_miguel_label_tl
	\clist_clear:N \l_miguel_label_clist
	\keys_set:nn { miguel/label } { #1 }
	\tl_if_empty:NTF \l_miguel_label_tl
	{
		\miguel_includegraphics:Vn \l_miguel_label_clist { #2 }
	}
	{
		\hbox_set:Nn \l_miguel_label_image_box
		{
			\miguel_includegraphics:Vn \l_miguel_label_clist { #2 }
		}
		\hbox_set:Nn \l_miguel_label_box
		{
			\skip_horizontal:n { -15pt }
			\fcolorbox{white}{white}{\footnotesize \tl_use:N \l_miguel_label_tl}
		}
		\leavevmode
		\box_use:N \l_miguel_label_image_box
		\skip_horizontal:n { -\box_wd:N \l_miguel_label_image_box }
		\hbox_overlap_right:n
		{
			\box_move_up:nn
			{
				\box_ht:N \l_miguel_label_image_box - 
				\box_ht:N \l_miguel_label_box - -3pt
			}
			{ \box_use_drop:N \l_miguel_label_box }
		}
		\skip_horizontal:n { \box_wd:N \l_miguel_label_image_box }
	}
}
\begin{document}

\title{Empirical Bayes PCA in high dimensions}
\date{}
\author{Xinyi Zhong\thanks{These authors contributed equally.\newline
XZ: Yale University, Department of Statistics and Data Science.
\texttt{xinyi.zhong@yale.edu} \newline
CS: Yale University, Department of Biostatistics.
\texttt{c.su@yale.edu} \newline
ZF: Yale University, Department of Statistics and Data Science.
\texttt{zhou.fan@yale.edu}}
\and Chang Su\footnotemark[1] \and Zhou Fan}

\maketitle

\begin{abstract}
When the dimension of data is comparable to or larger than the number of
data samples, Principal Components Analysis (PCA) may exhibit problematic
high-dimensional noise. In this work, we propose an Empirical Bayes PCA method
that reduces this noise by estimating a joint prior distribution for the
principal components. EB-PCA is based on the classical Kiefer-Wolfowitz
nonparametric MLE for empirical Bayes estimation, distributional results
derived from random matrix theory for the sample PCs, and iterative refinement
using an Approximate Message Passing (AMP) algorithm. In theoretical ``spiked'' 
models, EB-PCA achieves Bayes-optimal estimation accuracy in the same settings as an
oracle Bayes AMP procedure that knows the true priors. Empirically, EB-PCA
significantly improves over PCA when there is strong prior structure, both in
simulation and on quantitative benchmarks constructed from the 1000 Genomes
Project and the International HapMap Project. An illustration is presented for
analysis of gene expression data obtained by single-cell RNA-seq.
\end{abstract}

\section{Introduction}

Principal components analysis (PCA) is a widely used technique for
dimensionality reduction. However, when the dimension of the data may be
comparable to or larger than the number of available data samples, it is known 
that the sample principal components (PCs) may exhibit phenomena of
high-dimensional noise \cite{lu2002sparse,johnstone2009consistency}.
We propose a method called
EB-PCA for reducing this noise, using the classical
statistical idea of empirical Bayes \cite{robbins1956empirical,efron2012large}.

Figure \ref{fig:1000genomesintro} illustrates EB-PCA on a genetics
example. Panel (a) displays the top 4 PCs of a genotype matrix
from the 1000 Genomes Project \cite{con2015global}, containing genotypes of
2504 individuals at 100{,}000 common single
nucleotide polymorphisms (SNPs). The PCs depict the stratification of these
individuals according to five broad ethnic populations. Here, the number of
SNPs far exceeds the dimension 2504 of each PC, and the estimation noise is
small. This allows us to interpret the PCs in panel (a) as an approximate
``ground truth''.

\begin{figure}[t]
\minipage{0.33\columnwidth}
\xincludegraphics[width=0.85\textwidth,label=(a)]{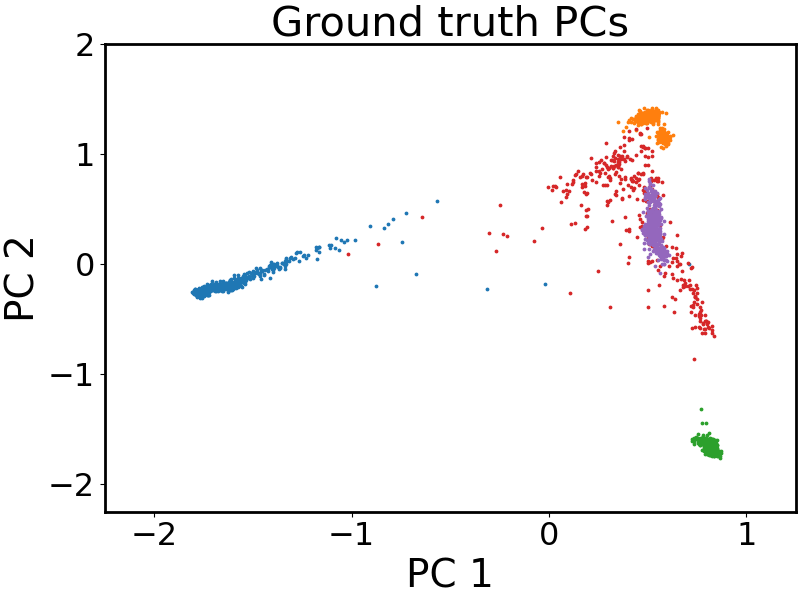}\\
\xincludegraphics[width=0.85\textwidth]{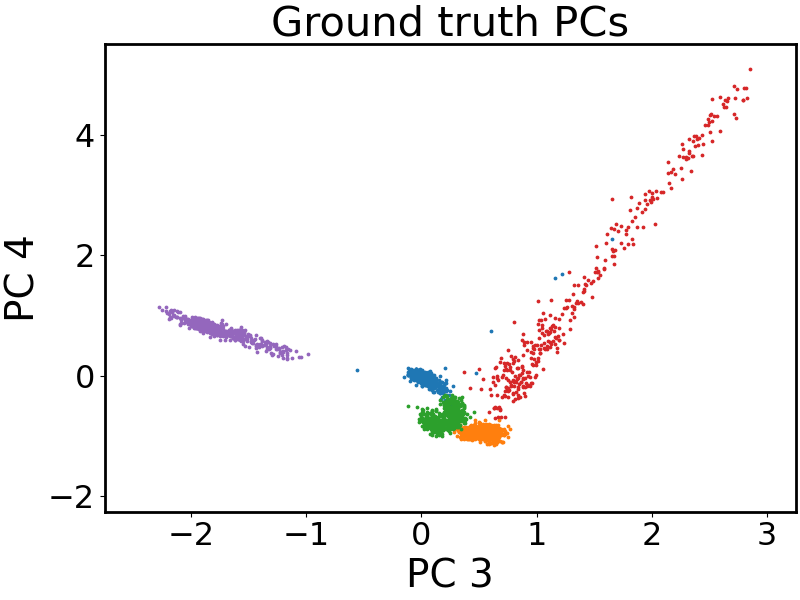}\endminipage
\minipage{0.33\columnwidth}
\xincludegraphics[width=0.85\textwidth,label=(b)]{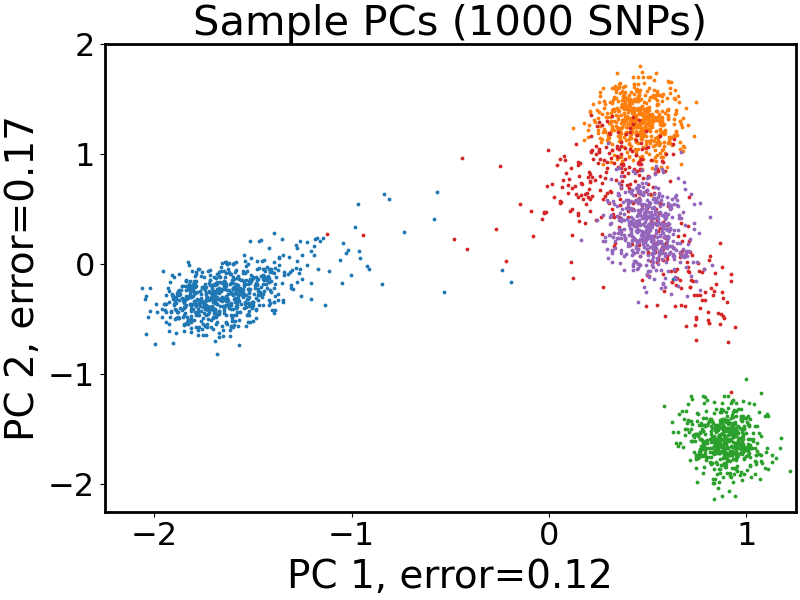}\\
\xincludegraphics[width=0.85\textwidth]{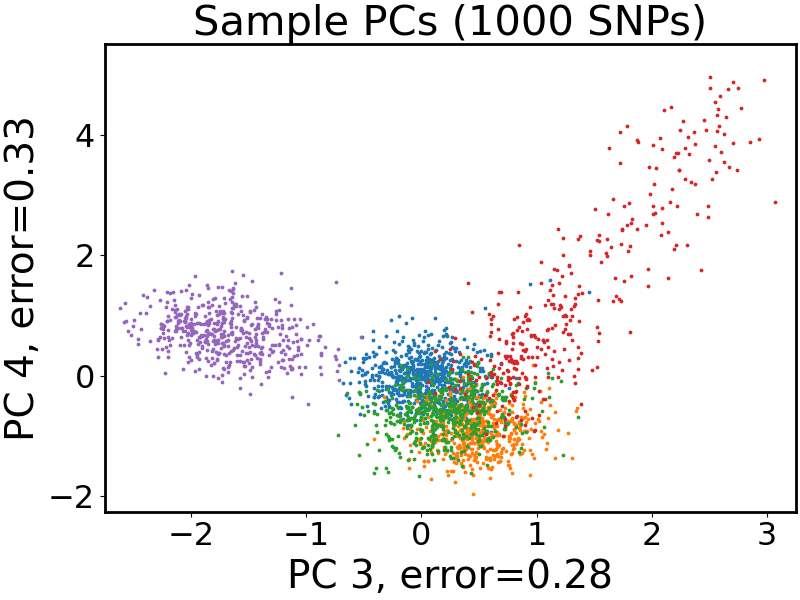}\endminipage
\minipage{0.33\columnwidth}
\xincludegraphics[width=0.85\textwidth,label=(c)]{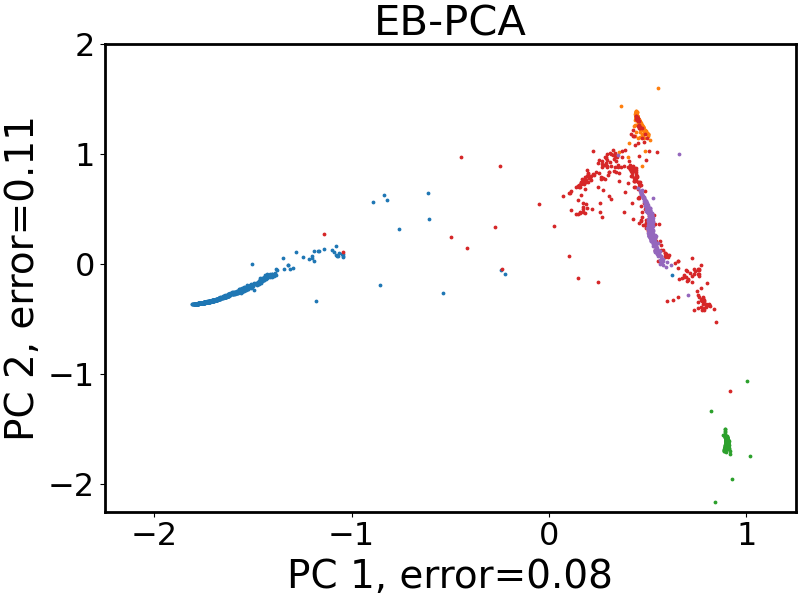}\\
\xincludegraphics[width=0.85\textwidth]{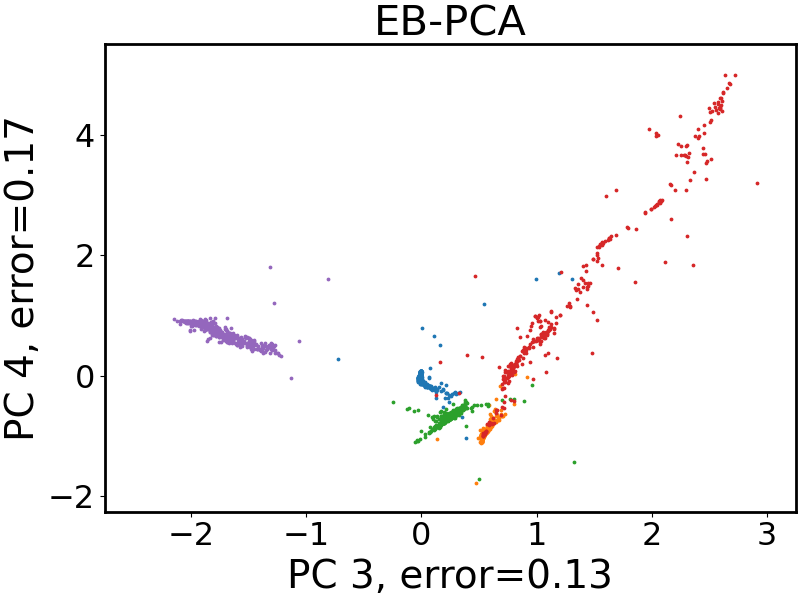}
\endminipage\\
\minipage{\columnwidth}
\includegraphics[width=\textwidth]{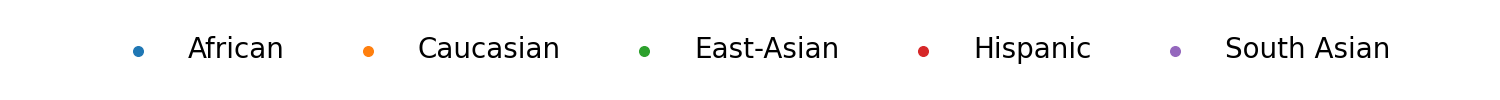}
\endminipage
\caption{Illustration of EB-PCA on genotype data from the 1000 Genomes Project.
(a) 1st vs.\ 2nd PC and 3rd vs.\ 4th PC, for the genotypes of 2504
individuals across $100{,}000$ common SNPs. Each scatterplot has 2504 data
points, representing the embedding of these individuals into a 4-dimensional
space, with points colored by the individuals' ethnicity. We take these PCs
as the ground truth. (b) PCs computed from a random subsample of $1000$
SNPs. Substantial high-dimensional noise is observed in these PCs. (c) The
EB-PCA estimates of the top 4 PCs, computed from the same subsampled data as
in panel (b). These estimates are much closer to the ground-truth PCs in panel
(a) and have quantitatively lower estimation error.}\label{fig:1000genomesintro}
\end{figure}

The phenomenon of high-dimensional noise is illustrated in
panel (b), which displays the top 4 PCs for genotypes of the same
2504 individuals subsampled at only
1000 randomly selected SNPs. Applying EB-PCA to this reduced data
of 1000 SNPs yields the PC estimates displayed in panel (c). These are
remarkably close to the PCs in panel (a) computed on all
100{,}000 SNPs, even though EB-PCA has only access to the 1000 subsampled SNPs.
In Section~\ref{sec:examples}, we use this subsampling
approach to demonstrate a sizeable quantitative improvement of EB-PCA over PCA.
We also illustrate an application to single-cell RNA-seq gene expression
data where a ground truth is unknown.

A central component of the method is a Bayes Approximate Message
Passing (AMP) procedure \cite{rangan2012iterative,montanari2017estimation}
that implements approximate Bayesian inference for low-rank matrix estimation in
high dimensions. EB-PCA adapts Bayes AMP, which requires knowledge of the true
prior distributions, to more typical settings in practice where such
information is unavailable, by nonparametrically estimating the priors from the 
sample PCs and the AMP iterates. Similar strategies can be applied to Bayes AMP
algorithms for other applications.

To describe the main ideas behind EB-PCA, consider
a rank-one signal-plus-noise model for the observed data,
\begin{equation}\label{eq:rankonemodel}
\bY=\frac{s}{n} \cdot \bu\bv^\top+\bW \in \bbR^{n \times d}
\end{equation}
where $\bu \in \bbR^n$ and $\bv \in \bbR^d$ are the left and right true PCs of
interest, with associated signal strength $s>0$, and $\bW \in \bbR^{n \times d}$ is i.i.d.\ Gaussian
observational noise. We discuss possible extensions to more general noise in Section~\ref{sec:conclusion}. We will refer to the leading
left- and right-singular vectors $\bff \in \bbR^n$ and $\bg \in \bbR^d$ of $\bY$ as the
sample PCs.

The EB-PCA approach consists of three main ideas, each of which is individually
well-studied:
\begin{enumerate}%[topsep=0pt,itemsep=-1ex,partopsep=1ex,parsep=1ex]
\item {\bf Kiefer-Wolfowitz NPMLE.} Consider the classical \emph{compound
decision problem} of estimating $\btheta \in \bbR^n$ from a Gaussian
observation vector $\bx \sim \cN(\mu \cdot \btheta,\,\sigma^2 \cdot
\Id_{n \times n})$, for two known scalar parameters $\mu,\sigma^2>0$. The
empirical Bayes paradigm first posits a prior distribution $\pi_*$
for the coordinates of $\btheta$, then estimates $\pi_*$ by an estimator $\pi$ 
based on the marginal density of the observed coordinates of $\bx$, and
finally applies Bayes's rule defined by $\pi$ to ``denoise'' $\bx$ and obtain
the estimate of $\btheta$.

A nonparametric implementation of this paradigm was described in
\cite{robbins1950generalization,kiefer1956consistency}, which suggested
estimating $\pi_*$ by the nonparametric maximum likelihood estimator (NPMLE)
that maximizes the likelihood of $\bx$ over all prior probability
distributions $\pi$ on the real line. It was shown in
\cite{kiefer1956consistency,laird1978nonparametric,lindsay1983geometryI} that
such a maximizer $\pi$ exists with discrete and finite support. We denote by
\begin{equation*}
\theta(\bx \mid \mu,\sigma^2,\pi)=\bbE_\pi[\btheta \mid \bx]
\end{equation*}
the empirical Bayes posterior mean estimate of $\btheta$ using this estimated
prior $\pi$.

\item {\bf Random matrix asymptotics for sample PCs.}
In the model of \eqref{eq:rankonemodel}, an influential line of work
\cite{baik2005phase,paul2007asymptotics,nadler2008finite,benaych2012singular}
has quantified the asymptotic error of the sample PCs $(\bff,\bg)$ for the true
PCs $(\bu,\bv)$ when $n,d \to \infty$ simultaneously such that
$d/n \to \gamma \in (0,\infty)$.
This work showed that in this high-dimensional limit,
\[\langle \bff,\bu \rangle \to \bar{\mu}_* \equiv \bar{\mu}_*(s,\,\gamma),
\qquad \langle \bg,\bv \rangle \to \mu_* \equiv \mu_*(s,\,\gamma)\]
for two inner products $\mu_*,\bar{\mu}_* \in [0,1)$ that depend only on the
signal strength $s$ and the dimension ratio $\gamma$. For $s$ larger than
a certain phase transition threshold $s_*(\gamma)$, the leading singular value
of $\bY$ emerges as an outlier from the bulk distribution of its remaining
singular values, the inner products $\bar{\mu}_*,\mu_*$ are strictly positive,
and $\bg$ has an approximate entrywise Gaussian law
\[\bg \approx \cN(\mu_* \cdot \bv,\,\sigma_*^2 \cdot \Id_{d \times d}),
\qquad \sigma_*^2=1-\mu_*^2.\]
An analogous approximation holds for $\bff$ and $\bu$. 
This provides a connection to the compound decision problem above. EB-PCA
estimates $(\mu_*,\sigma_*^2)$ by estimating $s$, and applies
the Kiefer-Wolfowitz NPMLE to obtain an empirical Bayes estimate $\hat{\bv}$
for $\bv$.

\item {\bf Iterative refinement via AMP.} If this estimate $\hat{\bv}$ is more
accurate than the original sample PC $\bg$ for $\bv$, then we expect
$\bY\hat{\bv}$ to be more accurate than $\bY\bg \propto \bff$ for $\bu$. This
suggests that empirical Bayes denoising should be applied to $\bY\hat{\bv}$
instead of $\bff$ to estimate $\bu$, and leads to an iterative
idea \cite{wang2018empirical} of initializing $\bg^0=\bg$ and computing
\begin{align}
\begin{aligned}
\bv^t&=\theta(\bg^t \mid \mu_t,\sigma_t^2,\pi_t),\qquad
\bff^t=\bY \bv^t,\\
\bu^t&=\theta(\bff^t \mid \bar{\mu}_t,\bar{\sigma}_t^2,\bar{\pi}_t),\qquad
\bg^{t+1}=\bY^\top \bu^t.
\end{aligned}
\label{eq:naiveMF}
\end{align}
Here, $\pi_t,\bar{\pi}_t$ are nonparametrically estimated priors
and $\mu_t,\sigma_t^2,\bar{\mu}_t,\bar{\sigma}_t^2$ are scalar parameters
in each iteration. In the first iteration, $\bv^0=\hat{\bv}$ is the
above empirical Bayes estimate of $\bv$.

Unfortunately, this procedure does not
ensure that $(\bff^t,\bg^t)$ have approximate entrywise Gaussian
laws after this first iteration, breaking the connection to the compound 
decision problem in subsequent iterations. EB-PCA applies instead an
AMP algorithm as developed
in \cite{rangan2012iterative,montanari2017estimation},
\begin{align*}
\bv^t&=\theta(\bg^t \mid \mu_t,\sigma_t^2,\pi_t), \qquad
\bff^t=\bY \bv^t-b_t\bu^{t-1},\\
\bu^t&=\theta(\bff^t \mid \bar{\mu}_t,\bar{\sigma}_t^2,\bar{\pi}_t), \qquad
\bg^{t+1}=\bY^\top \bu^t-\bar{b}_t\bv^t.
\end{align*}
The \emph{Onsager corrections} $b_t\bu^{t-1}$ and $\bar{b}_t\bv^t$ are defined
so as to remove a bias of $(\bff^t,\bg^t)$ in the directions of
$(\bu^{t-1},\bv^t)$ and restore the entrywise Gaussian approximations.
\end{enumerate}

EB-PCA is most effective when there is strong prior structure for the true PCs.
We described the rank-one model of \eqref{eq:rankonemodel} for clarity, but in
many examples including Figure \ref{fig:1000genomesintro}, there is
stronger structure jointly over several PCs. In these examples, we
learn a joint prior in $k>1$ dimensions, where $k$ is the number of PCs to 
be simultaneously estimated. The result of
Figure \ref{fig:1000genomesintro}(c) is obtained by joint empirical Bayes
estimation for all $k=4$ depicted PCs, rather than estimating each PC
individually. We describe the method in more detail in Section~\ref{sec:AMP}
and present theoretical guarantees in Section~\ref{sec:theory}.

This application of empirical Bayes methodology to PCA via an iterative
algorithm is closely related to earlier and
inspirational work by \cite{wang2018empirical}, who proposed
an empirical Bayes matrix factorization (EBMF) method that yields
the iterations of \eqref{eq:naiveMF}. EBMF is
derived from a ``naive mean-field'' variational approximation to the 
posterior distribution of $(\bu,\bv)$, and we discuss further in
Section~\ref{sec:EBMF}
the relation between EB-PCA and this naive mean-field approach.

\subsection{Related literature}

The possible inconsistency of PCA in high dimensions has been discussed in
\cite{lu2002sparse,johnstone2009consistency,johnstone2018pca}, and
improving PCA using prior structure has been a long-standing goal.
A large body of literature has notably studied sparse PCA methods, which
improve over PCA under sparsity assumptions
\cite{cadima1995loading,jolliffe2003modified,d2005direct,zou2006sparse,amini2008high,birnbaum2013minimax,cai2013sparse,fan2013large,ma2013sparse,vu2013fantope}.
Figure \ref{fig:1000genomesintro} illustrates an example
where the PCs indeed have strong prior structure, but this structure
is not well-characterized by entrywise
sparsity. We believe that such examples may be common across scientific applications, and
this forms the primary motivation for our work.

EB-PCA is complementary to spectral shrinkage methods
that preserve the sample PCs but shrink or truncate the singular values
\cite{cai2010singular,ledoit2012nonlinear,shabalin2013reconstruction,nadakuditi2014optshrink,chatterjee2015matrix,gavish2017optimal}.
These methods have been motivated in part by a perspective that, in the absence 
of prior structural knowledge about the PCs, ``...it is reasonable to
require that covariance matrix estimators be rotation-equivariant [and have]
the same eigenvectors as the sample covariance matrix''
\cite{ledoit2012nonlinear}. Our work stands contrary to this perspective,
illustrating that empirical Bayes ideas can substantially improve over such
equivariant procedures even without knowledge of prior structure, as long as
some structure is present.

EB-PCA is an empirical Bayes implementation of the multivariate
Bayes AMP algorithm described by \cite{montanari2017estimation}.
AMP algorithms were first developed for CDMA, compressed sensing, and
generalized linear model applications by~\cite{kabashima2003cdma,donoho2009message,rangan2011generalized}.
Empirical Bayes versions of AMP for compressed sensing and GLMs were studied by
\cite{vila2011expectation,krzakala2012statistical,vila2013expectation,kamilov2014approximate},
in univariate and parametric contexts that are different from the
nonparametric perspective of our work.

AMP algorithms for PCA have been studied in a line of work including
\cite{rangan2012iterative,matsushita2013low,deshpande2014information,montanari2015non,lesieur2015phase,kabashima2016phase,deshpande2017asymptotic}.
These algorithms originally required an informative initialization
independent of $\bY$, and \cite{montanari2017estimation} provided the 
practical extension of initializing at the sample PCs.
A related line of work
\cite{lesieur2015mmse,barbier2016mutual,miolane2017fundamental,alaoui2018estimation,lelarge2019fundamental,barbier2019adaptive} has explored more generally the
limits of low-rank matrix estimation with Bayesian priors. In particular,
\cite{deshpande2014information,barbier2016mutual,deshpande2017asymptotic}
showed in various Bayesian rank-one models that AMP algorithms can achieve the
asymptotically optimal squared-error Bayes risk, which has been characterized in
\cite{lesieur2015mmse,barbier2016mutual,miolane2017fundamental,lelarge2019fundamental}.
A second motivation for our work is to bring this important body of
statistical theory a step closer to statistical practice. Our results imply
that AMP algorithms can achieve Bayes-optimal estimation even without
knowledge of the true priors.

The initial step of EB-PCA relies on quantitative understanding of spectral
behavior in spiked random matrix models
\cite{johnstone2001distribution,baik2005phase,baik2006eigenvalues,bai2008central}. 
We assume in this work a 
Gaussian model, where the error of the sample singular vectors was
first studied in \cite{paul2007asymptotics,nadler2008finite}. Such
results have been extended to non-Gaussian settings in
\cite{capitaine2011free,knowles2013isotropic,knowles2014outliers,bloemendal2016principal,capitaine2018limiting,ding2020high},
models with non-white noise in
\cite{mestre2008asymptotic,benaych2011eigenvalues,benaych2012singular,bai2012sample},
and more general asymptotic regimes in
\cite{jung2009pca,shen2013consistency,wang2017asymptotics}.
Related distributional properties of singular vectors were
recently studied in \cite{capitaine2018non,bao2018singular,bao2020statistical}.

The Kiefer-Wolfowitz NPMLE was proposed in
\cite{robbins1950generalization,kiefer1956consistency}.
Identifiability, existence and uniqueness, asymptotic consistency, and
discreteness of the support were studied in
\cite{kiefer1956consistency,simar1976maximum,laird1978nonparametric,jewell1982mixtures,lindsay1983geometryI,lindsay1983geometryII,lindsay1993uniqueness}, and a detailed treatment of these topics is provided in \cite{lindsay1995mixture}.
\cite{ghosal2001entropies,zhang2009generalized,jiang2009general,saha2020nonparametric}
studied the rate of convergence of the NPMLE and associated empirical Bayes
estimator, and our analyses draw on their techniques.
Recently, \cite{polyanskiy2020self} showed that these results on estimation
rates are connected also to the size of the discrete NPMLE support.
Computing and approximating the NPMLE has been discussed in
\cite{bohning1992computer,bohning1999computer,lashkari2008convex,koenker2014convex,feng2016approximate}.

\section{The EB-PCA method}\label{sec:method}

\subsection{Model}\label{sec:model}

The EB-PCA algorithm is derived in
the following rank-$k$ version of the model in \eqref{eq:rankonemodel},
\begin{equation}\label{eq:model}
\bY=\frac{1}{n} \cdot \bU{S}\bV^\top+\bW
=\sum_{i=1}^k \frac{s_i}{n}\cdot \bu_i\bv_i^\top+\bW \in \bbR^{n \times d}.
\end{equation}
The columns of $\bU=(\bu_1,\ldots,\bu_k) \in \bbR^{n \times k}$ and
$\bV=(\bv_1,\ldots,\bv_k) \in \bbR^{d \times k}$
are $k$ left and right principal components of interest, and
${S}=\diag(s_1,\ldots,s_k) \in \bbR^{k \times k}$
contains the signal strengths of these PCs.
$\bW \in \bbR^{n \times d}$ is observational noise,
which we assume has entries $w_{ij} \overset{iid}{\sim} \cN(0,1/n)$.

\begin{remark}\label{remark:scaling}
We write the noise variance of $w_{ij}$ for convenience as $1/n$, rather than a
more general $\tau^2/n$, to avoid carrying $\tau^2$ throughout our
formulas. This is without loss of generality, as $\bY,S,\bW$ may be rescaled by
a common factor $\tau$. In practice, we estimate this noise
variance and rescale the data to match this scaling.
Given $\bY_{\textrm{obs}}\in \mathbb{R}^{n\times d}$, we may estimate its
entrywise residual variance upon regressing out its top $k$ PCs,
\begin{equation}\label{eq:tausq}
\hat{\tau}^2=d^{-1}\cdot \|\mathbf{R}\|_F^2, \qquad
\mathbf{R}=\bY_{\textrm{obs}}-\textrm{top $k$ PCs of } \bY_{\textrm{obs}}.
\end{equation}
We then set $\bY=\bY_{\textrm{obs}}/\hat{\tau}$. Consistency of
$\hat{\tau}^2$ is discussed in Appendix \ref{sec:normalization}.
\end{remark}

We study this model in the high-dimensional limit $n,d \to \infty$ such
that $k$ and $\gamma \equiv d/n$
are both fixed. It is helpful to keep in mind a Bayesian
setting where the rows of $\bU$ and $\bV$ are generated according to two fixed
prior probability distributions $\bar{\pi}_*$ and $\pi_*$ on $\bbR^k$ (although we
will only require empirical convergence to these priors in the later theory).
The goal of EB-PCA is then to estimate these two priors from the data
$\bY$, and to use these estimated priors to perform Bayesian estimation of
$\bU$ and $\bV$.

To fix the scaling of the PCs, we normalize $\bar{\pi}_*$ and $\pi_*$ to satisfy
$\bbE_{U \sim \bar{\pi}_*}[U_i^2]=1$ and
$\bbE_{V \sim \pi_*}[V_i^2]=1$ for all $i=1,\ldots,k$. This ensures
\begin{equation}\label{eq:normalization}
n^{-1}\|\bu_i\|_2^2 \to 1, \qquad d^{-1}\|\bv_i\|_2^2 \to 1.
\end{equation}
We will also assume
$\bbE_{U \sim \bar{\pi}_*}[U_iU_j]=0$ and $\bbE_{V \sim \pi_*}[V_iV_j]=0$ for
all $1 \leq i \neq j \leq k$, so that
\begin{equation}\label{eq:orthogonality}
n^{-1}\bu_i^\top \bu_j \to 0, \qquad d^{-1}\bv_i^\top \bv_j \to 0,
\end{equation}
lending to the interpretations of $\bu_i$ and $\bv_i$ as the (orthogonal)
principal components. Under these scalings, the $k$ singular values
of $n^{-1}\bU{S}\bV^\top$ converge to the limits
$\sqrt{\gamma} s_1>\ldots>\sqrt{\gamma} s_k$
and we make the simplifying assumption that these limit values are distinct.
Note that we will not enforce the orthogonality conditions
$\bbE_{U \sim \bar{\pi}}[U_iU_j]=0$ and $\bbE_{V \sim \pi}[V_iV_j]=0$ for
the estimated priors in the later algorithm, but approximate orthogonality will
automatically hold from initializing the algorithm at the sample PCs.

Turning to the sample PCs, let us write the best rank-$k$ approximation
for $\bY$ as
\[\frac{1}{n}\cdot
\bF{\Lambda}\bG^\top=\sum_{i=1}^k \frac{\lambda_i}{n} \cdot \bff_i\bg_i^\top.\]
Here, the columns of $\bF=(\bff_1,\ldots,\bff_k) \in \bbR^{n \times k}$ and
$\bG=(\bg_1,\ldots,\bg_k) \in \bbR^{d \times k}$ are the top $k$ left and right
singular vectors of $\bY$, normalized analogously with a sign convention so that
for all $1\leq i \neq j \leq k$,
\begin{equation}
d^{-1}\|\bg_i\|^2=n^{-1}\|\bff_i\|^2=1,\qquad
\bu_i^\top \bff_i\geq 0,\quad\bv_i^\top\bg_i \geq 0,\qquad
d^{-1}\bg_i^\top \bg_j=n^{-1}\bff_i^\top \bff_j=0.
\end{equation}
We set ${\Lambda}=\diag(\lambda_1,\ldots,\lambda_k)$.
Then the $k$ largest singular values of $\bY$ are given by
$\sqrt{\gamma}\lambda_1 \geq \ldots \geq \sqrt{\gamma}\lambda_k$.

Under this model, the following phase transition occurs for
the leading $k$ sample singular values and singular vectors of $\bY$
\cite{baik2005phase,paul2007asymptotics,benaych2012singular}: setting $s_*(\gamma)=\gamma^{-1/4}$,
for \emph{super-critical} PCs such that $s_i>s_*(\gamma)$, we have
\[\lim_{n,d \to \infty} \sqrt{\gamma} \cdot \lambda_i>\lambda_+,
\qquad \lim_{n,d \to \infty}
n^{-1}\bff_i^\top \bu_i>0, \qquad \lim_{n,d \to \infty} d^{-1}\bg_i^\top \bv_i>0\]
where $\lambda_+=1+\sqrt{\gamma}$ is the upper edge of the ``bulk distribution''
of the noise singular values. Conversely,
for \emph{sub-critical} PCs such that $s_i \leq s_*(\gamma)$,
\[\lim_{n,d \to \infty} \sqrt{\gamma} \cdot \lambda_i=\lambda_+,
\quad \lim_{n,d \to \infty} n^{-1}\bff_i^\top \bu_i=0, \quad \lim_{n,d \to
\infty} d^{-1}\bg_i^\top \bv_i=0.\]
Thus the $i^\text{th}$ sample singular value is absorbed into the bulk,
and the sample PCs are nearly orthogonal to the true PCs. For notational and
expositional clarity, we will assume
\[s_i>s_*(\gamma) \quad \text{ for all } \quad i=1,\ldots,k,\]
i.e.\ all $k$ of the leading PCs are super-critical. Our theoretical results
may be extended to more general settings having both super-critical and
sub-critical PCs, where EB-PCA is applied only to the super-critical PCs that 
have positive alignment with the truth.

\begin{remark}\label{remark:spikedcovariance}
If the rows of $\bU$ are drawn from
$\bar{\pi}_*=\cN(0,\Id_{k \times k})$, then the rows of $\sqrt{n} \cdot \bY$
marginalized over $\bU$ are distributed as $\cN(0,\Sigma)$ where
\[\Sigma=\sum_{i=1}^k s_i^2 \cdot \frac{\bv_i\bv_i^\top}{n}+\Id_{d \times d}.\]
Thus $\bY^\top \bY$ follows the spiked covariance model introduced in
\cite{johnstone2001distribution}, and our results pertain also to 
estimating the spike eigenvectors of $\Sigma$.
In this model, it would be reasonable to consider a version
of EB-PCA that fixes $\bar{\pi}_*=\cN(0,\Id)$, only estimates $\pi_*$, and
performs Bayesian estimation of $\bV$ but not of $\bU$. We will focus instead
on the more general scenario where both $\bU$ and $\bV$ may have non-Gaussian
structure, and describe EB-PCA for estimating both matrices.
\end{remark}

\subsection{Empirical Bayes for the multivariate compound decision
problem}\label{sec:EB}

Let $\pi_*$ be a probability distribution on $\bbR^k$. For two
given matrices $M,\Sigma \in \bbR^{k \times k}$ where $\Sigma$ is
symmetric positive-definite, consider the compound decision model
\begin{equation}\label{eq:compoundmodel}
\Theta \sim \pi_*, \qquad X \mid \Theta \sim \cN(M \cdot \Theta,\;\Sigma)
\end{equation}
for $\Theta,X \in \bbR^k$. We will denote the Bayes posterior mean estimate
of $\Theta$ based on $X$ as
\begin{equation}\label{eq:multivariatedenoise}
\theta(X \mid M,\Sigma,\pi_*)=\bbE_{\pi_*}[\Theta \mid X].
\end{equation}

Suppose now that $\pi_*$ is unknown, but belongs to a known class of probability
distributions $\cP$ over $\bbR^k$. In a model of $n$ i.i.d.\ samples
$x_1,\ldots,x_n$ distributed according to \eqref{eq:compoundmodel}, stacked as
the rows of a matrix $\bX \in \bbR^{n \times k}$, consider the maximum
likelihood estimator
\begin{align}
\pi&=\MLE(\bX \mid M,\Sigma,\cP)\label{eq:NPMLE}\\
&\equiv \argmax_{\pi \in \cP} \prod_{i=1}^n
\int \frac{1}{(2\pi)^{k/2}|\Sigma|^{1/2}}\cdot
\exp\left(-\frac{(x_i-M \cdot \theta_i)^\top \Sigma^{-1}
(x_i-M \cdot \theta_i)}{2}\right) \der\pi(\theta_i).\notag
\end{align}
This integral is the marginal Gaussian mixture density of $x_i$ in the
model of \eqref{eq:compoundmodel}, and the notation makes explicit the dependence
of $\pi$ on the prior class $\cP$. We will be interested primarily
in nonparametric classes $\cP$, and
$\pi$ is a nonparametric maximum likelihood estimate (NPMLE) for $\pi_*$. 
In our implementation, we take 
$\cP$ as the class of all probability distributions
on $\bbR^k$,
and approximate this class $\cP$ using a discrete support by applying the ``exemplar method'' of \cite{lashkari2008convex}. See Appendix~\ref{sec:data} for details.

Stacking $\theta_1,\ldots,\theta_n$ as the rows of $\bTheta \in
\bbR^{n \times k}$, the model for $\bX$ may be written as
\begin{equation}\label{eq:compoundmodelmatrix}
\bX=\bTheta M^\top+\bZ\Sigma^{1/2}, \quad \bZ \in \bbR^{n \times k}
\text{ has i.i.d.\ } \cN(0,1) \text{ entries.}
\end{equation}
The NPMLE $\pi$ defines an empirical Bayes estimate of $\bTheta$,
which applies the posterior mean function $\theta(\cdot)$ for the
estimated prior $\pi$ row-wise to $\bX$. We denote this by
\begin{equation*}
\theta(\bX \mid M,\Sigma,\pi)=\bbE_\pi[\bTheta \mid \bX].
\end{equation*}

\subsection{Initial denoising of the sample PCs}\label{sec:initdenoise}

In the model of \eqref{eq:model}, as $n,d \to \infty$,
the precise forms of the limits of the super-critical singular values
$\sqrt{\gamma} \cdot \lambda_i$ and corresponding
PCs $\bff_i,\bg_i$ of $\bY$ are given by
\begin{gather}
\sqrt{\gamma} \cdot \lambda_i \to \sqrt{(\gamma s_i^2+1)
(s_i^2+1)/s_i^2},\notag\\
n^{-1} \bff_i^\top \bu_i \to \bar{\mu}_{*,i}
\equiv \sqrt{1-\bar{\sigma}_{*,i}^2},
\qquad
d^{-1} \bg_i^\top \bv_i \to \mu_{*,i} \equiv \sqrt{1-\sigma_{*,i}^2},\notag \\
\bar{\sigma}_{*,i}^2=\frac{1+s_i^2}{s_i^2 (\gamma s_i^2+1)},
\qquad \sigma_{*,i}^2=\frac{1+\gamma s_i^2}{\gamma s_i^2 (s_i^2+1)}. \label{eq:PCAsigma}
\end{gather}
(See Lemma \ref{lemma:BBP}.) Setting
\begin{align}
\bar{M}_*&=\diag(\bar{\mu}_{*,1},\ldots,\bar{\mu}_{*,k}),\qquad
M_*=\diag(\mu_{*,1},\ldots,\mu_{*,k}),\label{eq:PCAM}\\
\bar{\Sigma}_*&=\diag(\bar{\sigma}_{*,1}^2,\ldots,\bar{\sigma}_{*,k}^2),\qquad
\Sigma_*=\diag(\sigma_{*,1}^2,\ldots,\sigma_{*,k}^2),\label{eq:PCASigma}
\end{align}
a consequence is that $\bF \in \bbR^{n \times k}$ and
$\bG \in \bbR^{d \times k}$ have the Gaussian approximations
\begin{equation}\label{eq:PCAgaussianapprox}
\bF \approx \bU\bar{M}_*^\top+\bar{\bZ}\bar{\Sigma}_*^{1/2},
\qquad \bG \approx \bV M_*^\top+\bZ\Sigma_*^{1/2}
\end{equation}
for large $n$ and $d$, where $\bar{\bZ},\bZ \in \bbR^{n \times k}$ have
i.i.d.\ $\cN(0,1)$ entries.
This relates the behavior of the sample PCs
$\bF$ and $\bG$ to the multivariate compound decision
model in \eqref{eq:compoundmodelmatrix}.

As the true matrices $\bar{M}_*,M_*,\bar{\Sigma}_*,\Sigma_*$ are unknown,
we replace them by consistent estimates to derive empirical Bayes estimators for
$\bU$ and $\bV$:
observe that \eqref{eq:PCAsigma} implies each value $s_i^2$ may 
be consistently estimated by
\begin{equation}\label{eq:shat}
\hat{s}_i^2=\big(\gamma \lambda_i^2-(1+\gamma)
+\sqrt{(\gamma \lambda_i^2-(1+\gamma))^2-4\gamma}\big)/(2\gamma).
\end{equation}
These may be used to obtain plug-in estimators
$\bar{M},M,\bar{\Sigma},\Sigma$ for
$\bar{M}_*,M_*,\bar{\Sigma}_*,\Sigma_*$, which substitute
$\hat{s}_i$ for $s_i$ in \eqref{eq:PCAsigma}. This yields
the initial empirical Bayes estimates of $\bU$ and $\bV$ given by
\begin{align}
\bar{\pi}&=\MLE(\bF \mid \bar{M},\bar{\Sigma},\cP), \qquad
\hat{\bU}=\theta(\bF \mid \bar{M},\bar{\Sigma},\bar{\pi}),\label{eq:initU}\\
\pi&=\MLE(\bG \mid M,\Sigma,\cP), \qquad
\hat{\bV}=\theta(\bG \mid M,\Sigma,\pi).\label{eq:initV}
\end{align}

\subsection{Iterative refinement using AMP}\label{sec:AMP}

We now describe iterative refinement using an AMP algorithm, as discussed also
in Appendix J of \cite{montanari2017estimation}.
This may begin with either the estimate for $\bU$ or $\bV$---here, we
begin with $\bV$.

The algorithm takes the following form: let
$u_1,u_2,\ldots:\bbR^k \to \bbR^k$ and $v_1,v_2,\ldots:\bbR^k \to \bbR^k$ be two
arbitrary sequences of Lipschitz functions. Initialize $\bG^0=\bG$ as the
right sample PCs, and compute for $t=0,1,2,\ldots$
\begin{align}
\begin{aligned}
\bV^t&=v_t(\bG^t)\quad
&\bF^t&=\bY \bV^t-\bU^{t-1} \cdot \gamma \langle \der v_t(\bG^t) \rangle^\top\\
\bU^t&=u_t(\bF^t)\quad
&\bG^{t+1}&=\bY^\top\bU^t-\bV^t \cdot \langle \der u_t(\bF^t) \rangle^\top 
\end{aligned}
\label{eq:bayes_amp}
\end{align}
Here $u_t(\bF^t) \in \bbR^{n \times k}$ and $v_t(\bG^t) \in \bbR^{d \times k}$
denote the applications of $u_t$ and $v_t$ row-wise to $\bF^t$ and
$\bG^t$, $\der u_t:\bbR^k \to \bbR^{k \times k}$
and $\der v_t:\bbR^k \to \bbR^{k \times k}$ denote Jacobian
matrices of these functions, and
$\langle \der u_t(\bF^t) \rangle \in \bbR^{k \times k}$ and
$\langle \der v_t(\bG^t) \rangle \in \bbR^{k \times k}$
are the averages of $\der u_t$ and $\der v_t$ across the rows
of $\bF^t$ and $\bG^t$.

Under the model of \eqref{eq:model},
Gaussian approximations analogous to \eqref{eq:PCAgaussianapprox}
continue to hold for $\bF^t$ and $\bG^t$ across iterations, where
\begin{equation}\label{eq:AMPgaussianapprox}
\bF^t \approx \bU\bar{M}_{*,t}^\top+\bar{\bZ}\bar{\Sigma}_{*,t}^{1/2},
\qquad \bG^t \approx \bV M_{*,t}^\top+\bZ\Sigma_{*,t}^{1/2}.
\end{equation}
Here $\bar{M}_{*,t},M_{*,t},\bar{\Sigma}_{*,t},\Sigma_{*,t}$ are deterministic
matrices that prescribe the parameters of the compound decision model
associated to each iteration. In contrast to the initial state of
\eqref{eq:PCAM}--\eqref{eq:PCASigma}, these matrices are no longer diagonal in
later iterations, if the prior is a general multivariate distribution on
$\mathbb{R}^k$. They evolve over iterations according to a \emph{state evolution}
\[(M_{*,0},\Sigma_{*,0}) \mapsto (\bar{M}_{*,0},\bar{\Sigma}_{*,0}) \mapsto
(M_{*,1},\Sigma_{*,1}) \mapsto (\bar{M}_{*,1},\bar{\Sigma}_{*,1}) \mapsto \cdots\]
given by the initializations $(M_{*,0},\Sigma_{*,0}) \equiv (M_*,\Sigma_*)$
describing the sample PCs
$\bG^0 \equiv \bG$ in \eqref{eq:PCAM}--\eqref{eq:PCASigma}, and by the updates
\begin{align}
\begin{aligned}
\bar{M}_{*,t}=\gamma \bbE[v_t(G_t)V^\top] {S},& \qquad
\bar{\Sigma}_{*,t}=\gamma \bbE[v_t(G_t)v_t(G_t)^\top],\\
M_{*,t+1}=\bbE[u_t(F_t)U^\top] {S},& \qquad
\Sigma_{*,t+1}=\bbE[u_t(F_t)u_t(F_t)^\top].
\end{aligned}\label{eq:SE}
\end{align}
${S}=\diag(s_1,\ldots,s_k)$ is the diagonal matrix of signal strengths in 
\eqref{eq:model}, and the expectations are over the random vectors
\begin{equation*}
\begin{aligned}
U \sim \bar{\pi}_*, \quad
F_t \mid U \sim \cN(\bar{M}_{*,t} \cdot U,\;\bar{\Sigma}_{*,t}), \qquad
V \sim \pi_*,\quad
G_t \mid V \sim \cN(M_{*,t} \cdot V,\;\Sigma_{*,t}).
\end{aligned}
\end{equation*}
These laws of $F_t$ and $G_t$ approximate the row-wise distributions of $\bF^t$
and $\bG^t$.

If $\bar{\pi}_*,\pi_*$ and
$\bar{M}_{*,t},M_{*,t},\bar{\Sigma}_{*,t},\Sigma_{*,t}$ are all \emph{known},
then applying this algorithm with the Bayes posterior mean functions
\begin{equation}\label{eq:bayes_amp_denoisors}
u_t(\bX)=\theta(\bX \mid \bar{M}_{*,t},\bar{\Sigma}_{*,t},\bar{\pi}_*),\qquad
v_t(\bX)=\theta(\bX \mid M_{*,t},\Sigma_{*,t},\pi_*)
\end{equation}
implements an iterative variational
Bayesian inference scheme \cite{montanari2017estimation}.
We will call this the ``oracle'' Bayes AMP
algorithm. For these $u_t,v_t$, we have the identities
$\bbE[u_t(F_t)U^\top]=\bbE[u_t(F_t)u_t(F_t)^\top]$
and $\bbE[v_t(G_t)V^\top]=\bbE[v_t(G_t)v_t(G_t)^\top]$, so
\eqref{eq:SE} yields
\begin{equation}\label{eq:SigmaMrelation}
\bar{M}_{*,t}=\bar{\Sigma}_{*,t} \cdot {S},
\qquad M_{*,t+1}=\Sigma_{*,t+1} \cdot {S}.
\end{equation}
EB-PCA uses the posterior mean functions defined instead by NPMLEs
of $\bar{\pi}_*$ and $\pi_*$, together with the empirical estimates
$\bar{\Sigma}_t=n^{-1}(\bV^t)^\top \bV^t$,
$\bar{M}_t=\bar{\Sigma}_t\hat{{S}}$,
$\Sigma_{t+1}=n^{-1}(\bU^t)^\top \bU^t$, and
$M_{t+1}=\Sigma_{t+1}\hat{{S}}$
where $\hat{{S}}=\diag(\hat{s}_1,\ldots,\hat{s}_k)$ is the estimate of $S$
from \eqref{eq:shat}. (For $\bar{\Sigma}_t$, we have applied $\gamma/d=1/n$.)
These empirical estimates avoid the need to perform
Gaussian integrations to analytically evaluate the expectations that define
the true matrices $\bar{M}_{*,t},M_{*,t},\bar{\Sigma}_{*,t},\Sigma_{*,t}$.
EB-PCA is initialized at the right sample PCs $\bG^0=\bG$ and the plug-in
estimates $(M_0,\Sigma_0) \equiv (M,\Sigma)$ from the preceding section.
In particular, $\bV^0$ is the initial empirical Bayes estimate for $\bV$ based
on $\bG$ as previously described.

We summarize the full EB-PCA method as
Algorithm \ref{alg:EBPCA}.

\begin{remark}\label{remark:iterNPMLE}
In Lines 6 and 10 of Algorithm \ref{alg:EBPCA}, we form new NPMLEs
for $\bar{\pi}_*$ and $\pi_*$ in each iteration. This
allows for the possibility of improving these estimates as the signal-to-noise
ratios reflected by the state parameters $(M_t,\Sigma_t)$ and
$(\bar{M}_t,\bar{\Sigma}_t)$ improve across iterations.
In data examples with strong signal, this re-estimation of $\bar{\pi}_*$
and $\pi_*$ may be unneeded, and removed to improve computational
efficiency. See Appendix~\ref{subsec:practical} for further discussion.
\end{remark}

\begin{remark}\label{remark:initialization}
For the state evolution to correctly describe the AMP iterates under
the PCA initialization $\bG^0=\bG$, the first AMP iteration for $\bF^0$ in Line
8 should use $\bU^{-1}=\bF \cdot {\Sigma_0^{1/2}}$ as initialized in Line 4,
rather than $\bU^{-1}=0$ as described in~\cite{montanari2017estimation}. We
elaborate on this in Appendix~\ref{appendix:oracleSE}.
\end{remark}

\begin{algorithm}[t]
\begin{algorithmic}[1]
\Require Data matrix $\bY \in \bbR^{n \times d}$, normalized as in Remark
\ref{remark:scaling} to have average entrywise noise variance $1/n$. Number of
PCs $k$, number of AMP iterations $T$, prior class $\cP$.
\Statex {\it // Initialization}
\State Let $\gamma=d/n$. Let $(\sqrt{\gamma} \lambda_1,\ldots,
\sqrt{\gamma}\lambda_k)$, $\bF=(\bff_1,\ldots,\bff_k)$, and
$\bG=(\bg_1,\ldots,\bg_k)$ be the top $k$ singular values and singular vectors
of $\bY$, with $\|\bff_i\|_2=\sqrt{n}$ and $\|\bg_i\|_2=\sqrt{d}$.
\State Define $\hat{s}_i^2$ by \eqref{eq:shat}. Set
$\sigma_i^2=(1+\gamma \hat{s}_i^2)/(\gamma \hat{s}_i^4
+\gamma \hat{s}_i^2)$, $\mu_i^2=1-\sigma_i^2$,
$M_0=\diag(\mu_1,\ldots,\mu_k)$,
$\Sigma_0=\diag(\sigma_1^2,\ldots,\sigma_k^2)$,
$\hat{{S}}=\diag(\hat{s}_1,\ldots,\hat{s}_k)$.
\State $\bG^0 \gets \bG$ and $\bU^{-1} \gets \bF \cdot \Sigma_0^{1/2}$
\Statex {\it // Iterative refinement}
\For{$t=0,1,2,\ldots,T$}
\Statex \hspace{0.15in} {\it // Denoise left PCs}
\State $\pi_t \gets \MLE(\bG^t \mid M_t,\Sigma_t,\cP)$
\State $\bV^t \gets \theta(\bG^t \mid M_t,\Sigma_t,\pi_t)$
\State $\bF^t \gets \bY\bV^t-\bU^{t-1} \cdot \gamma \langle
\der \theta(\bG^t \mid M_t,\Sigma_t,\pi_t) \rangle^\top$
\State $\bar{\Sigma}_t \gets (\bV^t)^\top \bV^t/n$
and $\bar{M}_t \gets \bar{\Sigma}_t \cdot \hat{{S}}$
\Statex \hspace{0.15in} {\it // Denoise right PCs}
\State $\bar{\pi}_t \gets \MLE(\bF^t \mid \bar{M}_t,\bar{\Sigma}_t,\cP)$
\State $\bU^t \gets \theta(\bF^t \mid \bar{M}_t,\bar{\Sigma}_t,\bar{\pi}_t)$
\State $\bG^{t+1} \gets \bY^\top \bU^t-\bV^t \cdot \langle \der \theta(\bF^t \mid
\bar{M}_t,\bar{\Sigma}_t,\bar{\pi}_t) \rangle^\top$
\State $\Sigma_{t+1} \gets (\bU^t)^\top\bU^t/n$ and
$M_{t+1} \gets \Sigma_{t+1} \cdot \hat{{S}}$
\EndFor
\Ensure Final estimates $(\hat{\bU},\hat{S},\hat{\bV})
=(\bU^{T},\hat{S},\bV^{T})$
\end{algorithmic}
\caption{EB-PCA}\label{alg:EBPCA}
\end{algorithm}

\subsection{Relation to naive mean field variational Bayes}\label{sec:EBMF}

\cite{wang2018empirical} propose an empirical Bayes
matrix factorization (EBMF) algorithm similar to EB-PCA, based instead on
naive mean-field variational Bayes: in the
rank-one model of \eqref{eq:rankonemodel}, this approximates the
posterior law $p(\bu,\bv|\bY)$ by a factorized form
$\bar{q}(\bu)q(\bv)=\prod_{i=1}^n \bar{q}_i(u_i) \prod_{j=1}^d q_j(v_j)$.
The distributions $\bar{q}_i,q_j$ are chosen to minimize
the Kullback-Leibler divergence $\DKL(\bar{q}(\bu)q(\bv)
\|p(\bu,\bv|\bY))$ or, equivalently, to maximize the evidence lower bound
\[\mathcal{F}(\bar{q},q)
=\bbE_{\bu \sim \bar{q},\bv \sim q}\big[\log p(\bY,\bu,\bv)-
\log \bar{q}(\bu)q(\bv) \big].\]
Here, the joint density $p(\bY,\bu,\bv)$ depends on the priors
$(\bar{\pi},\pi)$ for $(\bu,\bv)$, and EBMF estimates
these by maximizing $\mathcal{F}$ jointly over $(\bar{q},\bar{\pi},q,\pi)$.
This maximization is performed via the iterative coordinate ascent
variational inference (CAVI) updates
\[(q_t,\pi_t)=\argmax_{q,\pi}
\mathcal{F}(\bar{q}_{t-1},\bar{\pi}_{t-1},q,\pi),\qquad
(\bar{q}_t,\bar{\pi}_t)=\argmax_{\bar{q},\bar{\pi}}
\mathcal{F}(\bar{q},\bar{\pi},q_t,\pi_t)\]

As shown in \cite{wang2018empirical}, the CAVI updates admit a simple form in
terms of the quantities
\[\bv^t \equiv \bbE_{\bv \sim q_t}[\bv], \quad
\bar{\sigma}_t^2 \equiv n^{-1}\bbE_{\bv \sim q_t}[\|\bv\|^2],
\quad \bu^t \equiv \bbE_{\bu \sim \bar{q}_t}[\bu], \quad
\sigma_{t+1}^2 \equiv n^{-1}\bbE_{\bu \sim \bar{q}_t}[\|\bu\|^2].\]
which is very similar to the iterations of Algorithm \ref{alg:EBPCA}
but does not incorporate the AMP Onsager correction terms.
The Onsager terms correct for weak dependences in the true
posterior distributions of $u_1,\ldots,u_n$ and of $v_1,\ldots,v_d$ in high
dimensions. These dependences are omitted in the naive mean field
approximation, and this is discussed further in
\cite{ghorbani2019instability,fan2018tap}. The differences between these
approaches vanish in the limit of infinite signal-to-noise ratio
$s \to \infty$, but are non-negligible in any bounded signal-to-noise setting,
and become more pronounced for weak signals near 
the phase transition threshold $s_*(\gamma)=\gamma^{-1/4}$. 

We believe there are two particular appeals of applying AMP over the naive mean
field approximation in this specific application: first, in the limit
$s \to \infty$, the sample PCs become increasingly accurate, and there is less
to gain from an empirical Bayes approach. It is precisely in settings of weaker 
signals that empirical Bayes may yield the largest improvements over PCA.
Second, even if the model of \eqref{eq:model} is correctly specified, the
connection between the CAVI iterates $\bff^t,\bg^t$ and the Gaussian
compound decision model is inexact---see Figures \ref{fig:iteratesuniform}
and \ref{fig:iteratestwopoint}---whereas the AMP iterates $\bff^t,\bg^t$
are exactly described by the Gaussian models as $n,d \to \infty$.
This provides a stronger justification for applying empirical Bayes procedures
based on these Gaussian models to the iterates of AMP. However, we note that
EBMF is developed in a more general model of heteroscedastic noise
$w_{ij} \sim \cN(0,1/\tau_{ij})$, whereas our current derivation and analysis
of EB-PCA are limited to a setting of uniform noise variance.

There are a few other distinctions in perspective between
EBMF and our work: for the rank-$k$ model,
\cite{wang2018empirical} propose a CAVI scheme that iteratively updates each
rank-one component separately, and hence does not learn a joint prior for the
multivariate distribution of several PCs. For each rank-one component, there is
a stronger emphasis in \cite{wang2018empirical} on sparsity-inducing priors
that are unimodal at 0, connecting the approach more to sparse PCA.
Depending on the characteristics of the data at hand, we believe that a
multivariate approach of learning a fully nonparametric joint prior
for several PCs has the potential of yielding improved accuracy. 

\section{Simulated examples}\label{sec:simulations}

\subsection{Univariate priors}

\begin{figure}[t]
\centering
\minipage{0.49\columnwidth}
\xincludegraphics[width=0.49\textwidth,label=(a)]{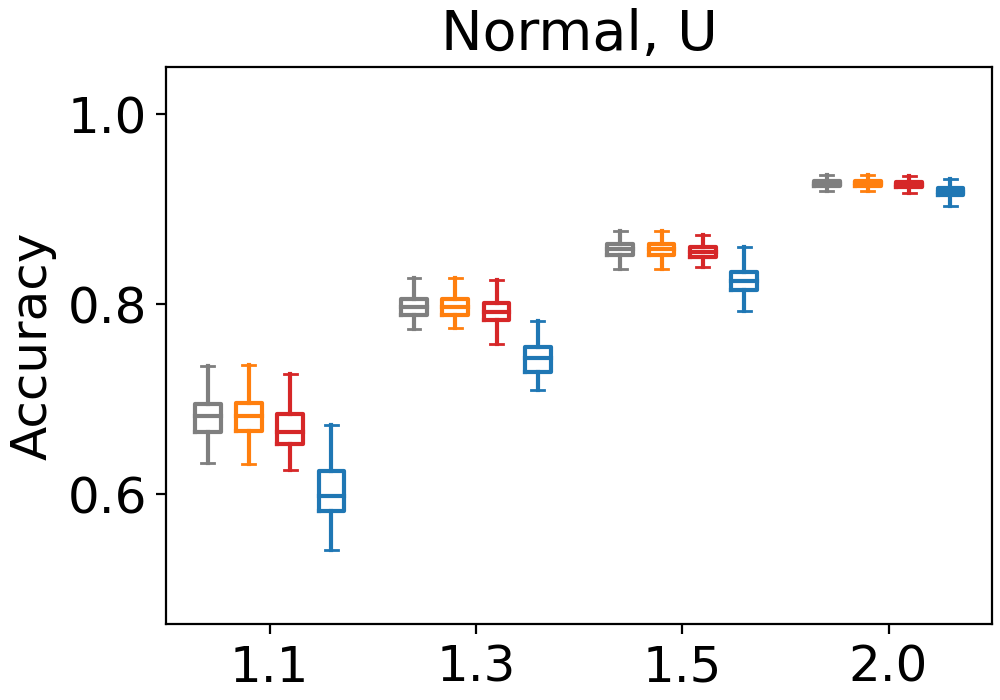}
\xincludegraphics[width=0.49\textwidth]{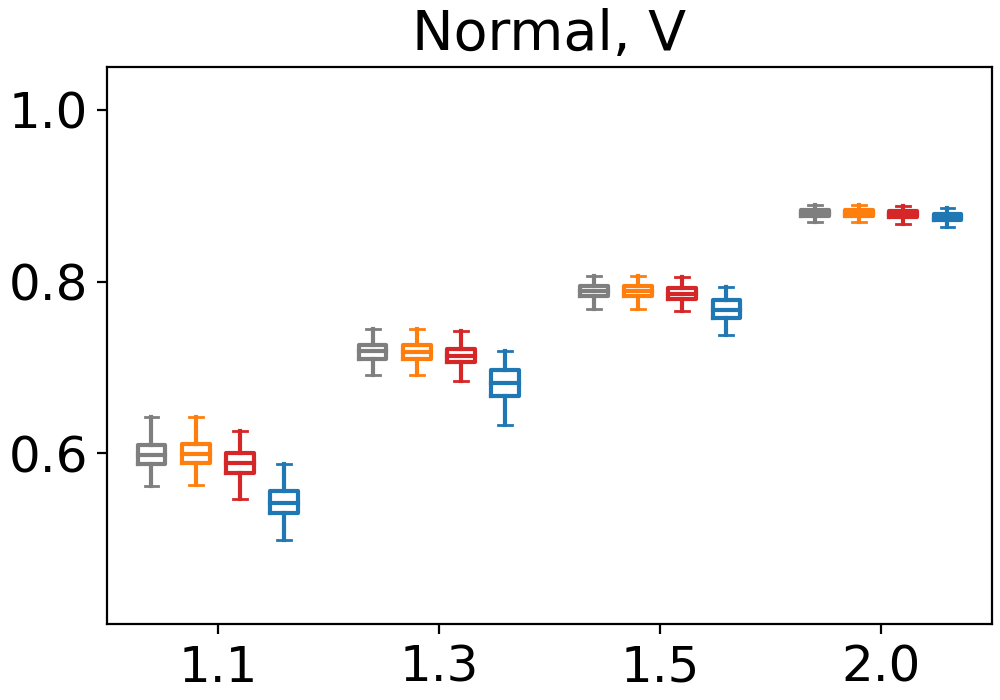}
\endminipage%\\\\
\minipage{0.49\columnwidth}
\xincludegraphics[width=0.49\textwidth,label=(b)]{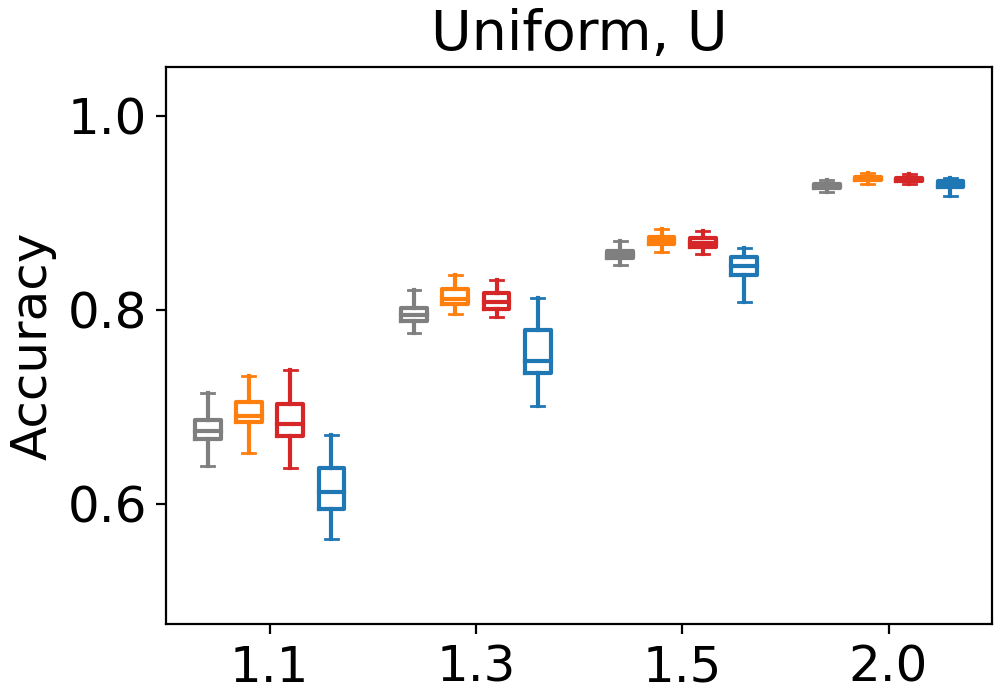}
\xincludegraphics[width=0.49\textwidth]{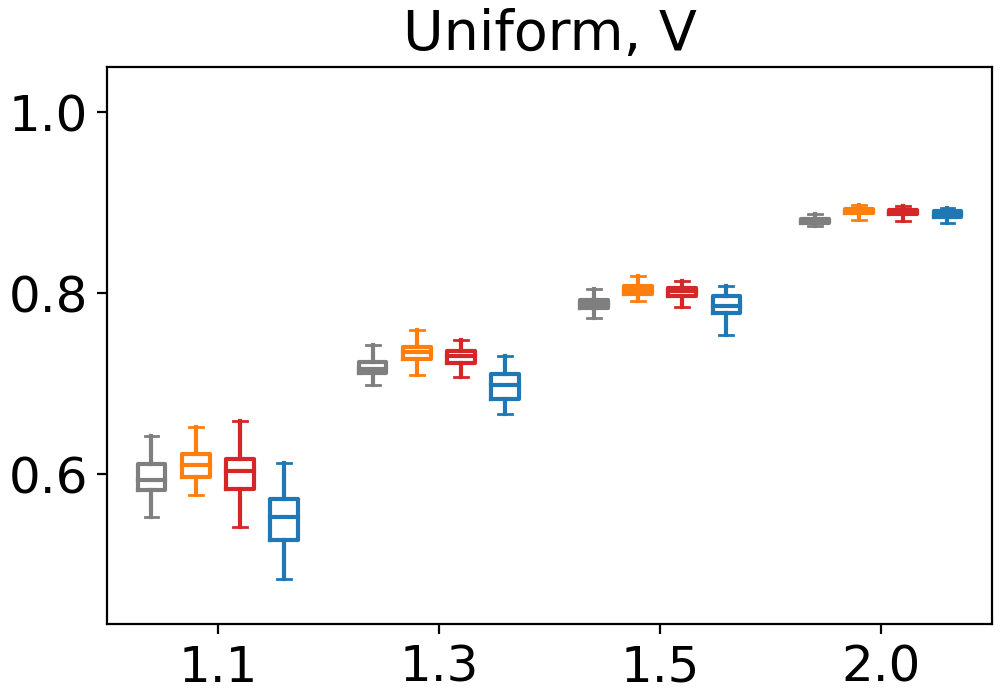}
\endminipage\\
\minipage{0.49\columnwidth}
\xincludegraphics[width=0.49\textwidth,label=(c)]{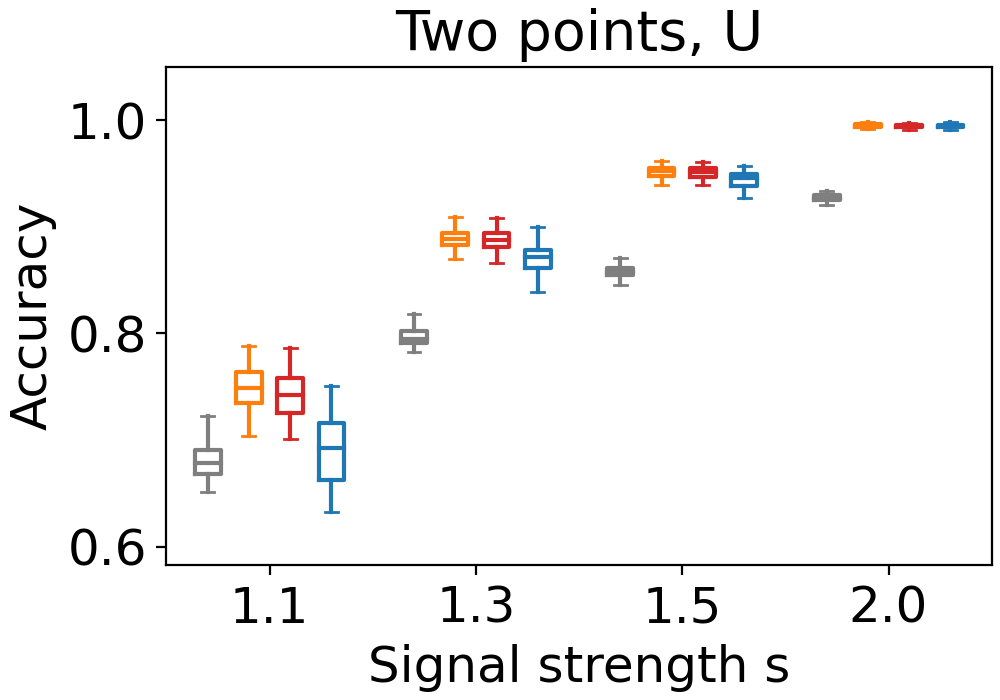}
\xincludegraphics[width=0.49\textwidth]{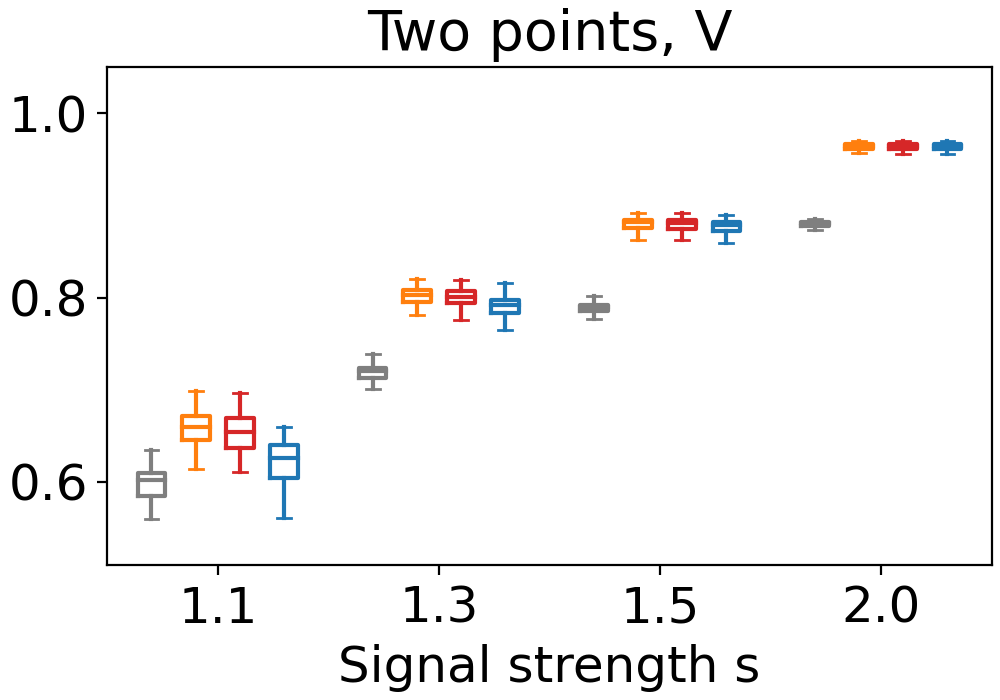}
\endminipage%\\\\
\minipage{0.49\columnwidth}
\xincludegraphics[width=0.49\textwidth,label=(d)]{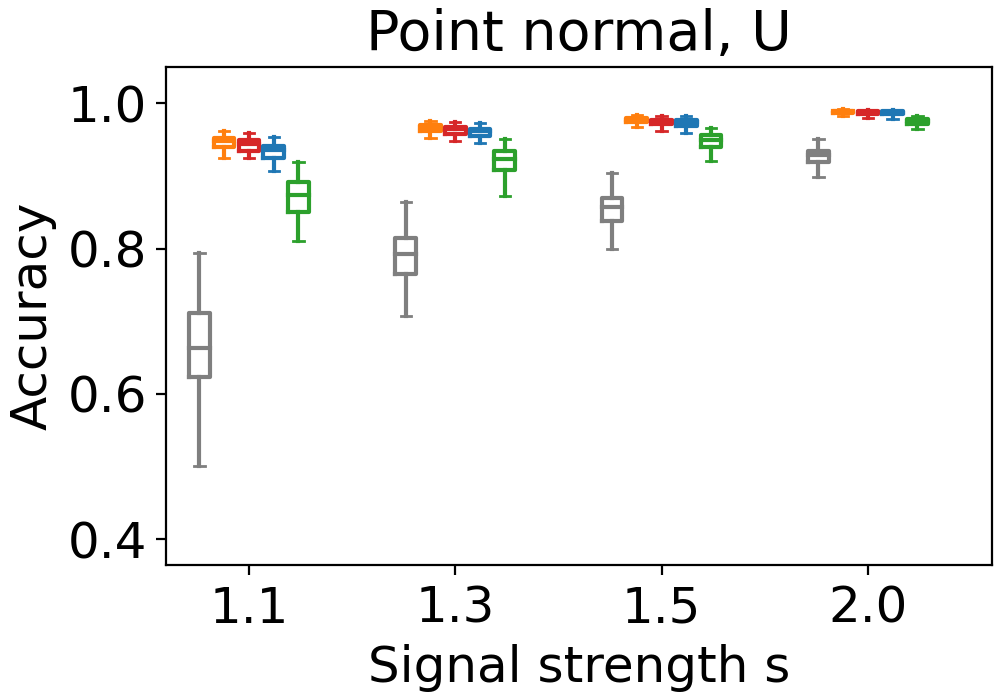}
\xincludegraphics[width=0.49\textwidth]{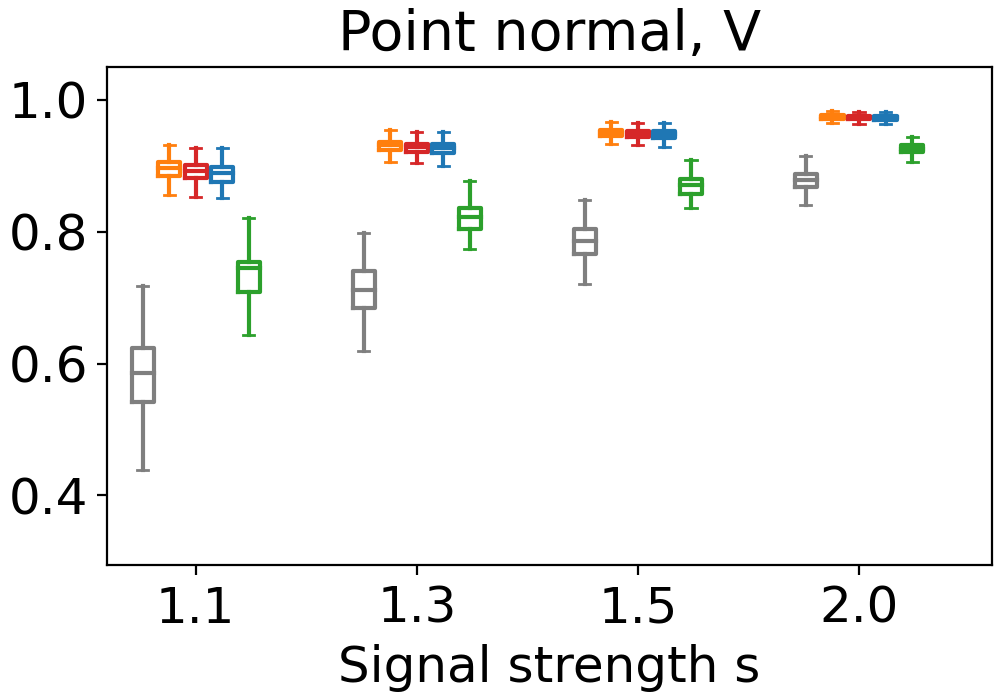}
\endminipage\\
\minipage{\columnwidth}
\includegraphics[width=\textwidth]{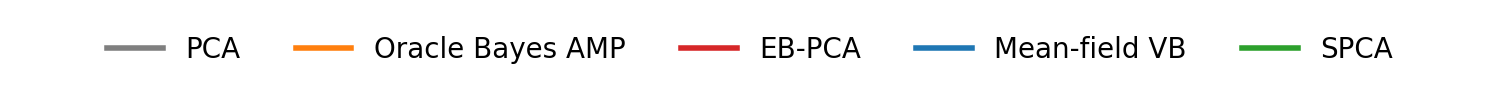}
\endminipage
\caption{Estimation accuracy of PCA, oracle Bayes
AMP, EB-PCA, and naive mean-field VB, across 50 simulations of a rank-one model
with four priors.
For (d), comparison
with \texttt{spca}
is also shown.
In all cases, EB-PCA nearly matches the accuracy of the oracle Bayes AMP
procedure and improves over mean-field VB. EB-PCA also
improves over standard PCA for the non-Gaussian priors (b--d), and over
\texttt{spca} for the sparse prior (d).
}\label{fig:rankoneaccuracy}
\end{figure}

We compare EB-PCA with several other methods on simulated data from the
rank-one model of \eqref{eq:rankonemodel}, using four different univariate priors:
(a) standard Gaussian $\cN(0,1)$,
(b) $\operatorname{Uniform}[-\sqrt{3},\sqrt{3}]$,
(c) $\operatorname{Bernoulli}\{+1,-1\}$, and
(d) sparse point-normal $0.9\,\delta_0+0.1\,\cN(0,10)$.
For simplicity, we use the same prior for both $\bu$ and $\bv$.
The two-point and point-normal
priors (c--d) represent simple clustering and sparse-PCA applications.

We compare EB-PCA with standard PCA, the oracle Bayes AMP procedure of
\cite{montanari2017estimation} that knows the true priors, and
naive mean-field variational Bayes with priors estimated by NPMLE, 
corresponding to a version of EBMF in \cite{wang2018empirical} with
known homoscedastic noise variance.
For the sparse point-normal prior (d), we compare also to
the \texttt{spca} method of~\cite{zou2006sparse}. Simulation details
can be found in Appendix~\ref{sec:data}.

Figure \ref{fig:rankoneaccuracy} displays the accuracy of these procedures, in
terms of the alignments \linebreak
${\langle \hat{\bu}, \bu \rangle}/{\| \bu\|
\|\hat{\bu}\|}$ and ${\langle \hat{\bv}, \bv \rangle}/{\|\bv\| \|\hat{\bv}\|}$.
These alignments measure the accuracy of the estimated PC directions, and do
not account for further improvements of EB-PCA/Bayes-AMP resulting from
the shrinkage of $(\bu,\bv)$. The dimensions tested are
$(n,d)=(2000,4000)$, so that the estimate of $\bu \in \bbR^n$ is more accurate
than that of $\bv \in \bbR^d$. The phase-transition point for super-critical
signal strength is $s_*(\gamma)=0.84$, and we tested the range of signal
strengths $s \in \{1.1, 1.3, 1.5, 2.0\}$.

Under the standard Gaussian prior (a), the posterior mean function in
(\ref{eq:multivariatedenoise}) for
oracle Bayes AMP is linear in each iteration. Thus its iterates remain
proportional to the sample PCs, and the asymptotic accuracy of both EB-PCA and 
Bayes-AMP in the above alignment metric coincide with
standard PCA. Thus the Gaussian prior (a) provides a ``control'' setting in
which we hope to match the performance of PCA.

The following observations summarize these comparisons:

\begin{itemize}
\item \textbf{Oracle Bayes AMP.} EB-PCA is nearly as
accurate as oracle Bayes AMP in all cases, without knowing the true priors.
There is a small decrease in accuracy of EB-PCA for very weak signals, due
to estimation variance for $s$.
\item \textbf{Standard PCA.} In the control setting of the
$\cN(0,1)$ prior, EB-PCA yields accuracy comparable to PCA. EB-PCA yields
improved accuracy in all remaining settings, with this improvement being
more substantial for weaker signals and
for the two-point and sparse point-normal priors that reflect
stronger prior structure.
\item \textbf{SPCA.} EB-PCA and mean-field VB both improve significantly
over \texttt{spca} for the sparse point-normal prior, with the added advantage
of being tuning-parameter free.
\item \textbf{Mean-field VB.} EB-PCA improves over mean-field VB in all
cases. The improvement is small for the point-normal prior and for larger signal
strengths, but is larger in the remaining settings. Mean-field VB
seems to yield worse estimation accuracy than standard PCA for
the $\cN(0,1)$ and continuous uniform priors.
\end{itemize}

To provide a more detailed comparison of EB-PCA with mean-field VB at a weak
signal strength, panels (a) and (b) of Figures~\ref{fig:iteratesuniform} and
\ref{fig:iteratestwopoint}
display the entrywise distributions of several
iterates for the uniform and two-point priors with $s=1.3$. Overlaid are the
convolution densities of the true prior with the estimated levels of Gaussian
noise. Panel (c) displays the accuracies across iterations.
The overlaid convolution densities in EB-PCA closely match the empirical
distributions of the iterates,
whereas discrepancies accumulate for mean-field VB. These discrepancies can
cause mean-field VB to estimate an increasingly incorrect prior, and to have
decreasing accuracy across iterations.

\begin{figure}
\minipage{0.33\columnwidth}
\xincludegraphics[width=0.9\textwidth,label=(a)]{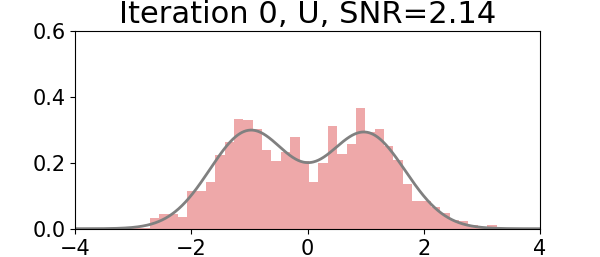}\\\\
\xincludegraphics[width=0.9\textwidth]{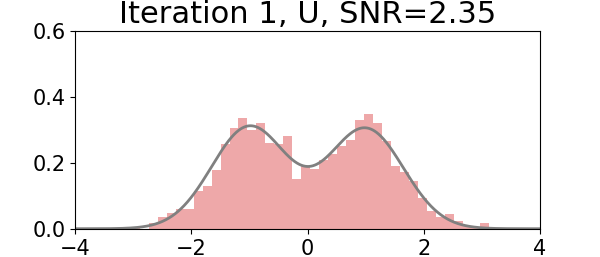}\\\\
\xincludegraphics[width=0.9\textwidth]{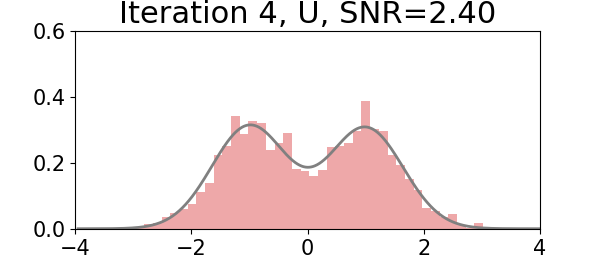}
\endminipage%
\minipage{0.33\columnwidth}
\xincludegraphics[width=0.9\textwidth,label=(b)]{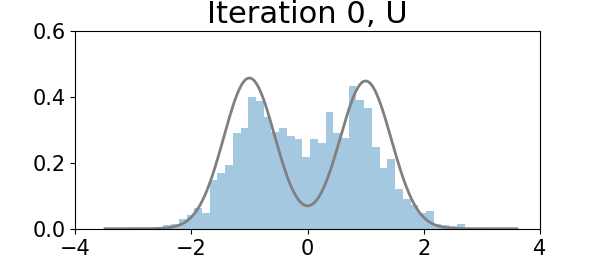}\\\\
\xincludegraphics[width=0.9\textwidth]{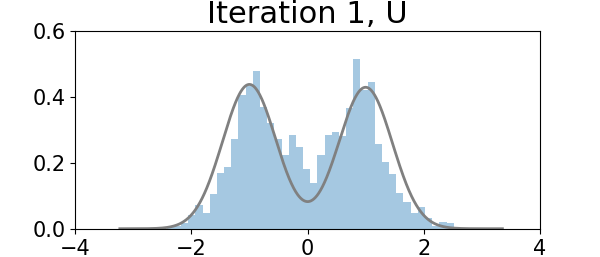}\\\\
\xincludegraphics[width=0.9\textwidth]{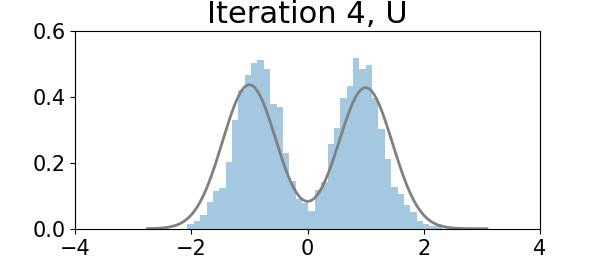}
\endminipage%
\minipage{0.33\columnwidth}
\xincludegraphics[width=0.9\textwidth,label=(c)]{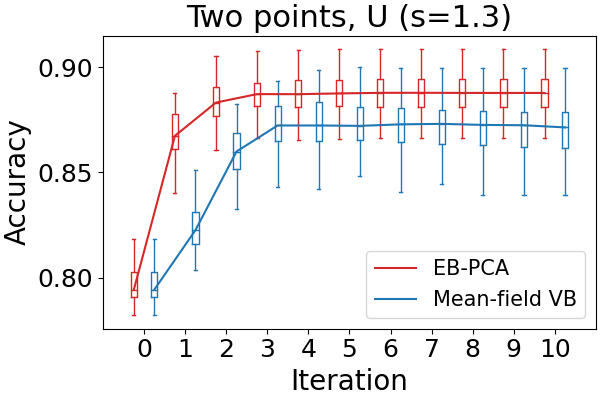}
\endminipage
\caption{Iterates of EB-PCA versus mean-field VB with
$\operatorname{Uniform}[-\sqrt{3},\sqrt{3}]$ priors for $\bu$. (a) Distribution of entries of the EB-PCA
iterates $\bff_0, \bff_1, \bff_4$. A close agreement is observed with the
overlaid convolution densities
$\operatorname{Uniform}[-\sqrt{3},\sqrt{3}] * \cN(0,\bar{\sigma}_t^2/\bar{\mu}_t^2)$ for $t=0,1,4$, where $\bar{\mu}_t^2/\bar{\sigma}_t^2$ is the estimated signal-to-noise
ratio (SNR) used to perform empirical Bayes denoising. (b) Analogous picture
for mean-field VB, with $\bar{\mu}_t^2/\bar{\sigma}_t^2$ as estimated in
mean-field VB. A discrepancy is observed between the iterates
and the convolution densities.
(c) Accuracy of $\bu_t$ across iterations,
for EB-PCA versus mean-field VB.
Results for $(\bv,\bg_t)$ are similar and omitted for brevity.}\label{fig:iteratesuniform}
\end{figure}

\subsection{Bivariate priors}\label{sec:bivariate}
We now demonstrate that multivariate EB-PCA, which estimates a joint prior over
several PCs, can improve substantially over EB-PCA applied marginally to each
PC when there is strong multivariate structure.
We consider dimensions $(n,d)=(1000,1000)$ and signal strengths $(s_1,s_2)=(4,2)$, for
the two bivariate priors presented in Figure~\ref{fig:bivariate}(a): a discrete
three-point prior and a uniform prior on a circle.

\begin{figure}
\minipage{0.25\columnwidth}
\xincludegraphics[width=0.8\textwidth,label=(a)]{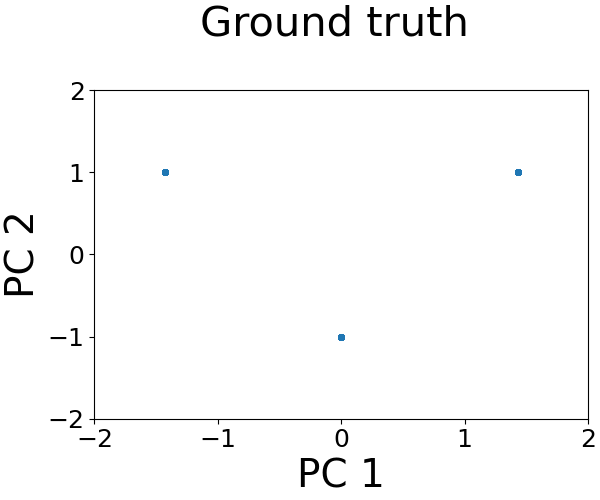}\\%%
\vspace{0.5pt}
\xincludegraphics[width=0.8\textwidth]{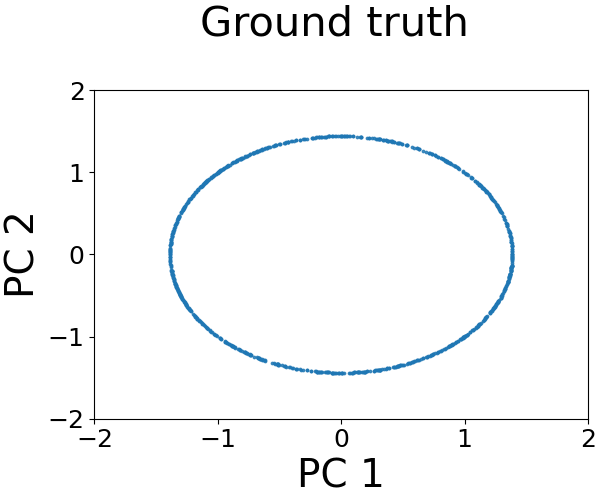}%
\endminipage
\minipage{0.25\columnwidth}
\xincludegraphics[width=0.8\textwidth,label=(b)]{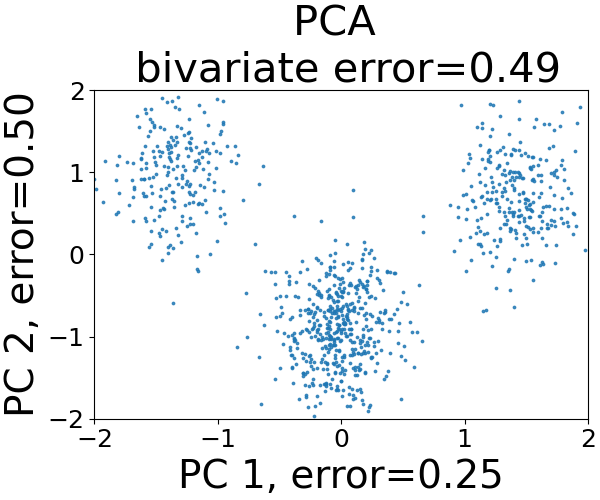}\\ %%
\vspace{0.5pt}
\xincludegraphics[width=0.8\textwidth]{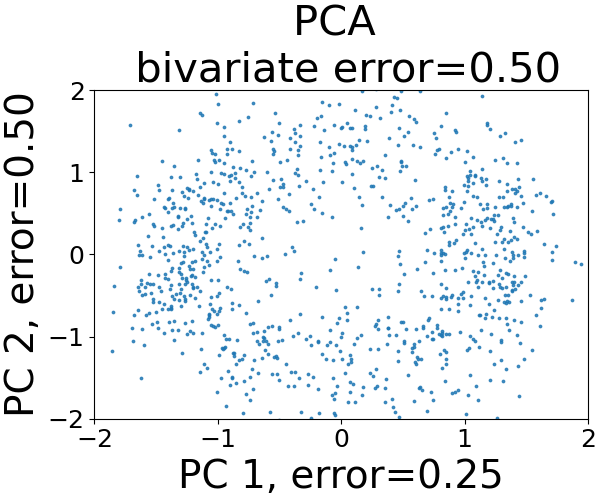}%
\endminipage
\minipage{0.25\columnwidth}
\xincludegraphics[width=0.8\textwidth,label=(c)]{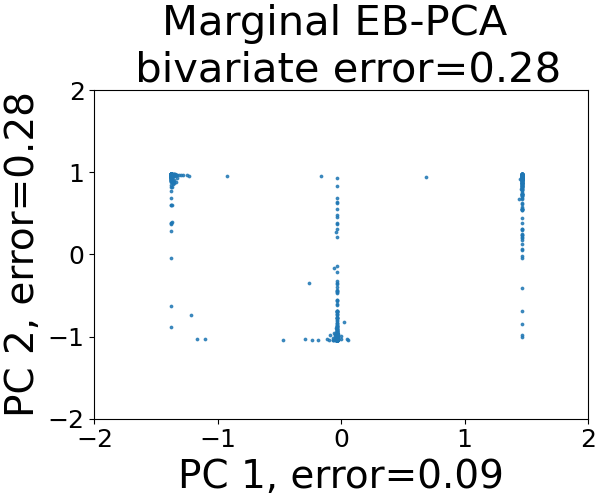}\\%%
\vspace{0.5pt}
\xincludegraphics[width=0.8\textwidth]{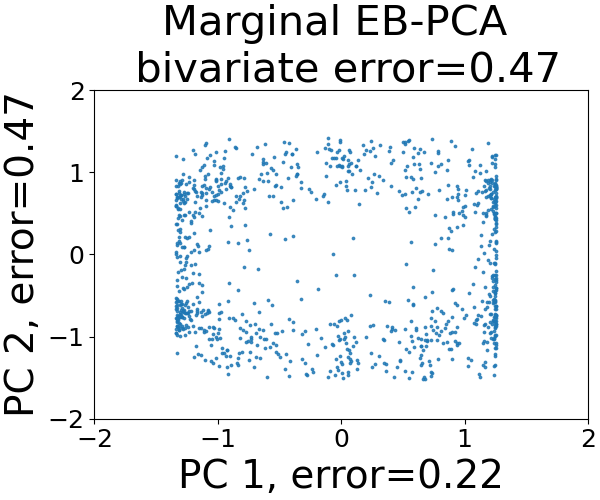}%
\endminipage
\minipage{0.25\columnwidth}
\xincludegraphics[width=0.8\textwidth,label=(d)]{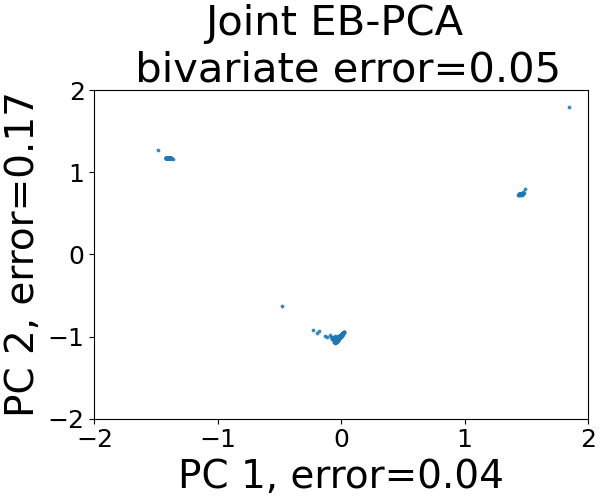}%
\\
\vspace{0.5pt}
\xincludegraphics[width=0.8\textwidth]{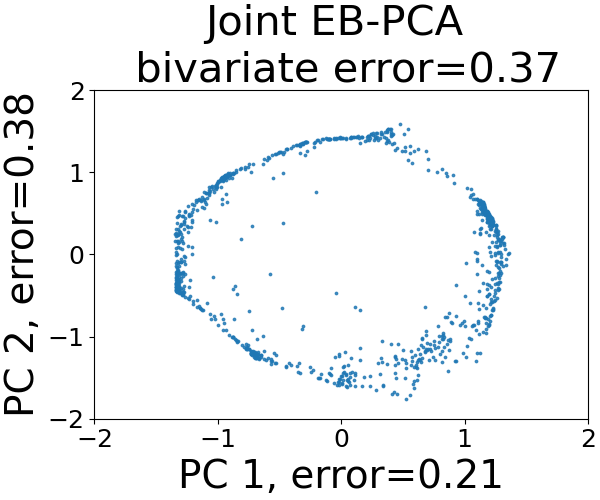}
\endminipage
\caption{Comparison of standard PCA, EB-PCA applied marginally, and EB-PCA
applied jointly
for two bivariate priors: a discrete three-point mixture
prior (top row) and a uniform prior on a circle (bottom row). Displayed
are scatter-plots of the columns of $\bU$ for (a) the true PCs, (b) the sample PCs,
(c) EB-PCA applied marginally for each PC, and (d)
EB-PCA applied jointly to learn the bivariate prior. Estimation errors
are displayed in the figure titles.
Marginal EB-PCA improves over standard PCA,
and joint EB-PCA further improves over marginal EB-PCA.
}\label{fig:bivariate}
\end{figure}

Titles in Figure~\ref{fig:bivariate} display the estimation error
 as defined by the subspace distance (see Section 2.5 of \cite{van1983matrix}) 
between the column spans of
$\hat{\bU}$ and $\bU$ and also between the spans of the individual PCs. 
Table~\ref{tab:bivariate} reports the average of such estimation errors across
50 simulations. These results indicate that by leveraging the
underlying joint structure, multivariate EB-PCA learns a more accurate prior
and has lower error
both for the estimated two-dimensional subspace and for the individual PCs.
The simultaneous estimation of $k$ PCs has the additional benefit of reducing
the computation time by a factor of roughly $k$ over the univariate approach.

\section{Applications}\label{sec:examples}

We illustrate EB-PCA on three high-dimensional genetics datasets:
genotype data from the 1000 Genomes Project and 
the third phase of the International HapMap Project (HapMap3)~\cite{international2010integrating}, 
and single cell
RNA-seq (scRNA-seq) gene expression data on Peripheral Blood Mononuclear 
Cells (PBMC) from 10X Genomics. Preprocessing procedures and
implementation details are provided in Appendix~\ref{sec:data}.

\subsection{1000 Genomes Project genotypes}\label{sec:1000genomes}

PCA is commonly-used to correct for population
stratification in genome-wide association studies. As the number of
SNPs $n$ often far exceeds the number of individuals $d$, the
estimated PCs in $\bbR^d$ suffer minimally from high-dimensional noise for
the leading PCs.
This provides a ground truth by which we may quantitatively compare estimation
accuracy on subsampled data.

\begin{table}
  \caption{PC estimation errors on subsampled genotype matrices from 1000 Genomes\label{tab:1000g}}
  \centering
  \begin{tabular}[c]{c|c|ccccc}
      \toprule
      \# SNPs& Error & PC1 & PC2 & PC3 & PC4 & Joint \\ \hline\hline
      \multirow{2}{*}{100} & PCA &.35(.039) & .49(.036) & .75(.078) & .83(.059) & .79(.036) \\ 
      & EB-PCA &.20(.058) & .33(.042) & .60(.13) & .62(.12) & .56(.036) \\ \hline
      \multirow{2}{*}{1000} & PCA  & .11(.0073) & .17(.0070) & .31(.024) & .36(.019) & .34(.0091) \\ 
      & EB-PCA & .072(.0099) & .10(.0098) & .17(.042) & .20(.034) & .18(.0055) \\\hline
      \multirow{2}{*}{10000} & PCA &.034(.0016) & .051(.0017) & .10(.0089) & .11(.0086) & .11(.0018) \\ 
      & EB-PCA &.028(.0023) & .040(.0032) & .082(.013) & .091(.011) & .081(.0017) \\ 
      \bottomrule
  \end{tabular}
\end{table}

Implementing this experiment for genotype data from the 1000 Genomes Project,
we extracted 100{,}000 common SNPs for 2504 individuals, from which we computed the ground truth PCs.
We then estimated PCs on subsamples of 100, 1000, or 10{,}000 randomly
selected SNPs using both EB-PCA and PCA. Figure~\ref{fig:singvals}
plots the singular values for a typical
subsample of 1000 SNPs, in which 4 clear outlier values are apparent.
Thus we chose to estimate the leading 4 PCs using EB-PCA.

Table \ref{tab:1000g} compares errors across
50 random subsamples of each size, where the error is the subspace distance
(as previously used in Section~\ref{sec:bivariate}) against the ground
truth PCs. The results show a clear improvement of EB-PCA
over PCA, with slightly larger improvement 
for the lower PCs having smaller singular values.

A visual comparison for 1000 subsampled SNPs was presented
previously in Figure~\ref{fig:1000genomesintro}. The EB-PCA estimates are closer
to the ground truth than the sample PCs, and better separate the
subjects by ethnicity. For example, the Caucasian, African, and
East Asian populations are mixed in the two-dimensional plot of the 3rd vs.\
4th sample PC, whereas they are separated in the EB-PCA estimates
and also in the ground truth. The information used by EB-PCA for 
performing this separation in PCs 3 and 4 comes from learning a joint prior with
PCs 1 and 2, where these populations have clear separation.

\subsection{HapMap3 genotypes}\label{sec:Hapmap3}

We performed a similar experiment on genotype data from HapMap3.
For the 1397 individuals in HapMap3, we computed ground truth PCs from
142{,}185 common SNPs.
Based on the singular value distribution in Figure~\ref{fig:singvals} on
a subsample of 5000 SNPs, we chose to estimate the leading 4 PCs using EB-PCA.

Table~\ref{tab:hapmap3} compares the errors of EB-PCA and PCA across 50 subsampled
data sets of 1000, 5000, and 10{,}000 SNPs, and Figure~\ref{fig:hapmap3} depicts
results for 5000 SNPs. We again observe a consistent decrease in estimation
error for EB-PCA, which is larger for the weaker PCs.

\begin{figure}
\minipage{0.33\columnwidth}
\xincludegraphics[width=0.8\textwidth,label=(a)]{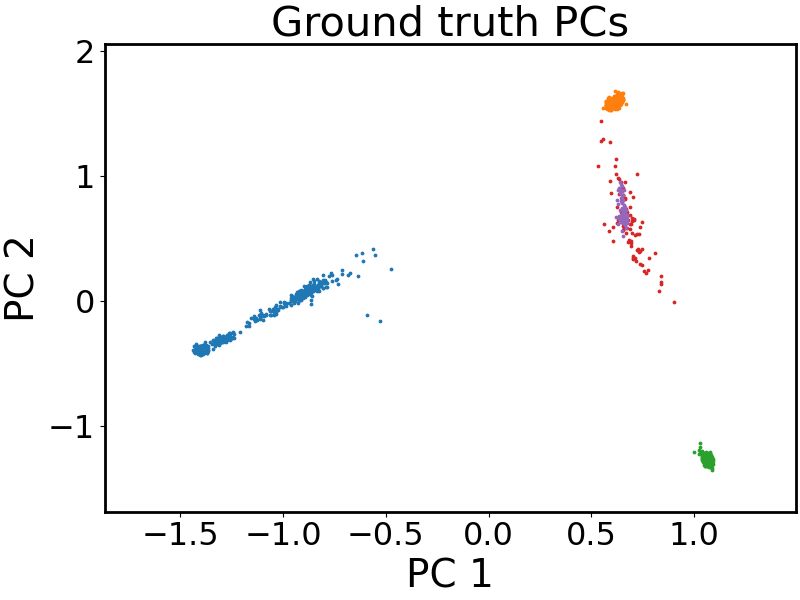}\\
\xincludegraphics[width=0.8\textwidth]{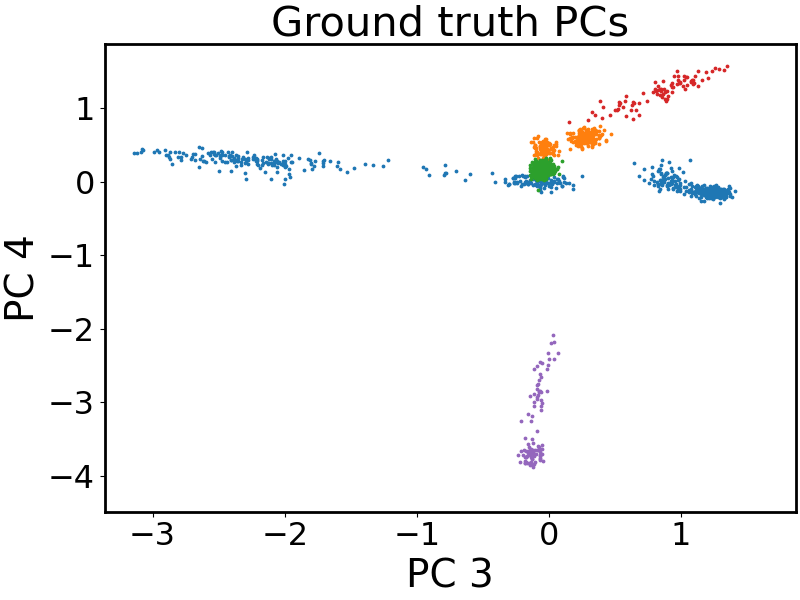}\endminipage
\minipage{0.33\columnwidth}
\xincludegraphics[width=0.8\textwidth,label=(b)]{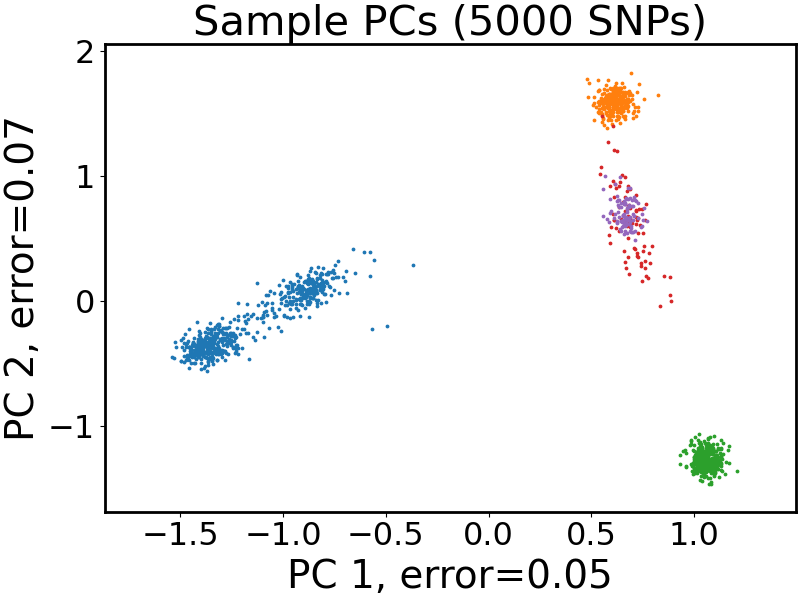}\\
\xincludegraphics[width=0.8\textwidth]{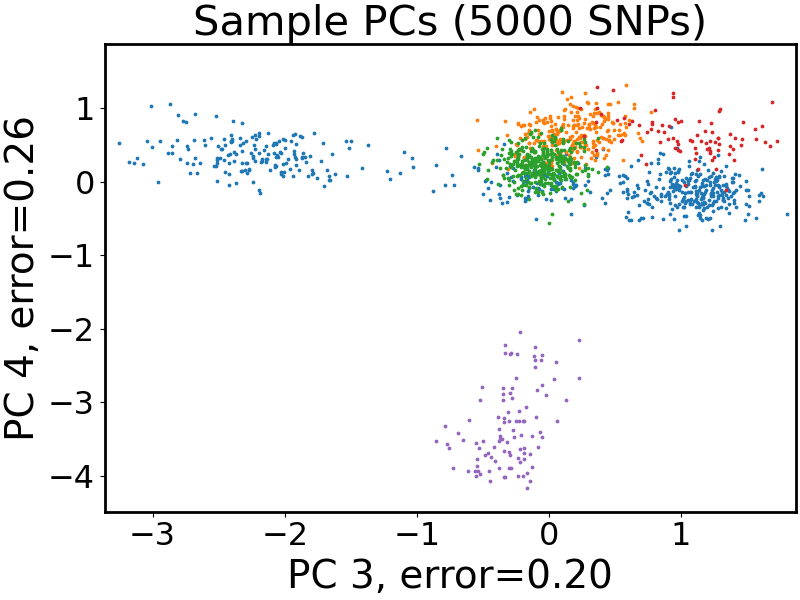}\endminipage
\minipage{0.33\columnwidth}
\xincludegraphics[width=0.8\textwidth,label=(c)]{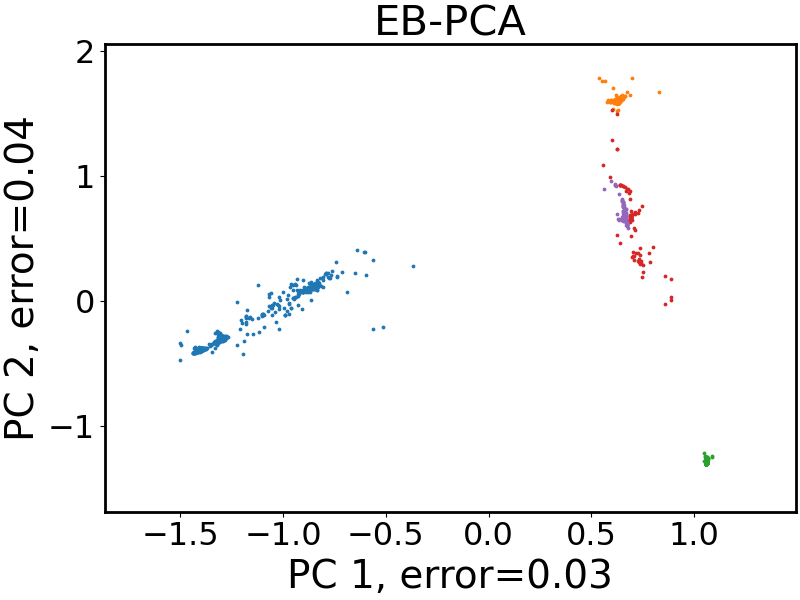}\\
\xincludegraphics[width=0.8\textwidth]{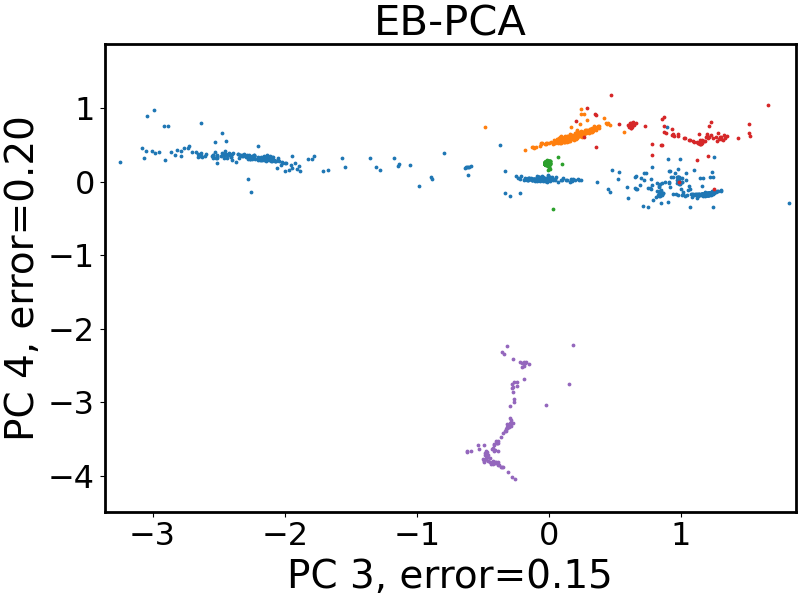}\endminipage\\
\minipage{\columnwidth}
\includegraphics[width=\textwidth]{Fig/1000G/1000G_legend_horizontal.png}
\endminipage
\caption{Illustration of EB-PCA on genotype data from HapMap3, similar to
Figure \ref{fig:1000genomesintro}. (a) Ground truth PCs,
with points colored by the individuals' ethnicity.
(b) Sample PCs computed on a subsample of 5000 SNPs.
(c) EB-PCA estimates of the four PCs, using the same subsample as in panel (b).
}\label{fig:hapmap3}
\end{figure}

\subsection{10x Genomics PBMC single-cell RNA-seq}\label{sec:10x}

\begin{figure}
\centering
\minipage{0.48\columnwidth}
\xincludegraphics[width=0.48\textwidth,label=(a)]{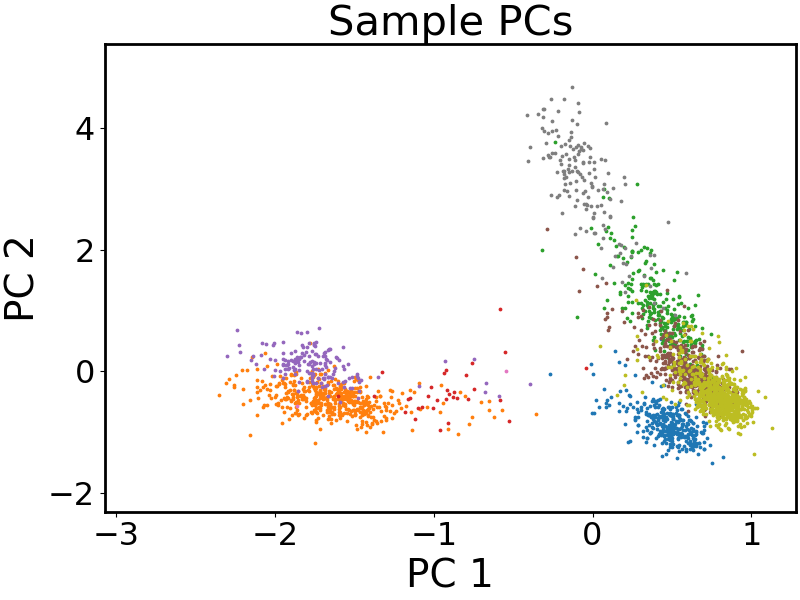}
\xincludegraphics[width=0.48\textwidth]{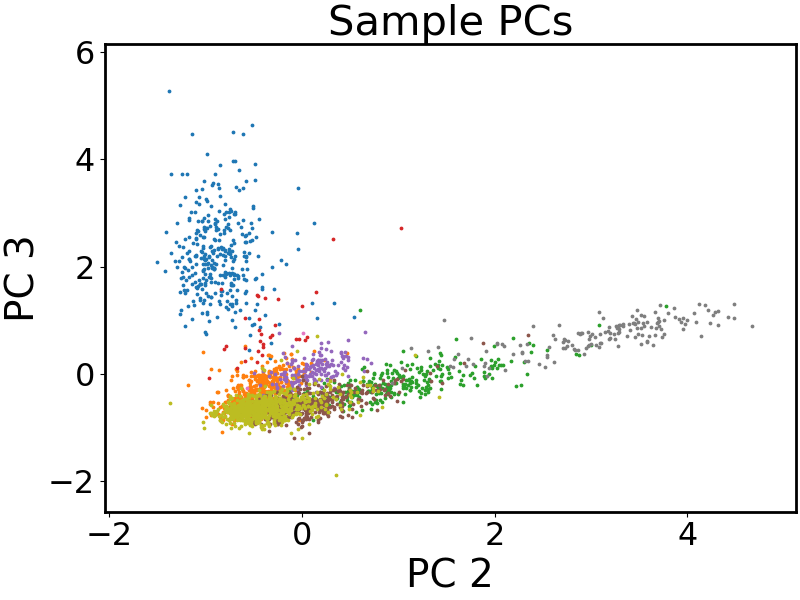}\endminipage%
\minipage{0.48\columnwidth}
\xincludegraphics[width=0.48\textwidth,label=(b)]{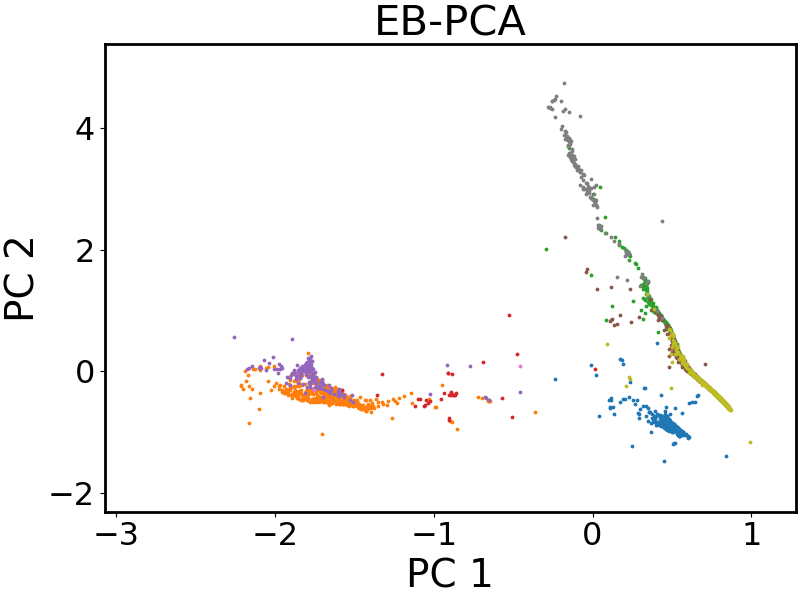}
\xincludegraphics[width=0.48\textwidth]{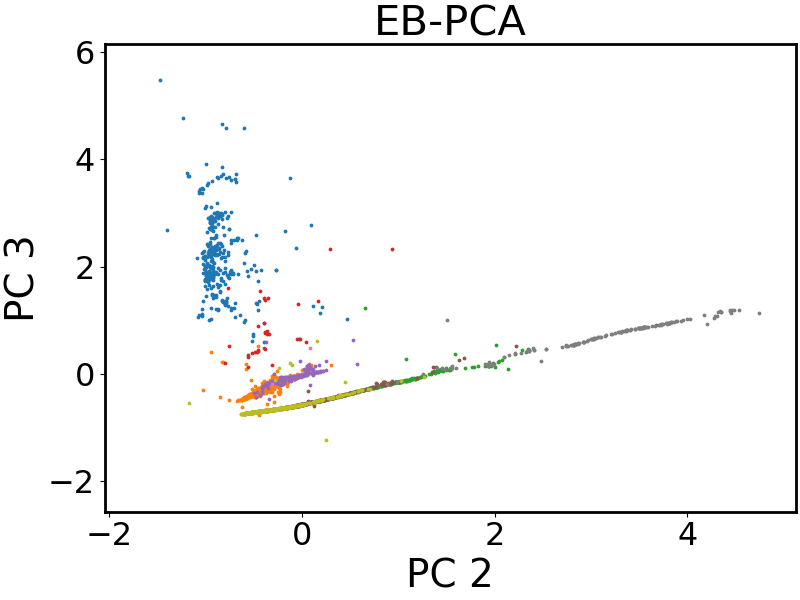}\endminipage%
\\
\minipage{\textwidth}
\xincludegraphics[width=\textwidth]{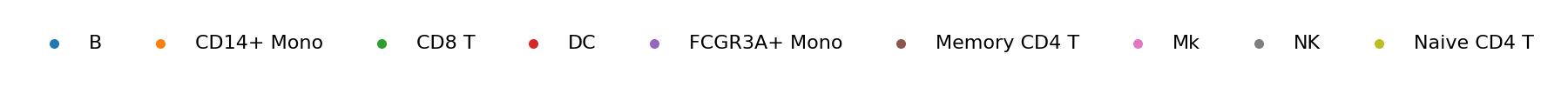}	
\endminipage
\caption{Estimated PCs using (a) standard PCA and (b) EB-PCA, for PBMC
single-cell RNA-seq gene expression data, with points colored by cell type.
The EB-PCA estimates
exhibit reduced estimation noise and clearer separation of cells by
cell type.
}\label{fig:PBMC}
\end{figure}

PCA is often the first step in single-cell gene expression data
analysis pipelines such as \texttt{Seurat} and \texttt{scanpy},
to capture the signatures of cell identity.
Nonlinear dimensionality reduction methods are often then applied with the
estimated PCs as input, to perform cell clustering and to infer cell types.

We illustrate an application of EB-PCA on scRNA-seq data of $d=2626$
Peripheral Blood Mononuclear
Cells (PBMCs) with $n=13{,}711$ gene expressions, from 10X Genomics.
This data is more representative of typical applications of EB-PCA,
in which there are insufficient samples to provide a known ground truth.
Therefore, we applied EB-PCA without subsampling. 
The singular values are shown in Figure~\ref{fig:singvals}, and
3 large outlier values are apparent. The corresponding
leading 3 PCs are depicted as two scatterplots in Figure~\ref{fig:PBMC}(a).
Qualitatively, these scatterplots exhibit estimation noise that resembles the
noise previously observed in the \emph{subsampled} 1000 Genomes and HapMap3
genotype matrices, suggesting that the noise may be a consequence of high
dimensionality.

Figure~\ref{fig:PBMC}(b) displays the estimated PCs using EB-PCA. Although there
is not a basis for quantitative comparison with PCA in this example, we
believe that the EB-PCA results may be more accurate for an underlying ground
truth. Qualitatively, the estimation noise exhibited in
Figure~\ref{fig:PBMC}(a) has been reduced. There is a clearer separation
between different cell types. For example, Na\"{\i}ve CD4 T cells are separated 
from CD14+ Monocytes in the plot of the 2nd vs.\ 3rd PCs estimated by EB-PCA,
whereas these overlap in the plot of the sample PCs.

\section{Theoretical guarantees}\label{sec:theory}

In this section, we summarize theoretical guarantees for EB-PCA. In the
context of the signal-plus-noise model (\ref{eq:model}),
our results show that EB-PCA asymptotically achieves the same (first-order)
estimation accuracy as the oracle Bayes AMP procedure. Consequently, the
asymptotic estimation error provably decreases across iterations, and can
approach the Bayes-optimal error as the number of iterations $t \to \infty$
under general conditions for the true priors.

The strategy of proof is to first establish asymptotic consistency of the NPMLEs
for the prior distributions, and then, through an inductive
comparison argument, show that the iterates of EB-PCA are characterized by the
same state evolution as oracle Bayes AMP. We note that prior work
on asymptotic consistency for the NPMLE assumes that a compound decision
model \eqref{eq:compoundmodel} with i.i.d.\ Gaussian errors holds exactly,
whereas such an error model holds only approximately for the sample PCs and AMP
iterates. Our arguments show that this approximation is sufficient for
asymptotic consistency.

\subsection{Assumptions}\label{sec:assumptions}

Consider the rank-$k$ model of \eqref{eq:model}.
We assume the empirical distributions of rows of
$\bU \in \bbR^{n \times k}$ and $\bV \in \bbR^{d \times k}$
converge in Wasserstein-2 distance to fixed distributions
$\bar{\pi}_*,\pi_*$ on $\bbR^k$, as $n,d \to \infty$.
This means that for any continuous function $\psi:\bbR^k \to \bbR$ 
with $\bbE_{U \sim \bar{\pi}_*}[\psi(U)^2]<\infty$, we have
$\lim_{n \to \infty} n^{-1}\sum_{i=1}^n \psi(u_{i1},\ldots,u_{ik})
=\bbE_{U \sim \bar{\pi}_*}[\psi(U)]$,
and similarly for $\bV$. We denote this convergence both as
$\bU \toW \bar{\pi}_*$ and as $\bU \toW U$ for a random vector
$U \sim \bar{\pi}_*$. Our model assumptions are then summarized as follows.

\begin{assumption}\label{assump:model}
$\cP$ is a family of probability distributions on $\bbR^k$ having
finite second moment, and $n,d \to \infty$ such that
\begin{enumerate}[topsep=0pt,itemsep=-1ex,partopsep=1ex,parsep=1ex]
\item[(a)] $\bW \in \bbR^{n \times d}$ has entries $w_{ij} \overset{iid}{\sim}
\cN(0,1/n)$.
\item[(b)] $k$, $s_1,\ldots,s_k$, and $\gamma \equiv d/n$
remain constant, where $s_1>\ldots>s_k>s_*(\gamma) \equiv \gamma^{-1/4}$.
\item[(c)] $\bU \toW \bar{\pi}_*$ and $\bV \toW \pi_*$ for two distributions
$\bar{\pi}_*,\pi_* \in \cP$ that satisfy the normalizations,
for all $1\leq i \neq j \leq k$,
\begin{equation*}
	\bbE_{U \sim \bar{\pi}_*}[U_i^2]=1, 
	\quad
	\bbE_{U \sim \bar{\pi}_*}[U_iU_j]=0, 
	\qquad
	\bbE_{V \sim \pi_*}[V_i^2]=1, 
	\quad
	\bbE_{V \sim \pi_*}[V_iV_j]=0
\end{equation*}
\item[(d)] For any non-singular $M_* \in \bbR^{k \times k}$,
symmetric positive-definite $\Sigma_* \in \bbR^{k \times k}$,
and $\pi_* \in \cP$, there is a weakly open neighborhood $O$ of $\pi_*$
such that $\theta(x\mid M_*, \Sigma_*, \pi)$ is
Lipschitz in $x$ uniformly over $\pi \in O$.	
\end{enumerate}
\end{assumption}

Part (a) makes a Gaussian assumption for the
noise, and part (b) assumes for simplicity that the signal values
$s_1,\ldots,s_k$ are distinct and super-critical. 

Part (c) ensures the normalizations in
\eqref{eq:normalization} and \eqref{eq:orthogonality}. This can hold both when
$\bU \in \bbR^{n \times k}$ and $\bV \in \bbR^{d \times k}$ are deterministic
matrices whose columns are the (exactly) orthogonal true PCs, as well as
almost surely in a Bayesian setting when $\bU$ and $\bV$ are random with
independent rows generated from $\bar{\pi}_*$ and $\pi_*$. We will assume
$\bU$ and $\bV$ are deterministic, i.e.\ our results apply
conditional on $(\bU,\bV)$ in the Bayesian setting.

Part (d) assumes a Lipschitz property for the posterior mean functions,
as is common in analyses of AMP. This places a small
restriction on the prior class $\cP$; for example, the assumption holds if
$\cP$ is the class of all priors supported on a compact domain of $\bbR^k$.

We analyze EB-PCA in a slightly idealized setting where the noise variance
$1/n$ in part (a) is known,
rather than estimated as in Remark \ref{remark:scaling},
and where the NPMLE in Lines 5 and 9 of Algorithm \ref{alg:EBPCA}
are computed exactly in each iteration.
Our results may be extended to incorporate a consistent estimate of the noise
variance and an approximate NPMLE computed on a sufficiently fine
discretization of the support, by a standard comparison argument with this
idealized setting---we omit the details of such an argument for brevity.

\subsection{Limiting risk for the initial empirical Bayes
estimates}\label{sec:PCAanalysis}

In the compound decision model $\Theta \sim \pi$ and $X \mid \Theta \sim
\cN(M \cdot \Theta,\;\Sigma)$, we denote the squared error
Bayes risk for estimating $\Theta$ based on $X$ as
\[\mmse(\pi \mid M,\Sigma)=\bbE\big[\|\Theta-\bbE[\Theta \mid X]\|_2^2\big].\]

The entrywise Gaussian approximation in \eqref{eq:PCAgaussianapprox} for the
sample PCs is formalized as the following proposition. Lemma C.1 of
\cite{montanari2017estimation} proves a similar result for the
symmetric spiked model.

\begin{proposition}\label{prop:PCAgaussian}
Let $\bar{M}_*,M_*,\bar{\Sigma}_*,\Sigma_* \in \bbR^{k \times k}$ be defined
by \eqref{eq:PCAM} and \eqref{eq:PCASigma}. Under Assumption \ref{assump:model},
almost surely $(\bU,\bF) \toW (U,F)$ and $(\bV,\bG) \toW (V,G)$
where $U \sim \bar{\pi}_*$, $F \mid U \sim
\cN(\bar{M}_* \cdot U,\;\bar{\Sigma}_*)$, 
$V \sim \pi_*$, and $G \mid V \sim \cN(M_* \cdot V,\;\Sigma_*)$.
\end{proposition}

Combined with an asymptotic consistency result for the NPMLE in approximate compound
decision models, shown in Lemma~\ref{lemma:mNPMLE} and
Corollary~\ref{cor:empBayes_relativeError}, this yields
the following asymptotic squared-error risks
for the initial empirical Bayes estimates of $\bU$ and $\bV$.

\begin{corollary}\label{cor:PCAmmse}
Let $\hat{\bU}$ and $\hat{\bV}$ be the initial empirical Bayes estimators in
\eqref{eq:initU} and \eqref{eq:initV}.
Let $\bar{M}_*,M_*,\bar{\Sigma}_*,\Sigma_* \in \bbR^{k \times k}$ be defined
by \eqref{eq:PCAM} and \eqref{eq:PCASigma}. Under Assumption \ref{assump:model},
almost surely
\begin{align*}
n^{-1}\|\hat{\bU}-\bU\|_F^2 \to \mmse(\bar{\pi}_* \mid
\bar{M}_*,\bar{\Sigma}_*), \quad
\qquad d^{-1}\|\hat{\bV}-\bV\|_F^2 \to \mmse(\pi_* \mid M_*,\Sigma_*).
\end{align*}
\end{corollary}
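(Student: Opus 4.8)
The plan is to obtain the statement directly by combining Corollary~\ref{cor:PCAgaussian}, which supplies the entrywise Gaussian approximation $(\bV,\bG)\toW(V,G)$ and $(\bU,\bF)\toW(U,F)$ for the sample PCs, with Corollary~\ref{cor:empBayes_relativeError}, which gives the oracle Bayes risk for an empirical Bayes estimator built from an approximate NPMLE. Concretely, it suffices to verify, on an almost-sure event, the hypotheses of Lemma~\ref{lemma:mNPMLE}/Corollary~\ref{cor:empBayes_relativeError} with $(\bTheta,\bX,M_n,\Sigma_n,\pi_n)$ taken to be $(\bV,\bG,M,\Sigma,\pi)$ (and the number of samples $n$ there replaced by $d$), and analogously $(\bU,\bF,\bar M,\bar\Sigma,\bar\pi)$ with $n$ samples. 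The conclusions then read $\tfrac1d\|\theta(\bG\mid M,\Sigma,\pi)-\bV\|_F^2\to\mmse(\pi_*\mid M_*,\Sigma_*)$ and $\tfrac1n\|\theta(\bF\mid\bar M,\bar\Sigma,\bar\pi)-\bU\|_F^2\to\mmse(\bar\pi_*\mid\bar M_*,\bar\Sigma_*)$, which by (\ref{eq:initU}--\ref{eq:initV}) is exactly the claim.

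First I would fix the almost-sure event. By Lemma~\ref{lemma:BBP}, almost surely $\sqrt\gamma\,\lambda_i\to\sqrt{(\gamma s_i^2+1)(s_i^2+1)/s_i^2}$ for each $i$; since $s_i>s_*(\gamma)$, this limit is strictly larger than $1+\sqrt\gamma$, so $\gamma\lambda_i^2$ converges to a value strictly exceeding $(1+\sqrt\gamma)^2$, which lies in the region where the map in (\ref{eq:shat}) continuously inverts (\ref{eq:PCAsingval}). Hence $\hat s_i^2\to s_i^2>0$ and $\hat s_i\to s_i$ almost surely. The expressions for $\bar\sigma_{*,i}^2,\sigma_{*,i}^2$ in (\ref{eq:PCAsigma}) and for $\bar\mu_{*,i},\mu_{*,i}$ in (\ref{eq:PCAalignment}) are continuous in $s_i^2$ on a neighborhood of the true value (denominators bounded away from $0$), so the plug-in matrices $\bar M,M,\bar\Sigma,\Sigma$ converge almost surely to $\bar M_*,M_*,\bar\Sigma_*,\Sigma_*$. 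These limits are diagonal with strictly positive entries --- positivity of $\mu_{*,i},\bar\mu_{*,i}$ uses super-criticality, $\sigma_{*,i}^2,\bar\sigma_{*,i}^2<1$ --- so $M_*,\bar M_*$ are non-singular and $\Sigma_*,\bar\Sigma_*$ are positive-definite, and on the same event Corollary~\ref{cor:PCAgaussian} gives the convergences (\ref{eq:ThetaXconvergence}) with the stated conditional Gaussian laws. Thus all hypotheses of Lemma~\ref{lemma:mNPMLE} except the approximate-NPMLE condition are in place.

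For the approximate-NPMLE condition, note that $\pi_*\in\cP$ by Assumption~\ref{assump:model}(c) and $\pi=\MLE(\bG\mid M,\Sigma,\cP)$ maximizes the likelihood over $\cP$, so writing $g_1,\ldots,g_d$ for the rows of $\bG$ we have $\tfrac1d\sum_i\log f_{(M\cdot\pi)*\cN(0,\Sigma)}(g_i)\ge\tfrac1d\sum_i\log f_{(M\cdot\pi_*)*\cN(0,\Sigma)}(g_i)$; it therefore remains to show
\[
\frac1d\sum_{i=1}^d\log\frac{f_{(M\cdot\pi_*)*\cN(0,\Sigma)}(g_i)}
{f_{(M_*\cdot\pi_*)*\cN(0,\Sigma_*)}(g_i)}\longrightarrow 0 .
\]
I would establish this by showing that $(m,S)\mapsto\log f_{(m\cdot\pi_*)*\cN(0,S)}(g)$ is continuous near $(M_*,\Sigma_*)$ uniformly for $g$ in compacts, and is dominated by $C(1+\|g\|^2)$ for $(m,S)$ in such a neighborhood (using finiteness of the second moment of $\pi_*$), so that $M\to M_*$, $\Sigma\to\Sigma_*$ together with $\bG\toW G$ and the uniform-integrability form of $W_2$-convergence yield the limit; this is the same type of domination estimate used in the proof of Lemma~\ref{lemma:mNPMLE}. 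Corollary~\ref{cor:empBayes_relativeError} (with $d$ samples) then gives $\tfrac1d\|\hat\bV-\bV\|_F^2\to\mmse(\pi_*\mid M_*,\Sigma_*)$, and the identical argument applied to $(\bU,\bF,\bar M,\bar\Sigma,\bar\pi,\bar\pi_*)$ with $n$ samples gives the statement for $\hat\bU$.

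The argument is largely bookkeeping once Corollaries~\ref{cor:PCAgaussian} and~\ref{cor:empBayes_relativeError} are available; I expect the only step needing real care is the last display --- controlling the average log-likelihood ratio under the mismatch between the plug-in parameters $(M,\Sigma)$ and the truth $(M_*,\Sigma_*)$ --- since $W_2$-convergence only controls averages of functions of at most quadratic growth, so one must verify the requisite quadratic domination of $\log f$ uniformly over a parameter neighborhood.
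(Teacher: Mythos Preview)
Your approach is correct and matches the paper's: the paper treats this corollary as immediate from Corollaries~\ref{cor:PCAgaussian} and~\ref{cor:empBayes_relativeError} without spelling out the verification of the approximate-NPMLE hypothesis of Lemma~\ref{lemma:mNPMLE}, and your proposal fills in exactly that gap. Your handling of the one nontrivial point---the parameter mismatch $(M,\Sigma)\neq(M_*,\Sigma_*)$ in the denominator of the log-likelihood ratio---is sound: the quadratic domination of $\log f_{(m\cdot\pi_*)*\cN(0,S)}$ uniformly over $(m,S)$ near $(M_*,\Sigma_*)$ follows from Proposition~\ref{prop:uniformlipschitz} via Tweedie's formula (this is the same $\PL(2)$ bound the paper invokes for $\log f_*$ in the proof of Lemma~\ref{lemma:mNPMLE}), and together with uniform convergence on compacts this gives the vanishing of the average ratio under $W_2$-convergence of $\bG$.
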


Thus the asymptotic squared-error risk for $\hat{\bU}$ is the Bayes risk
for estimating $\Theta \sim \bar{\pi}_*$ based on
$X \mid \Theta \sim \cN(\bar{M}_*\cdot \Theta,\;\bar{\Sigma}_*)$, and similarly
for $\hat{\bV}$.
In contrast, the analogous risks for the naive sample
PCs $\bF$ and $\bG$, or more generally of the best shrinkage estimators obtained
by rescaling their columns,
correspond to the Bayes risks of the best \emph{linear} estimators of $\Theta$
in these compound decision problems. These linear risks may be substantially
larger if the priors $\bar{\pi}_*$ and $\pi_*$ are far from the standard
Gaussian law.

\subsection{Limiting risk and Bayes optimality of EB-PCA}\label{sec:bayesoptimal}

For the full EB-PCA method, the following
verifies that the NPMLEs $\bar{\pi}_t,\pi_t$
remain consistent for $\bar{\pi}_*,\pi_*$ across iterations.
Consequently, the EB-PCA iterates satisfy the same Gaussian
approximations \eqref{eq:AMPgaussianapprox} and are tracked by the same state
evolution \eqref{eq:SE} as the oracle Bayes AMP algorithm.

\begin{theorem}\label{thm:EBPCA}
Suppose Assumption \ref{assump:model} holds.
Then almost surely for each fixed iteration $t \in \{0,\ldots,T\}$ of Algorithm
\ref{alg:EBPCA}, $\pi_t$ converges
weakly to $\pi_*$ and $\bar{\pi}_t$ converges weakly to $\bar{\pi}_*$.
Furthermore,
\[(\bU,\bF^t) \toW (U,F_t) \quad \text{and} \quad (\bV,\bG^t) \toW (V,G_t)\]
where $U \sim \bar{\pi}_*$, $F_t \mid U \sim \cN(\bar{M}_{*,t} \cdot U,\;
\bar{\Sigma}_{*,t})$, $V \sim \pi_*$, and
$G_t \mid V \sim \cN(M_{*,t} \cdot V,\;\Sigma_{*,t})$. The matrices
$\bar{M}_{*,t},\bar{\Sigma}_{*,t},M_{*,t},\Sigma_{*,t}$ are
defined iteratively by the state evolution in \eqref{eq:SE}.
\end{theorem}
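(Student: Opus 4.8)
The plan is to induct on $t$, at each iteration comparing the EB-PCA iterates to those of the ``oracle'' Bayes AMP algorithm that uses the true priors $\bar{\pi}_*,\pi_*$, the Bayes posterior-mean denoisers $u_t,v_t$ of (\ref{eq:bayes_amp_denoisors}), and the deterministic state-evolution parameters, and then showing the two sets of iterates coincide asymptotically. For the oracle algorithm --- run with the PCA initialization and the convention of Remark~\ref{remark:initialization} --- the claimed empirical Wasserstein convergence and the state evolution (\ref{eq:SE}) hold by \cite[Appendix J]{montanari2017estimation}, since $u_t,v_t$ are Lipschitz by Assumption~\ref{assump:model}(d) and Corollary~\ref{cor:PCAgaussian} supplies the empirical convergence of the initialization. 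Writing $\bU^t_\star,\bV^t_\star,\bF^t_\star,\bG^t_\star$ for the oracle iterates, the target of the induction is that, almost surely for each fixed $t$,
\[
\tfrac1d\|\bV^t-\bV^t_\star\|_F^2,\ \ \tfrac1n\|\bF^t-\bF^t_\star\|_F^2,\ \ \tfrac1n\|\bU^t-\bU^t_\star\|_F^2,\ \ \tfrac1d\|\bG^{t+1}-\bG^{t+1}_\star\|_F^2 \;\to\; 0,
\]
together with $M_t\to M_{*,t}$, $\Sigma_t\to\Sigma_{*,t}$, $\bar{M}_t\to\bar{M}_{*,t}$, $\bar{\Sigma}_t\to\bar{\Sigma}_{*,t}$ and the weak consistency $\pi_t\to\pi_*$, $\bar{\pi}_t\to\bar{\pi}_*$. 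Given these, the three Wasserstein conclusions of the theorem follow from the oracle convergence and the triangle inequality for $W_2$ applied to the vanishing perturbations.

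For the base case, consistency of $\hat{s}_i^2$ for $s_i^2$ (from (\ref{eq:PCAsingval}) via Lemma~\ref{lemma:BBP}) gives $M_0\to M_*$ and $\Sigma_0\to\Sigma_*$; since $(\bV,\bG^0)=(\bV,\bG)\toW(V,G)$ by Corollary~\ref{cor:PCAgaussian}, Lemma~\ref{lemma:mNPMLE} applies --- its approximate-NPMLE hypothesis holds because $\pi_0$ is the exact maximizer over $\cP$ and the mixture density depends continuously on $(M,\Sigma)$ --- so $\pi_0\to\pi_*$ weakly, and Corollary~\ref{cor:empBayes_relativeError} gives $\tfrac1d\|\bV^0-\bV^0_\star\|_F^2\to0$. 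The inductive step then proceeds one half-iteration at a time. Assuming $(\bV,\bG^t)\toW(V,G_t)$, $\tfrac1d\|\bG^t-\bG^t_\star\|_F^2\to0$, and $M_t\to M_{*,t}$, $\Sigma_t\to\Sigma_{*,t}$: Lemma~\ref{lemma:mNPMLE} gives $\pi_t\to\pi_*$; Corollary~\ref{cor:empBayes_relativeError} together with the Lipschitz bound on $v_t$ gives $\tfrac1d\|\bV^t-\bV^t_\star\|_F^2\to0$; then $\bar{\Sigma}_t=(\bV^t)^\top\bV^t/n\to\gamma\,\bbE[v_t(G_t)v_t(G_t)^\top]=\bar{\Sigma}_{*,t}$ using the $W_2$-convergence and the at-most-quadratic growth of $g\mapsto v_t(g)v_t(g)^\top$, and $\bar{M}_t=\bar{\Sigma}_t\hat{S}\to\bar{\Sigma}_{*,t}S=\bar{M}_{*,t}$, where the last equality is precisely the Bayes identity (\ref{eq:SigmaMrelation}) that the algorithm exploits to avoid Gaussian integration. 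One then shows the Onsager coefficient $\langle\der\theta(\bG^t\mid M_t,\Sigma_t,\pi_t)\rangle$ converges to $\bbE[\der v_t(G_t)]$, and compares $\bF^t=\bY\bV^t-\bU^{t-1}\gamma\langle\cdot\rangle^\top$ with $\bF^t_\star$ using that $\|\bY\|_{\mathrm{op}}=O(1)$ almost surely (bounded noise operator norm plus low-rank signal), the induction hypothesis on $\bU^{t-1}$, the boundedness of the Onsager matrices, and this Onsager convergence; this yields $\tfrac1n\|\bF^t-\bF^t_\star\|_F^2\to0$, hence $(\bU,\bF^t)\toW(U,F_t)$. The second half of the iteration (producing $\bar{\pi}_t,\bU^t,\Sigma_{t+1},M_{t+1},\bG^{t+1}$) is entirely symmetric, again using Lemma~\ref{lemma:mNPMLE}, Corollary~\ref{cor:empBayes_relativeError}, the empirical moment identities, and the Onsager convergence, and closes the induction.

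The main obstacle is the Onsager convergence invoked above: controlling $\langle\der\theta(\bG^t\mid M_t,\Sigma_t,\pi_t)\rangle$ (and, analogously, the empirical moment matrices) when every ingredient of the denoiser is data-dependent --- the estimated prior $\pi_t$, the scalings $M_t,\Sigma_t$, and the point cloud $\bG^t$ at which the Jacobian is averaged --- and $\pi_t$ converges to $\pi_*$ only \emph{weakly}. I would split $\langle\der\theta(\bG^t\mid M_t,\Sigma_t,\pi_t)\rangle-\bbE[\der v_t(G_t)]$ into: (i) $\langle\der\theta(\bG^t\mid M_t,\Sigma_t,\pi_t)-\der v_t(\bG^t)\rangle$, handled by joint continuity of $\der\theta(g\mid M,\Sigma,\pi)$ in $(g,M,\Sigma,\pi)$ on compact sets --- available from the explicit Gaussian-mixture form, with the priors $\{\pi_t\}$ kept tight since $\pi_t\to\pi_*$ weakly --- combined with the uniform-over-a-weak-neighborhood bound on $\|\der\theta\|_{\mathrm{op}}$ from Assumption~\ref{assump:model}(d) to control the contribution of $g$-values outside a large compact set; and (ii) $\langle\der v_t(\bG^t)\rangle-\bbE[\der v_t(G_t)]$, which vanishes because $\der v_t$ is bounded and continuous (the Gaussian-mixture posterior mean is smooth) and $\bG^t\toW G_t$. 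The remaining bookkeeping --- propagating the $o(1)$ Frobenius-norm errors through the recursion without amplification --- is routine, since each multiplication by $\bY$ or $\bY^\top$ and each denoiser application is Lipschitz with an $O(1)$ constant.
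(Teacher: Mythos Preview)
Your proposal is correct and follows essentially the same route as the paper: an induction in $t$ that compares the EB-PCA iterates to the oracle Bayes AMP iterates in normalized Frobenius norm, using Lemma~\ref{lemma:mNPMLE} and Corollary~\ref{cor:empBayes_relativeError} for the denoiser step and then propagating the $o(1)$ errors through the AMP update. The paper organizes this as hypotheses $H_t,\bar H_t$ and packages your Onsager step~(i) into a separate result (Proposition~\ref{prop:surrogate_denoiser_derivative}, proved via the uniform convergence of $f_n,\der f_n,\der^2 f_n$ in Corollary~\ref{cor:derf_uniform_convergence}); it also records separately that each $\Sigma_{*,t},\bar\Sigma_{*,t}$ is invertible (Proposition~\ref{prop:states_non_degeneracy}), which you are implicitly using when invoking Lemma~\ref{lemma:mNPMLE} at iteration $t$.
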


As a corollary, the asymptotic squared-error risk for each EB-PCA iterate is
the same as that achieved by the oracle Bayes AMP algorithm with known priors.

\begin{corollary}\label{cor:EBPCArisk}
Let $\bar{M}_{*,t},\bar{\Sigma}_{*,t},M_{*,t},\Sigma_{*,t}$ be
defined iteratively by the state evolution in \eqref{eq:SE}.
Under Assumption \ref{assump:model}, almost surely for each fixed
iteration $t \in \{0,\ldots,T\}$,
\begin{align}\label{eq:EBPCAmmse}
\begin{aligned}
\frac{1}{n}\|\bU^t-\bU\|_F^2 \to \mmse(\bar{\pi}_* \mid \bar{M}_{*,t},
\bar{\Sigma}_{*,t}), \quad
\frac{1}{d}\|\bV^t-\bV\|_F^2 \to \mmse(\pi_* \mid M_{*,t}, \Sigma_{*,t}).
\end{aligned}
\end{align}
\end{corollary}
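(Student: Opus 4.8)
Per the sentence preceding the statement, the result is ``immediate'' from Corollary \ref{cor:empBayes_relativeError}; the plan is to spell this out. Fix $t\in\{0,\dots,T\}$. Theorem \ref{thm:EBPCA} supplies, almost surely, the empirical Wasserstein-$2$ convergence $(\bV,\bG^t)\toW(V,G_t)$ with $G_t\mid V\sim\cN(M_{*,t}\cdot V,\Sigma_{*,t})$ together with the weak consistency $\pi_t\to\pi_*$, and symmetrically $(\bU,\bF^t)\toW(U,F_t)$ and $\bar\pi_t\to\bar\pi_*$. I would then invoke Corollary \ref{cor:empBayes_relativeError} with $(\bTheta,\bX)=(\bV,\bG^t)$, $(M_n,\Sigma_n,\pi_n)=(M_t,\Sigma_t,\pi_t)$ and $(M_*,\Sigma_*,\pi_*)=(M_{*,t},\Sigma_{*,t},\pi_*)$. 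Two hypotheses remain to be verified. First, $\pi_t$ must be an approximate NPMLE for the limiting model in the $\liminf$ sense of Lemma \ref{lemma:mNPMLE}: this is automatic, since $\pi_t=\MLE(\bG^t\mid M_t,\Sigma_t,\cP)$ is the exact likelihood maximizer over $\cP$ and $\pi_*\in\cP$ by Assumption \ref{assump:model}(c), so the limiting normalized log-likelihood ratio against $\pi_*$ is at least $0$. Second, the empirical state parameters must be consistent, $M_t\to M_{*,t}$, $\Sigma_t\to\Sigma_{*,t}$, and likewise $\bar M_t\to\bar M_{*,t}$, $\bar\Sigma_t\to\bar\Sigma_{*,t}$, almost surely.

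This second item is the only point requiring real work, and it is most cleanly obtained as part of the induction proving Theorem \ref{thm:EBPCA}. Granting the $\toW$ convergence and the consistency of $M_t,\Sigma_t$ available from the preceding iteration, Corollary \ref{cor:empBayes_relativeError} gives $d^{-1}\|\bV^t-v_t^{\mathrm{or}}(\bG^t)\|_F^2\to 0$, where $v_t^{\mathrm{or}}=\theta(\,\cdot\mid M_{*,t},\Sigma_{*,t},\pi_*)$ is the oracle Bayes posterior mean and is Lipschitz by Assumption \ref{assump:model}(d); combined with $(\bV,\bG^t)\toW(V,G_t)$ this yields $\bar\Sigma_t=\gamma d^{-1}(\bV^t)^\top\bV^t\to\gamma\,\bbE[v_t^{\mathrm{or}}(G_t)v_t^{\mathrm{or}}(G_t)^\top]=\bar\Sigma_{*,t}$, using the definition of $\bar\Sigma_{*,t}$ in (\ref{eq:SE}). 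Consistency of the spiked singular values (Lemma \ref{lemma:BBP} and the inversion (\ref{eq:shat})) gives $\hat S\to S$, and for a Bayes posterior mean $\bbE[v_t^{\mathrm{or}}(G_t)V^\top]=\bbE[v_t^{\mathrm{or}}(G_t)v_t^{\mathrm{or}}(G_t)^\top]$, so (\ref{eq:SE}) collapses to (\ref{eq:SigmaMrelation}); hence $\bar M_t=\bar\Sigma_t\hat S\to\bar\Sigma_{*,t}S=\bar M_{*,t}$, and the identical reasoning applied to $\bF^t$ gives $\Sigma_{t+1}\to\Sigma_{*,t+1}$ and $M_{t+1}\to M_{*,t+1}$, which closes the induction (with the base case $t=0$ supplied by the plug-in consistency of Section \ref{sec:initdenoise} and Corollary \ref{cor:PCAgaussian}).

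With both hypotheses verified, Corollary \ref{cor:empBayes_relativeError} gives $d^{-1}\|\bV^t-\bV\|_F^2=d^{-1}\|\theta(\bG^t\mid M_t,\Sigma_t,\pi_t)-\bV\|_F^2\to\mmse(\pi_*\mid M_{*,t},\Sigma_{*,t})$, and the same argument run with $(\bU,\bF^t)$, $(\bar M_t,\bar\Sigma_t,\bar\pi_t)$ and $(\bar M_{*,t},\bar\Sigma_{*,t})$ gives $n^{-1}\|\bU^t-\bU\|_F^2\to\mmse(\bar\pi_*\mid\bar M_{*,t},\bar\Sigma_{*,t})$, which is (\ref{eq:EBPCAmmse}). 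I expect the main obstacle to be exactly the state-parameter consistency of the middle paragraph: it is where the empirical-Bayes variance estimates built into Algorithm \ref{alg:EBPCA} must be identified with the oracle state evolution (\ref{eq:SE}), and it relies on the self-consistency identity $\bbE[\theta(X)\Theta^\top]=\bbE[\theta(X)\theta(X)^\top]$ for Bayes posterior means $\theta(X)=\bbE[\Theta\mid X]$ together with the spectral consistency of $\hat S$.
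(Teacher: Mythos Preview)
Your proposal is correct and follows essentially the same route as the paper. The paper treats the corollary as immediate from Corollary~\ref{cor:empBayes_relativeError} once Theorem~\ref{thm:EBPCA} is in hand; the only input beyond the statement of Theorem~\ref{thm:EBPCA} is the state-parameter consistency $(M_t,\Sigma_t)\to(M_{*,t},\Sigma_{*,t})$ and $(\bar M_t,\bar\Sigma_t)\to(\bar M_{*,t},\bar\Sigma_{*,t})$, which the paper establishes as the hypotheses $H_t.1$ and $\bar H_t.1$ inside the inductive proof of Theorem~\ref{thm:EBPCA} (see Appendix~\ref{appendix:oracleSE}). Your middle paragraph recovers exactly this, with the minor cosmetic difference that you compare $\bV^t$ to the oracle denoiser applied to the EB-PCA iterate $\bG^t$, whereas the paper compares $\bV^t$ to the oracle Bayes AMP iterate $\tilde\bV^t=\theta(\tilde\bG^t\mid M_{*,t},\Sigma_{*,t},\pi_*)$; both arguments land on the same limit via the same $\PL(2)$ computation.
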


To study the decrease of these errors across iterations,
we follow \cite{miolane2017fundamental} and
introduce the positive-definite state matrices
\begin{align*}
\bar{Q}_{*,t}= \frac{1}{\gamma}{S}^{-1/2}\bar{M}_{*,t}^\top \bar{\Sigma}_{*,t}^{-1}
\bar{M}_{*,t}{S}^{-1/2},\quad
Q_{*,t}={S}^{-1/2}M_{*,t}^\top \Sigma_{*,t}^{-1} M_{*,t}{S}^{-1/2}.
\end{align*}
These are matrix-valued measures of the signal-to-noise ratios in the
compound decision models associated to $\bF^t$ and $\bG^t$ in each iteration.
For the standardized compound decision model
\[\Theta \sim \pi, \qquad X \mid \Theta \sim \cN(\Theta,\,Q^{-1})\]
(with $M=\Id$) parametrized by $\pi$ and $Q$, define the map
\[F_\pi(Q)=\bbE\big[\bbE[\Theta \mid X] \cdot \bbE[\Theta \mid X]^\top\big].\]
We verify in Appendix~\ref{sec:limiting_risk}
that the state evolution in \eqref{eq:SE} is equivalently expressed as
\begin{equation}\label{eq:SEQ}
\bar{Q}_{*,t}=F_{S^{1/2}\pi_*}(Q_{*,t}),
\quad Q_{*,t+1}=F_{S^{1/2}\bar{\pi}_*}(\gamma \cdot \bar{Q}_{*,t}),
\end{equation}
where ${S}^{1/2}\pi$ is the distribution of
${S}^{1/2} \cdot \Theta$ when $\Theta \sim \pi$.
The following shows that the state evolution converges to a
fixed point of these maps, the squared-error risks
of the EB-PCA iterates improve over the initial empirical Bayes
estimates, and these risks decrease monotonically over iterations.

\begin{proposition}\label{prop:SE}
Under Assumption \ref{assump:model}, for each $t=0,1,2,\ldots$
\begin{enumerate}[topsep=0pt,itemsep=-1ex,partopsep=1ex,parsep=1ex]
\item[(a)] $\bar{Q}_{*,t+1} \succeq \bar{Q}_{*,t}$ and
$Q_{*,t+1} \succeq Q_{*,t}$. Furthermore, as $t \to \infty$,
these matrices converge to a fixed point of
\begin{equation}\label{eq:fixedpoint}
\bar{Q}=F_{S^{1/2}{\pi}_*}(Q),
\qquad Q=F_{S^{1/2}\bar{\pi}_*}(\gamma \cdot \bar{Q} ).
\end{equation}
\item[(b)] Let $\mmse(\bar{\pi}_* \mid \bar{M}_*,\bar{\Sigma}_*)$ and
$\mmse(\pi_* \mid M_*,\Sigma_*)$ be the risks of the
initial empirical Bayes estimates in Corollary \ref{cor:PCAmmse}. Then
the asymptotic risks in \eqref{eq:EBPCAmmse} satisfy
\begin{align*}
\mmse(\bar{\pi}_* \mid \bar{M}_{*,t+1},\bar{\Sigma}_{*,t+1}) \leq \mmse(\bar{\pi}_* \mid \bar{M}_{*,t},\bar{\Sigma}_{*,t})\leq \mmse(\bar{\pi}_* \mid \bar{M}_*,\bar{\Sigma}_*), \\
\mmse(\pi_* \mid M_{*,t+1},\Sigma_{*,t+1})
\leq \mmse(\pi_* \mid M_{*,t},\Sigma_{*,t})\leq \mmse(\pi_* \mid M_*,\Sigma_*).
\end{align*}
\end{enumerate}
\end{proposition}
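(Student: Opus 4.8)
The plan is to carry out everything at the level of the reparametrized state evolution (\ref{eq:SEQ}) established in Appendix \ref{sec:reparametrize_states}, reducing both parts of the Proposition to two structural properties of the map $F_\pi$. Writing $F_\pi(Q)=\bbE[\Theta\Theta^\top]-\bbE[\mathrm{Cov}(\Theta\mid X)]$ by the law of total covariance, these are: \textbf{(i)} \emph{Loewner monotonicity}, that if $Q_1\succeq Q_2\succ 0$ then $F_\pi(Q_1)\succeq F_\pi(Q_2)$, equivalently the matrix-valued MMSE $\bbE[\mathrm{Cov}(\Theta\mid X)]$ is nonincreasing in $Q$; and \textbf{(ii)} \emph{domination of the linear map}, that $F_\pi(Q)\succeq F_\pi^{\mathrm{lin}}(Q)$, where $F_\pi^{\mathrm{lin}}$ is the second-moment map obtained on replacing the posterior mean by the best affine denoiser in the state evolution.

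For (i), since $Q_1\succeq Q_2\succ 0$ gives $Q_2^{-1}\succeq Q_1^{-1}$, the channel $X_2\mid\Theta\sim\cN(\Theta,Q_2^{-1})$ can be realized as $X_2=X_1+W$ with $X_1\mid\Theta\sim\cN(\Theta,Q_1^{-1})$ and $W\sim\cN(0,Q_2^{-1}-Q_1^{-1})$ independent, so $\Theta\to X_1\to X_2$ is a Markov chain; decomposing $\mathrm{Cov}(\Theta\mid X_2)$ by further conditioning on $X_1$ and taking expectations yields $\bbE[\mathrm{Cov}(\Theta\mid X_2)]\succeq\bbE[\mathrm{Cov}(\Theta\mid X_1)]$ --- the standard ``degradation cannot decrease MMSE'' argument, cf.\ \cite{miolane2017fundamental}. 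For (ii), the posterior mean minimizes mean-squared error in the Loewner order over all estimators, so the matrix-MMSE is $\preceq$ the error matrix of the best affine estimator, and subtracting from $\bbE[\Theta\Theta^\top]$ gives $F_\pi(Q)\succeq F_\pi^{\mathrm{lin}}(Q)$. For $\pi\in\{S^{1/2}\pi_*,S^{1/2}\bar{\pi}_*\}$ the normalization in Assumption \ref{assump:model}(c) gives $\bbE[\Theta\Theta^\top]=S$, and a Woodbury computation identifies the composed linear update as $\Psi^{\mathrm{lin}}(Q)=S-\big(S^{-1}+\gamma(S-(S^{-1}+Q)^{-1})\big)^{-1}$.

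Part (a) then follows by induction. The composed Bayes update $Q_{*,t+1}=\Psi(Q_{*,t})$ with $\Psi(Q)=F_{S^{1/2}\bar{\pi}_*}\big(\gamma\,F_{S^{1/2}\pi_*}(Q)\big)$ is Loewner monotone by (i), so it suffices to establish the base case $Q_{*,1}\succeq Q_{*,0}$; then $Q_{*,t+1}\succeq Q_{*,t}$ for all $t$, and applying $F_{S^{1/2}\pi_*}$ gives $\bar{Q}_{*,t+1}\succeq\bar{Q}_{*,t}$. For the base case, $\Psi(Q)\succeq\Psi^{\mathrm{lin}}(Q)$ for every $Q\succeq 0$ by (i)--(ii), while a direct computation from the BBP limits (\ref{eq:PCAsigma}) --- coordinatewise, since $S$ and $Q_{*,0}$ are diagonal with $i$-th entry of $Q_{*,0}$ equal to $(\gamma s_i^4-1)/(s_i(1+\gamma s_i^2))>0$ under super-criticality --- shows that $Q_{*,0}$ is exactly a fixed point of $\Psi^{\mathrm{lin}}$, which is the familiar statement that AMP with linear (rescaling) denoisers reproduces PCA. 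Hence $Q_{*,1}=\Psi(Q_{*,0})\succeq\Psi^{\mathrm{lin}}(Q_{*,0})=Q_{*,0}$. Since $F_\pi(Q)\preceq\bbE[\Theta\Theta^\top]$ always, both sequences lie in the compact Loewner interval $\{Q:Q_{*,0}\preceq Q\preceq S\}$ and are nondecreasing, hence converge; continuity of $F_\pi(\cdot)$ in $Q$ on this set (dominated convergence, using the finite second moment of $\cP$) identifies the limits as a fixed point of (\ref{eq:fixedpoint}).

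For part (b), the Gaussian channel relating $\bG^t$ to $\bV$ has matrix signal-to-noise ratio $M_{*,t}^\top\Sigma_{*,t}^{-1}M_{*,t}=S^{1/2}Q_{*,t}S^{1/2}$, so $\mmse(\pi_*\mid M_{*,t},\Sigma_{*,t})$ is the trace of the corresponding matrix-MMSE; by (i), $Q_{*,t+1}\succeq Q_{*,t}$ implies $S^{1/2}Q_{*,t+1}S^{1/2}\succeq S^{1/2}Q_{*,t}S^{1/2}$, which decreases this matrix in the Loewner order and hence decreases its trace, giving the first inequality, and the rightmost bound is the value at $t=0$, equal to $\mmse(\pi_*\mid M_*,\Sigma_*)$ by the definition of the initialization (Corollary \ref{cor:PCAmmse}). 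The argument for the $\bU$-side quantities $\bar{Q}_{*,t}$ and $\mmse(\bar{\pi}_*\mid\bar{M}_{*,t},\bar{\Sigma}_{*,t})$ is identical. I expect the main obstacle to be the base case $Q_{*,1}\succeq Q_{*,0}$, i.e.\ identifying the sample-PC/BBP state $Q_{*,0}$ with a fixed point of the linear-denoiser state evolution; properties (i), (ii) and the monotone-convergence argument are routine, and this is the only point at which the precise spiked-matrix limits of Lemma \ref{lemma:BBP} are used.
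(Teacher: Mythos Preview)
Your approach is essentially the paper's. Property (i) is cited there as \cite[Lemma~9]{miolane2017fundamental}, and property (ii) is exactly \cite[Theorem~2]{reeves2018mutual}: the Reeves bound $M_\pi(Q)\preceq(\operatorname{Cov}[\Theta]^{-1}+Q)^{-1}$ \emph{is} the linear-MMSE bound. The paper carries out the same diagonal computation for the base case; your reading of it as ``$Q_{*,0}$ is a fixed point of the linear state evolution $\Psi^{\mathrm{lin}}$'' is a clean interpretation the paper does not make explicit, but the underlying calculation is identical.

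There is one genuine gap in part (b). The $\bU$-side is \emph{not} symmetric with the $\bV$-side at the rightmost inequality. Since the algorithm initializes at $\bG^0=\bG$ and only afterward forms $\bF^0=\bY\bV^0-\cdots$, the state $(\bar{M}_{*,0},\bar{\Sigma}_{*,0})$ describing $\bF^0$ is \emph{not} equal to $(\bar{M}_*,\bar{\Sigma}_*)$ describing the sample left PCs $\bF$; so you cannot conclude $\mmse(\bar{\pi}_*\mid\bar{M}_{*,0},\bar{\Sigma}_{*,0})=\mmse(\bar{\pi}_*\mid\bar{M}_*,\bar{\Sigma}_*)$ ``by initialization'' as you do on the $\bV$-side. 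A separate argument is needed for the inequality $\mmse(\bar{\pi}_*\mid\bar{M}_{*,0},\bar{\Sigma}_{*,0})\leq\mmse(\bar{\pi}_*\mid\bar{M}_*,\bar{\Sigma}_*)$. This falls out of your own framework: a direct computation from (\ref{eq:PCAsigma}) shows that the SNR matrix $\tfrac{1}{\gamma}S^{-1/2}\bar{M}_*^\top\bar{\Sigma}_*^{-1}\bar{M}_*S^{-1/2}$ for $\bF$ equals $S-(S^{-1}+Q_{*,0})^{-1}=F^{\mathrm{lin}}_{S^{1/2}\pi_*}(Q_{*,0})$, which is $\preceq F_{S^{1/2}\pi_*}(Q_{*,0})=\bar{Q}_{*,0}$ by your property (ii). The paper makes exactly this observation to close part (b).
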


Finally, suppose $\bar{\pi}_*,\pi_*$ are such that
\eqref{eq:fixedpoint} has a unique fixed point $(\bar{Q},Q)$. In the Bayesian
setting where rows of $\bU$ and $\bV$ are random and i.i.d., Proposition 3 of
\cite{miolane2017fundamental} shows that this fixed
point characterizes the Bayes-optimal squared-error risk for estimating
$\bU{S}\bV^\top$:
\begin{multline*}
\lim_{n,d \to \infty}\left(\inf_{\hat{\bX}}
\frac{1}{nd}\bbE\Big[\|\hat{\bX}(\bY)-\bU{S}\bV^\top\|_F^2\Big]\right)
=\Tr \bbE_{(U,V) \sim \bar{\pi}_* \times \pi_*}
[UU^\top {S} VV^\top {S}]-\Tr \bar{Q}Q,
\end{multline*}
where the infimum is over all (measurable)
estimators $\hat{\bX}(\bY)$ and achieved at the Bayes estimator $\bbE[\bU S
\bV^\top \mid \bY]$. Exact computation of this Bayes estimator may be
intractable. The below verifies that the asymptotic error of
EB-PCA (and hence also of the oracle Bayes AMP algorithm)
approaches this Bayes-optimal error as $t \to \infty$. This type of result has 
been stated for rank $k=1$ in \cite{barbier2016mutual,montanari2017estimation}.

\begin{proposition}\label{prop:bayes_optimality}
Under Assumption \ref{assump:model}, suppose the fixed point
$(\bar{Q},Q)$ of \eqref{eq:fixedpoint} is unique. For any fixed $t \geq 1$,
let $(\bU^t,\bV^t)$ be the EB-PCA estimates of $(\bU,\bV)$ in iteration $t$
of Algorithm \ref{alg:EBPCA},
and let $\hat{S}=\diag(\hat{s}_1,\ldots,\hat{s}_k)$ be the estimate of $S$ from
\eqref{eq:shat}. Then almost surely as $n,d \to \infty$,
\begin{multline*}
\frac{1}{nd}\|\bU^t\hat{S}(\bV^t)^\top-\bU{S}\bV^\top\|_F^2
 \to \Tr \bbE_{(U,V) \sim \bar{\pi}_* \times \pi_*}[UU^\top {S} VV^\top {S}]-\Tr
\bar{Q}Q-o_t(1)
\end{multline*}
where $o_t(1)$ is a deterministic quantity satisfying $o_t(1) \to 0$
as $t \to \infty$.
\end{proposition}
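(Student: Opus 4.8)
\noindent\emph{Proof strategy.} The plan is to evaluate, for each fixed $t$, the almost-sure $n,d\to\infty$ limit of $\tfrac{1}{nd}\|\bU^t\hat{S}(\bV^t)^\top-\bU{S}\bV^\top\|_F^2$ in terms of the state-evolution matrices, and then to send $t\to\infty$ and recognize the Bayes-optimal error of \cite[Proposition 3]{miolane2017fundamental}. Expanding the square and using cyclicity of the trace,
\begin{align*}
\frac{1}{nd}\big\|\bU^t\hat{S}(\bV^t)^\top-\bU{S}\bV^\top\big\|_F^2
={}&\Tr\!\Big[\hat{S}\,\frac{(\bU^t)^\top\bU^t}{n}\,\hat{S}\,\frac{(\bV^t)^\top\bV^t}{d}\Big]
-2\,\Tr\!\Big[\hat{S}\,\frac{(\bU^t)^\top\bU}{n}\,{S}\,\frac{\bV^\top\bV^t}{d}\Big]\\
&+\Tr\!\Big[{S}\,\frac{\bU^\top\bU}{n}\,{S}\,\frac{\bV^\top\bV}{d}\Big],
\end{align*}
so it suffices to identify the limits of the six Gram matrices appearing here, together with $\hat{S}\to{S}$ (a consequence of Lemma \ref{lemma:BBP} and formula (\ref{eq:shat})).

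By Theorem \ref{thm:EBPCA} we have $(\bU,\bF^t)\toW(U,F_t)$ and $(\bV,\bG^t)\toW(V,G_t)$, and $\bar{\pi}_t,\pi_t$ converge weakly to $\bar{\pi}_*,\pi_*$; moreover the plug-in parameters $M_t,\Sigma_t,\bar{M}_t,\bar{\Sigma}_t$ converge to their state-evolution limits (this is contained in the proof of Theorem \ref{thm:EBPCA}, or is obtained from its conclusions by applying Corollary \ref{cor:empBayes_relativeError} at the preceding sub-step). Corollary \ref{cor:empBayes_relativeError} then yields $\tfrac1n\|\bU^t-u_t(\bF^t)\|_F^2\to 0$ and $\tfrac1d\|\bV^t-v_t(\bG^t)\|_F^2\to 0$ for the oracle posterior-mean denoisers $u_t=\theta(\cdot\mid\bar{M}_{*,t},\bar{\Sigma}_{*,t},\bar{\pi}_*)$ and $v_t=\theta(\cdot\mid M_{*,t},\Sigma_{*,t},\pi_*)$, which are Lipschitz by Assumption \ref{assump:model}(d). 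Since the test functions $(u,f)\mapsto u_t(f)u_t(f)^\top$ and $(u,f)\mapsto u_t(f)u^\top$, together with their $(v_t,V)$-analogues, then have at most quadratic growth, a Cauchy--Schwarz argument combined with the second-moment control supplied by $W_2$-convergence gives
\begin{align*}
\frac1n(\bU^t)^\top\bU^t\to\bbE[u_t(F_t)u_t(F_t)^\top]=\Sigma_{*,t+1},&\qquad
\frac1n(\bU^t)^\top\bU\to\bbE[u_t(F_t)U^\top]=\Sigma_{*,t+1},\\
\frac1d(\bV^t)^\top\bV^t\to\bbE[v_t(G_t)v_t(G_t)^\top]=\frac{\bar{\Sigma}_{*,t}}{\gamma},&\qquad
\frac1d(\bV^t)^\top\bV\to\bbE[v_t(G_t)V^\top]=\frac{\bar{\Sigma}_{*,t}}{\gamma},
\end{align*}
where we used the state evolution (\ref{eq:SE}) and the posterior-mean identities $\bbE[u_t(F_t)U^\top]=\bbE[u_t(F_t)u_t(F_t)^\top]$ and $\bbE[v_t(G_t)V^\top]=\bbE[v_t(G_t)v_t(G_t)^\top]$; likewise $\tfrac1n\bU^\top\bU\to\Id$ and $\tfrac1d\bV^\top\bV\to\Id$ by Assumption \ref{assump:model}(c), and $\tfrac1d\bV^\top\bV^t\to\bar{\Sigma}_{*,t}/\gamma$ by symmetry. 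Substituting into the expansion, the first two terms combine into $-\tfrac1\gamma\Tr[\Sigma_{*,t+1}{S}\bar{\Sigma}_{*,t}{S}]$ and we obtain
\[
\frac{1}{nd}\big\|\bU^t\hat{S}(\bV^t)^\top-\bU{S}\bV^\top\big\|_F^2
\;\longrightarrow\;\Tr[{S}^2]-\frac1\gamma\Tr\!\big[\Sigma_{*,t+1}{S}\bar{\Sigma}_{*,t}{S}\big].
\]

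Next I would reparametrize. The limiting denoisers are posterior means, so (\ref{eq:SigmaMrelation}) applies: $M_{*,t+1}=\Sigma_{*,t+1}{S}$ and $\bar{M}_{*,t}=\bar{\Sigma}_{*,t}{S}$. Substituting these into the definitions of $Q_{*,t+1}$ and $\bar{Q}_{*,t}$ gives $Q_{*,t+1}={S}^{1/2}\Sigma_{*,t+1}{S}^{1/2}$ and $\bar{Q}_{*,t}=\gamma^{-1}{S}^{1/2}\bar{\Sigma}_{*,t}{S}^{1/2}$, hence $\tfrac1\gamma\Tr[\Sigma_{*,t+1}{S}\bar{\Sigma}_{*,t}{S}]=\Tr[Q_{*,t+1}\bar{Q}_{*,t}]$. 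By Proposition \ref{prop:SE}(a), $Q_{*,t+1}$ and $\bar{Q}_{*,t}$ increase in the positive-semidefinite order to the assumed unique fixed point $(Q,\bar{Q})$, so $\Tr[Q_{*,t+1}\bar{Q}_{*,t}]\to\Tr[\bar{Q}Q]$; with $o_t(1):=\Tr[Q_{*,t+1}\bar{Q}_{*,t}]-\Tr[\bar{Q}Q]$, a deterministic quantity, this gives $o_t(1)\to0$ (indeed $o_t(1)\le0$, since the trace of a product of two positive-semidefinite matrices is monotone in each factor, reflecting that EB-PCA at finite $t$ is no better than Bayes-optimal). Finally, since $U$ and $V$ are independent with $\bbE[UU^\top]=\bbE[VV^\top]=\Id$, we have $\Tr\bbE_{(U,V)\sim\bar{\pi}_*\times\pi_*}[UU^\top{S}VV^\top{S}]=\bbE[(U^\top{S}V)^2]=\Tr[{S}^2]$, so the displayed limit equals $\Tr\bbE[UU^\top{S}VV^\top{S}]-\Tr\bar{Q}Q-o_t(1)$, which is the claim.

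The step I expect to require the most care is the passage from the Wasserstein convergences of Theorem \ref{thm:EBPCA} to the Gram-matrix limits: one must confirm that replacing each NPMLE denoiser by its oracle counterpart---controlled only in normalized Frobenius norm, not uniformly---does not disturb the limits of the quadratic-growth test functions (this is where Cauchy--Schwarz together with the uniform second-moment bounds of $W_2$-convergence enters), and one must verify, via the inductive bookkeeping already present in the proof of Theorem \ref{thm:EBPCA}, that every plug-in parameter feeding the denoisers at iteration $t$ is consistent for its state-evolution limit. Everything else is routine linear algebra with the state-evolution identities and Proposition \ref{prop:SE}.
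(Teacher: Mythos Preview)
Your proposal is correct and follows essentially the same approach as the paper: expand the squared Frobenius norm, identify the limits of the Gram matrices via Theorem \ref{thm:EBPCA} together with the comparison of the empirical denoisers to the oracle ones (which is the content of $\bar{H}_t.2$/$H_{t+1}.2$ in the paper's inductive proof of Theorem \ref{thm:EBPCA}), use the posterior-mean identity and the reparametrization $Q_{*,t+1}=S^{1/2}\Sigma_{*,t+1}S^{1/2}$, $\bar{Q}_{*,t}=\gamma^{-1}S^{1/2}\bar{\Sigma}_{*,t}S^{1/2}$, and conclude via Proposition \ref{prop:SE}(a). Your write-up is in fact more explicit than the paper's about why the Gram-matrix limits hold (the paper simply asserts them as a consequence of Theorem \ref{thm:EBPCA}), and your side remark that $o_t(1)\le 0$ is a correct extra observation.
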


For rank $k=1$, we refer readers to
\cite{barbier2016mutual,montanari2017estimation,lelarge2019fundamental} for
examples of priors $\bar{\pi}_*,\pi_*$ for which uniqueness of the fixed
point to \eqref{eq:fixedpoint} does and does not hold. It has been conjectured
that in examples where this fixed point is not unique, the asymptotic risk
corresponding to the fixed point that is reached by EB-PCA/oracle-Bayes AMP
is the smallest Bayes risk that is attainable by any estimator in polynomial
time \cite{lesieur2015phase,barbier2016mutual,lelarge2019fundamental,antenucci2019glassy}.

\section{Conclusion}\label{sec:conclusion}

We have described an EB-PCA procedure for performing PCA in high dimensions,
which couples classical empirical Bayes ideas with high-dimensional asymptotic
theory. In applications where the joint distribution of PCs has
non-Gaussian structure, EB-PCA can improve estimation accuracy by
obtaining a nonparametric estimate of this structure. 

EB-PCA is an example of a more general paradigm of carrying
out ``TAP-corrected'' variational Bayesian inference in high dimensions using
an empirical Bayes approach. The high dimensionality of the latent variable
space becomes a blessing in such problems, enabling the estimation of
complex and nonparametric prior distributions for these latent variables using
empirical Bayes ideas. This general paradigm may potentially be implemented
with other inference algorithms and extended to other inference problems.

We conclude with a discussion of a direction for future work. The
quantitative form of EB-PCA---and of AMP-based procedures more generally---is
derived assuming that the noise matrix $\bW$ has independent entries with
common variance. (To keep the proofs simple, we have also assumed that these
entries are Gaussian, but it is known that the Gaussian assumption can be
substantially weakened: universality results of this type have been shown for
the sample PCs in \cite{bao2018singular,bao2020statistical}, and for the AMP
state evolution in \cite{bayati2015universality,chen2020universality}.)

It is an open question to extend the procedure to settings
where $\bW$ has correlation structure, which is commonly reflected in data
by an overdispersed singular value distribution. One such setting
that is partially understood is that of bi-rotationally invariant matrices $\bW$
that satisfy the equality in law
$\bW\overset{L}{=}\bO^\top \bW\bQ$ for any orthogonal matrices
$\bO \in \bbR^{n \times n}$ and $\bQ \in \bbR^{d \times d}$. This reflects an
assumption that $\bW$ may have an arbitrary distribution of singular values
but ``generic'' singular vectors. The asymptotic behavior of sample PCs in this
setting was studied by \cite{benaych2012singular}, who provided the
quantitative forms for the limits \eqref{eq:PCAsigma}. In the rank-one case,
it was shown also in \cite{fan2020approximate} that the
Onsager correction in \eqref{eq:bayes_amp} and AMP state
evolution \eqref{eq:SE} are to be replaced by certain series expressions
defined by the free cumulants of $\bW$, which may be estimated from the 
singular value distribution of $\bY$.

Other models for $\bW$---for example general covariance models 
$\bW=\bZ\bB^{1/2}$ \cite{bai2012sample,bao2020statistical} or
separable covariance models
$\bW=\bA^{1/2}\bZ\bB^{1/2}$ \cite{yang2019edge,ding2019spiked} where
$\bZ \in \bbR^{n \times d}$ has i.i.d.\ entries---may also be studied. 
Developing variational Bayesian procedures that are both asymptotically
exact and computationally efficient in these and related models is an
interesting direction for future work, and we
believe it is likely that such developments may improve and robustify the
EB-PCA procedure, when they become available.

\section{Acknowledgments}

We thank Yixuan Ye, Chao Zhou and Hongyu Zhao for their help in collecting and interpreting data,
and Yihong Wu for helpful discussions
about the Kiefer-Wolfowitz NPMLE and ideas in the early stages of the
theoretical analyses. This research was supported in part by
NSF Grant DMS-1916198.

% !TEX root = ../EBPCA.tex

\newpage

\appendix

% !TEX root = ../supplement.tex

\def\ball{\mathbb{B}}
\newcommand{\bR}{\mathbf{R}}
\newcommand{\asto}{\overset{\mathrm{a.s.}}{\to}}

\section{Proof Preliminaries}\label{sec:preliminaries}
\paragraph{Notations and conventions.}

We reserve bold-face letters for vectors and matrices involving the increasing
dimensions $n$ and $d$. $\bu \otimes \bv=\bu\bv^\top$ is the outer product,
and $\bu^{\otimes 2}=\bu \otimes \bu$. For a function $f:\bbR^p \to \bbR^q$,
$\der f \in \bbR^{q \times p}$ is its Jacobian matrix. When $q=1$,
$\der f \in \bbR^{1 \times p}$ is the gradient as a row vector, $\der^2 f \in
\bbR^{p \times p}$ is the Hessian, and $\der^j f \in \bbR^{p \times \ldots
\times p}$ is the tensor of $j^\text{th}$ partial derivatives.

$\|\cdot\|$ is the Euclidean norm for vectors and Euclidean operator norm for
matrices and tensors. $\|\cdot\|_F$ is the Frobenius norm for matrices.
$\ball^k(B) = \{x \in \bbR^k:\|x\|\leq B\}$ is the closed
ball of radius $B$. For positive-definite $\Sigma$, $\phi_\Sigma(x)$ is the
Lebesgue density function of the distribution $\cN(0, \Sigma)$.
$f_{(M \cdot \pi)*\cN(0,\Sigma)}(x)$ is the marginal density of $X$ in the model
$\Theta \sim \pi$ and $X \mid \Theta \sim \cN(M \cdot \Theta,\Sigma)$.

\subsection{Wasserstein convergence and pseudo-Lipschitz functions}

\begin{definition}[Wasserstein convergence]
Let $\pi_n,\pi$ be probability measures on $\bbR^k$ with finite second moment.
Then $\pi_n$ converges to $\pi$ in the Wasserstein-2 distance as $n \to \infty$
if, for any $C>0$ and continuous function $f:\mathbb{R}^k \to \mathbb{R}$
satisfying 
	\begin{align}\label{eq:polynomial_growth}
	f(u) \leq C(1+\|u\|^2) \qquad \text{ for all } u \in \bbR^k,
	\end{align}
we have $\mathbb{E}_{U \sim \pi_n}[f(U)] \to \mathbb{E}_{U \sim \pi}[f(U)]$.
\end{definition}
For a matrix $\bU \in \bbR^{n \times k}$, we write $\bU \toW \pi$ if the
empirical distribution of rows of $\bU$ converges in Wasserstein-2 distance to
$\pi$. We denote this also as $\bU \toW U$, for $U \sim \pi$.

\begin{definition}[Pseudo-Lipschitz functions]
A function $\psi:\mathbb{R}^k \to \mathbb{R}$ is 2-pseudo-Lipschitz if
there exists a constant $L>0$ such that for all $u,u'\in \mathbb{R}^k$,
	\begin{align}\label{eq:PL_definition}
		|\psi(u) - \psi(u')| \leq L(1 + \|u\| + \|u'\|)\|u-u'\|.
	\end{align}
We write $\psi \in \PL(2)$.
\end{definition}

Note that any $\psi \in \PL(2)$ satisfies 
$\psi(u) \leq \psi(0)+L(1+\|u\|)\|u\| \leq C(1+\|u\|)^2$, so if $\bU \toW U$
and $\bU$ has rows $u_1,\ldots,u_n$, then
\[n^{-1}\sum_{i=1}^n \psi(u_i) \to \bbE[\psi(U)].\]

\subsection{Data normalization}\label{sec:normalization}

Consider the model
\[\bY_\text{obs}=\frac{1}{n} \cdot \bU S \bV^\top+\bW\]
where $\bU S \bV^\top$ is any matrix of rank $k$, and
$w_{ij} \overset{iid}{\sim} \cN(0,\tau^2/n)$.
We show that the estimate $\hat{\tau}^2$ in (\ref{eq:tausq}) of the Main Text
is consistent for $\tau^2$, for fixed $k$ as $n,d \to \infty$ (at any rates, and
not necessarily proportionally).

Let $\lambda_1(\bY_{\text{obs}}) \geq \ldots \geq \lambda_d(\bY_{\text{obs}})$ be
the singular values of $\bY_{\text{obs}}$,
where the last $d-n$ are 0 if $d>n$. Then
\[\hat{\tau}^2=\frac{1}{d}\sum_{i=k+1}^d \lambda_i(\bY_{\text{obs}})^2.\]
Let $\lambda_1(\bW) \geq \ldots \geq \lambda_d(\bW)$ be the singular values of
$\bW$. By Weyl's inequality,
\[\lambda_i(\bY_{\text{obs}}) \leq \lambda_{i-k}(\bW) \text{ for all } i \geq
k+1, \qquad \lambda_i(\bY_{\text{obs}}) \geq \lambda_{i+k}(\bW)
\text{ for all } i \leq d-k.\]
Applying this to the above form of $\hat{\tau}^2$,
\[\frac{1}{d}\|\bW\|_F^2-\frac{1}{d}\sum_{i=1}^{2k} \lambda_i(\bW)^2
=\frac{1}{d}\sum_{i=2k+1}^d \lambda_i(\bW)^2 \leq
\hat{\tau}^2 \leq \frac{1}{d} \sum_{i=1}^{d-k} \lambda_i(\bW)^2
\leq \frac{1}{d}\|\bW\|_F^2.\]
Since $w_{ij} \overset{iid}{\sim} \cN(0,\tau^2/n)$, we have $d^{-1}\|\bW\|_F^2=d^{-1}\sum_{i,j}
w_{ij}^2 \to \tau^2$ a.s.\ as $n,d \to \infty$ by the law of large numbers.
We have also $\lambda_i(\bW) \leq \lambda_1(\bW) \leq C(1+\sqrt{d/n})$
a.s.\ for an absolute constant $C>0$ and all large $n$ and $d$,
see e.g.\ Theorem 4.4.5 of \cite{vershynin2018high}. So
$d^{-1}\sum_{i=1}^{2k} \lambda_i(\bW)^2 \to 0$,
and $\hat{\tau}^2 \to \tau$ a.s.

\subsection{Singular vectors of spiked random matrices}

The following is a special case of
\cite[Theorem 2.9]{benaych2012singular}. An analogue
in the spiked covariance model was first
established in \cite{paul2007asymptotics}.

\begin{lemma}\label{lemma:BBP}
Under Assumption \ref{assump:model}, for each $i=1,\ldots,k$ and some
choices of signs for $\bff_i$ and $\bg_i$, a.s.\ as $n,d \to \infty$,
\begin{gather*}
\sqrt{\gamma} \cdot \lambda_i \to \sqrt{(\gamma s_i^2+1)
(s_i^2+1)/s_i^2},\\
\begin{aligned}
n^{-1} \bff_i^\top \bu_i \to \bar{\mu}_{*,i}, \quad
d^{-1} \bg_i^\top \bv_i \to \mu_{*,i}
\end{aligned}
\end{gather*}
for the values $\bar{\mu}_{*,i}$ and $\mu_{*,i}$ defined in \eqref{eq:PCAsigma}.
Furthermore, $n^{-1}\bff_i^\top \bu_j \to 0$
and $d^{-1}\bg_i^\top \bv_j \to 0$ a.s.\ for
all $j \in \{1,\ldots,k\} \setminus \{i\}$.
\end{lemma}
\begin{proof}
Recall
\[\bY=\frac{1}{n}\bU S\bV^\top+\bW.\]
Write $\tilde{\bU}=(\tilde{\bu}_1,\ldots,\tilde{\bu}_k)
\in \mathbb{R}^{n \times k}$ for the Gram-Schmidt
orthogonalization of $\bU$, with columns scaled such that
$n^{-1}\|\tilde{\bu}_i\|^2=1$. Assumption \ref{assump:model}(c) implies that
$n^{-1}\bU^\top \bU \to \Id_{k \times k}$, from which it may be verified that
the Gram-Schmidt procedure yields
\begin{equation}\label{eq:gramschmidtU}
n^{-1}\|\bU-\tilde{\bU}\|_F^2 \to 0.
\end{equation}
Similarly, letting $\tilde{\bV}=(\tilde{\bv}_1,\ldots,\tilde{\bv}_k)
\in \mathbb{R}^{d \times k}$ be the Gram-Schmidt
orthogonalization of $\bV$ scaled such that $d^{-1}\|\tilde{\bv}_i\|^2=1$,
we have 
\begin{equation}\label{eq:gramschmidtV}
d^{-1}\|\bV-\tilde{\bV}\|_F^2 \to 0.
\end{equation}
Define $\tilde{\bY}=n^{-1}\tilde{\bU} S\tilde{\bV}^\top+\bW$, and denote its
leading rank-$k$ singular component as
\[n^{-1}\tilde{\bF}\tilde{\Lambda} \tilde{\bG}^\top
=\sum_{i=1}^n \frac{\tilde{\lambda}_i}{n} \cdot
\tilde{\bff}_i\tilde{\bg}_i^\top.\]
Then \eqref{eq:gramschmidtU} and \eqref{eq:gramschmidtV} together imply
$\|\tilde{\bY}-\bY\| \to 0$ in operator norm, so that by the condition of
distinct singular values in Assumption \ref{assump:model}(b) and the Weyl and
Davis-Kahan inequalities,
\[\|S-\tilde{S}\| \to 0, \qquad n^{-1}\|\bF-\tilde{\bF}\|_F^2 \to 0,
\qquad d^{-1}\|\bG-\tilde{\bG}\|_F^2 \to 0.\]
Combining this with \eqref{eq:gramschmidtU} and \eqref{eq:gramschmidtV} and
applying Cauchy-Schwarz,
\[\lim_{n,d \to \infty} n^{-1} \bU^\top \bF
=\lim_{n,d \to \infty} n^{-1}\tilde{\bU}^\top \tilde{\bF},
\qquad \lim_{n,d \to \infty} d^{-1} \bV^\top \bG
=\lim_{n,d \to \infty} d^{-1}\tilde{\bV}^\top \tilde{\bG}.\]

\cite[Theorem 2.9]{benaych2012singular} applies to describe the almost sure
limits of $\tilde{S}$,
$n^{-1}\tilde{\bU}^\top \tilde{\bF}$, and $d^{-1}\tilde{\bV}^\top
\tilde{\bG}$. The latter two limits are diagonal with the diagonal entries
\[n^{-1}\tilde{\bu}_i^\top \tilde{\bff}_i
\to \sqrt{1-\frac{(\gamma+\gamma s_i^2)}{\gamma s_i^2(\gamma s_i^2+1)}},
\qquad d^{-1}\tilde{\bv}_i^\top \tilde{\bg}_i
\to \sqrt{1-\frac{\gamma(1+\gamma s_i^2)}{\gamma s_i^2(\gamma s_i^2+\gamma)}}.\]
These formulas are given by \cite[Eqs.\ 10--11]{benaych2012singular}
when $\bW$ is Gaussian, with the notational identifications $c
\leftrightarrow \gamma$, $\theta_i \leftrightarrow \sqrt{\gamma} \cdot
\tilde{s}_i$, and a swap of $\bu$ with $\bv$ because $\bW$ in 
\cite{benaych2012singular} is scaled to have variance $1/d$ rather than $1/n$.
We have replaced in these limits $\tilde{s}_i$ by $s_i$, using the above
convergence $\|S-\tilde{S}\| \to 0$. These limits are exactly $\bar{\mu}_{*,i}$
and $\mu_{*,i}$ from \eqref{eq:PCAsigma}, establishing the lemma.
\end{proof}

As the columns of $\bF$ and $\bG$ are only defined up to an arbitrary sign, we
henceforth fix these signs to match those in Lemma \ref{lemma:BBP}.
We now prove Proposition~\ref{prop:PCAgaussian} of the Main Text using the
results of \cite{benaych2012singular}.

\begin{proof}[Proof of Proposition~\ref{prop:PCAgaussian}]
We show the result for $(\bV,\bG)$; the statement for $(\bU,\bF)$ is analogous.
Let $\tilde{\bV}=(\tilde{\bv}_1,\ldots,\tilde{\bv}_k) \in
\mathbb{R}^{d \times k}$ be the Gram-Schmidt orthogonalization of $\bV$,
satisfying $d^{-1}\|\tilde{\bv}_i\|^2=1$ and \eqref{eq:gramschmidtV}. The 
projections onto and orthogonal to the column span of $\bV$ are given by
$\bP_\parallel=\tilde{\bV}\tilde{\bV}^\top/d$ and
$\bP_\perp=\Id-\tilde{\bV}\tilde{\bV}^\top/d$, respectively. Note that
\eqref{eq:gramschmidtV} and Lemma \ref{lemma:BBP} imply, almost surely,
\[\lim_{n,d \to \infty} d^{-1}\tilde{\bV}^\top \bG
=\lim_{n,d \to \infty} d^{-1}\bV^\top \bG=M_*^\top.\]
Then, since $d^{-1}\bG^\top \bG=\Id$, we have
\[d^{-1}(\bP_\perp\bG)^\top (\bP_\perp\bG) \to
\Id-M_*M_*^\top=\Sigma_* \succ 0.\]
This shows that $\bP_\perp\bG$ has full column rank $k$ almost surely for all
large $n,d$.

Now let $\tilde{\bV}_\perp=(\tilde{\bv}_1^\perp,\ldots,\tilde{\bv}_k^\perp)
\in \mathbb{R}^{d \times k}$ be the Gram-Schmidt orthogonalization of
$\bP_\perp\bG$, normalized so that $d^{-1}\|\tilde{\bv}_i^\perp\|^2=1$.
Then we may represent
	\begin{align}\label{eq:G_decomposition}
		\bG=\bP_\parallel\bG+\bP_\perp\bG=\tilde{\bV}\Omega
+\tilde{\bV}_\perp \Omega_\perp, \qquad \Omega=d^{-1}\tilde{\bV}^\top \bG,
\qquad \Omega_\perp=d^{-1}\tilde{\bV}_\perp^\top \bG.
	\end{align}
Here, almost surely
\begin{equation}\label{eq:Omegalim}
\Omega \to M_*^\top, \qquad \Omega_\perp^\top \Omega_\perp
=d^{-1}(\bP_\perp\bG)^\top (\bP_\perp\bG) \to \Sigma_*
\end{equation}
so $\Omega_\perp$ is invertible for all large $n,d$.
Let $\bQ \in \mathbb{R}^{d \times d}$ be any orthogonal matrix such that
$\bQ\tilde{\bV}=\tilde{\bV}$. Then also $\bQ\bV=\bV$, and
we have the equality in law
\[\bY \bQ^\top=\frac{1}{n}\bU S \bV^\top \bQ^\top + \bW \bQ^\top
		\overset{L}{=} \frac{1}{n} \bU S \bV^\top+ \bW
		= \bY.\]
Since $\bG$ contains the leading $k$ right singular vectors of $\bY$, this
implies $\bQ\bG\overset{L}{=}\bG$, so
\[\tilde{\bV}\Omega+\bQ\tilde{\bV}_{\perp}\Omega_\perp
=\bQ\bG \overset{L}{=}\bG=\tilde{\bV}\Omega+\tilde{\bV}_{\perp}\Omega_\perp.\]
Then $\tilde{\bV}_\perp$ satisfies the rotational invariance in law
$\bQ\tilde{\bV}_{\perp}\overset{L}{=}\tilde{\bV}_{\perp}$ for any such $\bQ$,
so it is Haar-uniformly distributed over the Stiefel manifold $\{\tilde{\bV} \in
\mathbb{R}^{d \times k}:\bV^\top \tilde{\bV}_\perp=0,\;
d^{-1}\tilde{\bV}_\perp^\top \tilde{\bV}_\perp=\Id\}$. Letting
$\bZ \in \mathbb{R}^{d\times k}$ be a random matrix with i.i.d.\ entries $Z_{ij}
\sim \cN(0,1)$, this means that we can construct $\tilde{\bV}_\perp=\bP_\perp \bZ \Sigma^{-1/2}$, where $\Sigma=\bZ^\top \bP_\perp \bZ/d$.
Then
\[\bG=\tilde{\bV}\Omega+\bP_\perp \bZ \Sigma^{-1/2}\Omega_\perp.\]

By the law of large numbers, $\Sigma \to \Id_{k \times k}$
and $d^{-1}\|\bP_\perp \bZ-\bZ\|_F^2=d^{-1}\|\tilde{\bV}(\tilde{\bV}^\top
\bZ)/d\|_F^2 \to 0$ almost surely. Thus
\[d^{-1}\Big\|\bG-\tilde{\bV}\Omega-\bZ \Omega_\perp\Big\|_F^2 \to 0.\]
Applying \eqref{eq:gramschmidtV} and \eqref{eq:Omegalim}, we then have
$(\bV,\bG) \toW (V,G)$ where $G=M_*V+\Sigma_*^{1/2}Z$ and
$Z \sim \cN(0,\Id)$, as desired.
\end{proof}

\section{Analysis of the NPMLE}\label{sec:npmle}
\newcommand{\toas}{\overset{\textnormal{a.s.}}{\to}}
\def\ball{\mathbb{B}}

For $M \in \bbR^{k \times k}$ invertible and
$\Sigma \in \bbR^{k \times k}$ symmetric positive-definite, consider
the compound decision model
$\Theta \sim \pi$ and $X \mid \Theta \sim \cN(M \cdot \Theta,\,\Sigma)$.
Denote the Lebesgue density function of $\cN(0,\Sigma)$ by
\[\phi_\Sigma(x)=\frac{1}{(2\pi)^{k/2}|\Sigma|^{1/2}}
\exp\Big(-\frac{1}{2} \cdot x^\top \Sigma^{-1} x \Big).\]
Then the marginal density of $X$ is
\begin{equation}\label{eq:Xmarginal}
f_{(M \cdot \pi) * \cN(0,\Sigma)}(x)
=\bbE\Big[\phi_\Sigma(x-M\Theta)\Big]
=\int \phi_\Sigma(x-M\theta)\der \pi(\theta).
\end{equation}
Noting that $\der \phi_\Sigma(x)=-x^\top \Sigma^{-1} \cdot \phi_\Sigma(x)$, we have
\begin{align*}
\frac{\der f^\top_{(M \cdot \pi) * \cN(0,\Sigma)}}
{f_{(M \cdot \pi) * \cN(0,\Sigma)}}(x)
&=\frac{\bbE[-\Sigma^{-1}(x-M\Theta) \cdot \phi_\Sigma(x-M\Theta)]}
{\bbE[\phi_\Sigma(x-M\Theta)]}\\
&=-\Sigma^{-1}x+\Sigma^{-1}M \cdot
\frac{\bbE[\Theta \cdot \phi_\Sigma(x-M\Theta)]}
{\bbE[\phi_\Sigma(x-M\Theta)]}.
\end{align*}
Thus we obtain Tweedie's form of the posterior mean denoiser,
\begin{equation}\label{eq:tweedie}
\theta(x \mid M,\Sigma,\pi)
=\frac{\bbE[\Theta \cdot \phi_\Sigma(x-M\Theta)]}
{\bbE[\phi_\Sigma(x-M\Theta)]}
=M^{-1}\bigg(x+\Sigma \cdot
\frac{\der f^\top_{(M \cdot \pi) * \cN(0,\Sigma)}}
{f_{(M \cdot \pi) * \cN(0,\Sigma)}}(x)\bigg).
\end{equation}

Assumption \ref{assump:model}(d) guarantees a uniform Lipschitz property of
$\theta(\cdot)$ in a neighborhood of $\pi_*$. This in fact implies a uniform
Lipschitz property in a neighborhood of $(M_*,\Sigma_*,\pi_*)$---we record this
fact in the following proposition.

\begin{proposition}\label{prop:uniformlipschitz}
Under Assumption~\ref{assump:model}(d), there is an open neighborhood $O$ of
$(M_*,\Sigma_*,\pi_*)$ (weakly with respect to $\pi_*$) such that
$\theta(x \mid M,\Sigma,\pi)$ is Lipschitz in $x$ uniformly over $(M,\Sigma,\pi)
\in O$.
\end{proposition}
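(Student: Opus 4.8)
The plan is to eliminate the dependence on $(M,\Sigma)$ by absorbing a perturbation of these parameters into a linear reparametrization of the prior, reducing matters to the fixed reference pair $(M_*,\Sigma_*)$, for which Assumption~\ref{assump:model}(d) already applies. In the compound decision model $\Theta\sim\pi$, $X\mid\Theta\sim\cN(M\Theta,\Sigma)$, write $X=M\Theta+\Sigma^{1/2}Z$ with $Z\sim\cN(0,\Id)$ and apply the invertible linear transformation $x\mapsto Cx$ where $C=C(\Sigma):=\Sigma_*^{1/2}\Sigma^{-1/2}$. Setting also $B=B(M,\Sigma):=M_*^{-1}\Sigma_*^{1/2}\Sigma^{-1/2}M$, one checks that $CX\mid\Theta\sim\cN\!\big(M_*\cdot B\Theta,\;\Sigma_*\big)$, so, writing $B_\#\pi$ for the law of $B\Theta$ when $\Theta\sim\pi$ and using $\bbE[\Theta\mid X]=B^{-1}\bbE[B\Theta\mid X]=B^{-1}\bbE[B\Theta\mid CX]$, we obtain the identity
\begin{equation}\label{eq:reparam}
\theta(x\mid M,\Sigma,\pi)=B^{-1}\,\theta\big(Cx\;\big|\;M_*,\Sigma_*,\;B_\#\pi\big).
\end{equation}
Here $B=C=\Id$ and $B_\#\pi=\pi$ at $(M,\Sigma,\pi)=(M_*,\Sigma_*,\pi_*)$, and $B_\#\pi$ has finite second moment whenever $\pi$ does, so the right-hand side is well defined.

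By Assumption~\ref{assump:model}(d) applied at $(M_*,\Sigma_*,\pi_*)$, there exist a weakly open set $O_0\ni\pi_*$ and a finite $L_0$ such that $x\mapsto\theta(x\mid M_*,\Sigma_*,\rho)$ is $L_0$-Lipschitz for every $\rho\in O_0$. By (\ref{eq:reparam}), whenever $(M,\Sigma,\pi)$ is such that $B_\#\pi\in O_0$, the map $x\mapsto\theta(x\mid M,\Sigma,\pi)$ is Lipschitz with constant $\|B^{-1}\|\,L_0\,\|C\|$. The matrices $B^{-1}$ and $C$ depend continuously on $(M,\Sigma)$ and equal $\Id$ at $(M_*,\Sigma_*)$, so $\|B^{-1}\|\,\|C\|$ is bounded on a neighborhood of $(M_*,\Sigma_*)$. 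Hence it suffices to exhibit a neighborhood of $(M_*,\Sigma_*,\pi_*)$, with $\pi$ in the topology of weak convergence, on which $B_\#\pi\in O_0$; equivalently, it suffices to show that the map $(M,\Sigma,\pi)\mapsto B(M,\Sigma)_\#\pi$ is continuous at $(M_*,\Sigma_*,\pi_*)$, since then the preimage of the open set $O_0$ contains such a neighborhood, and intersecting it with a small ball around $(M_*,\Sigma_*)$ in the matrix coordinates produces the set $O$ claimed in the proposition.

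The one non-routine point is this continuity: the linear map $B$ is unbounded on $\bbR^k$, so a priori weak convergence $\pi_n\to\pi_*$ need not survive the pushforward. I would resolve this by a tightness argument. If $(M_n,\Sigma_n)\to(M_*,\Sigma_*)$ then $B_n:=B(M_n,\Sigma_n)\to\Id$ by continuity of matrix inversion and of the matrix square root on positive-definite matrices, and if $\pi_n\to\pi_*$ weakly then $\{\pi_n\}$ is tight by Prokhorov's theorem. For a bounded continuous test function $g$, I would bound $\big|\int g\,\der((B_n)_\#\pi_n)-\int g\,\der\pi_*\big|$ by $\big|\int(g(B_n\theta)-g(\theta))\,\der\pi_n(\theta)\big|+\big|\int g\,\der\pi_n-\int g\,\der\pi_*\big|$; the second term tends to $0$ by weak convergence, and for the first I fix $\eps>0$, pick $R$ with $\sup_n\pi_n(\ball^k(R)^c)<\eps$, bound the integrand by $2\|g\|_\infty$ off $\ball^k(R)$, and bound it on $\ball^k(R)$ by the modulus of continuity of $g$ restricted to a fixed compact ball (one that eventually contains all points $B_n\theta$ with $\|\theta\|\le R$, since $B_n\to\Id$), evaluated at $\|B_n-\Id\|\,R\to0$. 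Letting $n\to\infty$ and then $\eps\to0$ shows the first term also vanishes. Since the space of probability measures on $\bbR^k$ with the weak topology is metrizable, this sequential continuity is continuity, which completes the argument. (The same reduction works verbatim if one instead reparametrizes to the pair $(\Id,\Id)$, which is sometimes notationally lighter.)
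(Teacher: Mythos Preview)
Your proof is correct and follows essentially the same route as the paper: both establish the identity $\theta(x\mid M,\Sigma,\pi)=B^{-1}\theta(Cx\mid M_*,\Sigma_*,B_\#\pi)$ with $C=\Sigma_*^{1/2}\Sigma^{-1/2}$ and $B=M_*^{-1}CM$, and then invoke Assumption~\ref{assump:model}(d) at the fixed pair $(M_*,\Sigma_*)$. The only difference is that you spell out in full the continuity of the pushforward map $(M,\Sigma,\pi)\mapsto B(M,\Sigma)_\#\pi$ at $(M_*,\Sigma_*,\pi_*)$ via a tightness argument, whereas the paper asserts this in a single sentence; your extra care here is warranted but does not change the strategy.
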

\begin{proof}
For any $(M,\Sigma,\pi) \in O$, define
$\tilde{\pi}=M_*^{-1} \Sigma_*^{1/2}\Sigma^{-1/2} M \cdot \pi$. Then
\begin{align*}
\theta(x \mid M, \Sigma,\pi) &=
\bbE_{\Theta\sim \pi}[\Theta\mid M\Theta +\Sigma^{1/2} Z=x]\\
&=\bbE_{\Theta\sim \pi}\Big[\Theta\;\Big|\;
\Sigma^{1/2}_{*}\Sigma^{-1/2} M \Theta +
\Sigma^{1/2}_*Z=\Sigma^{1/2}_{*}\Sigma^{-1/2} x\Big]\\
&=\underbrace{(M_*^{-1}\Sigma_*^{1/2}\Sigma^{-1/2} M)^{-1}}_{K_1}
\bbE_{\Theta\sim \tilde{\pi}}\Big[\Theta \;\Big|\; M_*\Theta + \Sigma_*^{1/2}
Z=\underbrace{\Sigma_*^{1/2} \Sigma^{-1/2}}_{K_2} x\Big] \\
&= K_1 \theta(K_2 x \mid M_*,\Sigma_*, \tilde{\pi}).
\end{align*}
Since $\theta(x \mid M_*,\Sigma_*,\tilde{\pi})$ is Lipschitz in $x$ uniformly
over all $\tilde{\pi}$ in an open neighborhood of $\pi_*$ by
Assumption \ref{assump:model}(d), the result follows for a sufficiently small
neighborhood $O$ of $(M_*,\Sigma_*,\pi_*)$.
\end{proof}

\subsection{Consistency of the NPMLE}

The entrywise Gaussian approximations of \eqref{eq:PCAgaussianapprox} and
\eqref{eq:AMPgaussianapprox} for the sample PCs and AMP iterates will hold in
the sense of empirical Wasserstein convergence as $n,d \to \infty$. Thus we
first show a consistency result for the NPMLE in this setting, where
$\bX \mid \bTheta$ only approximately follows this Gaussian model.

\begin{lemma}\label{lemma:mNPMLE}
Fix $k \geq 1$, $M_* \in \bbR^{k \times k}$ non-singular, and $\Sigma_* \in
\bbR^{k \times k}$ symmetric positive-definite. Let $\cP$ be a class of
distributions on $\bbR^k$ satisfying Assumption~\ref{assump:model}, and fix
$\pi_* \in \cP$.
As $n \to \infty$, let $\bTheta,\bX \in \bbR^{n \times k}$ satisfy
\begin{equation}\label{eq:ThetaXconvergence}
(\bTheta,\bX) \toW (\Theta,X)
\end{equation}
where $\Theta \sim \pi_*$ and $X \mid \Theta \sim \cN(M_* \cdot
\Theta,\;\Sigma_*)$, and let $M_n,\Sigma_n \in \bbR^{k \times k}$ satisfy
$M_n \to M_*$ and $\Sigma_n \to \Sigma_*$.

Let $x_1,\ldots,x_n$ be the rows of $\bX$. Suppose $\pi_n \in \cP$ is any
approximate NPMLE in the sense
\begin{equation}\label{eq:approxNPMLE}
\liminf_{n \to \infty} \frac{1}{n}\sum_{i=1}^n
\log \frac{f_{(M_n \cdot \pi_n) * \cN(0,\Sigma_n)}(x_i)}
{f_{(M_* \cdot \pi_*) * \cN(0,\Sigma_*)}(x_i)} \geq 0.
\end{equation}
Then $\pi_n$ converges weakly to $\pi_*$ as $n \to \infty$.
\end{lemma}

This has the following oracle implication for the corresponding
empirical Bayes estimator of $\bTheta$: 

\begin{corollary}\label{cor:empBayes_relativeError}
In the setting of Lemma \ref{lemma:mNPMLE}, 
let $\theta(\bX \mid M_*,\Sigma_*,\pi_*)$ be the Bayes posterior mean function
as defined in \eqref{eq:multivariatedenoise}. Then
\begin{align}\label{eq:eb_optimal_bayes_risk}
	\frac{1}{n} \|\theta(\bX\mid M_*,\Sigma_*, \pi_*) - \theta(\bX\mid
M_n,\Sigma_n, \pi_n)\|_F^2 \to 0.
\end{align}
Consequently
\[\frac{1}{n}\|\theta(\bX\mid M_n,\Sigma_n, \pi_n)-\bTheta\|_F^2 \to
\mmse(\pi_* \mid M_*, \Sigma_*).\]
\end{corollary}

Thus the empirical Bayes estimate of $\bTheta$ achieves the oracle asymptotic
Bayes risk of the true prior $\pi_*$, as
long as the empirical convergence $(\bTheta,\bX) \toW (\Theta,X)$ holds, and
$M_n,\Sigma_n$ are consistent estimates of the parameters $M_*,\Sigma_*$
describing the conditional Gaussian law of $X \mid \Theta$.

The proofs of these results follow the ideas of
\cite{zhang2009generalized,jiang2009general,saha2020nonparametric}. Our
analyses are simpler, as we do not assume a particular rate of convergence
in \eqref{eq:ThetaXconvergence}, and we correspondingly do not study the
convergence rate of $\pi_n$ to $\pi_*$.

\def\suplim{\overline{\lim}\;}
\def\inflim{\underline{\lim}\;}

\begin{proof}[Proof of Lemma \ref{lemma:mNPMLE}]
Write as shorthand
\[f_*(x)=f_{(M_* \cdot \pi_*) * \cN(0,\Sigma_*)}(x),
\qquad f_n(x)=f_{(M_n \cdot \pi_n) * \cN(0,\Sigma_n)}(x),\]
and denote their squared Hellinger distance by
$d_H(f_n,f_*)^2=1-\int \sqrt{f_n(x)f_*(x)}\,\der x$. We show that
\begin{equation}\label{eq:hellingerconvergence}
d_H(f_n,f_*)^2 \to 0
\end{equation}
as $n \to \infty$. Fix any $\eps>0$, and suppose by contradiction that
$\limsup_{n \to \infty} d_H(f_n,f_*)^2>\eps$.
This implies that along some subsequence $\{n_m\}_{m=1}^\infty$, the function
$f_{n_m}$ belongs to the function class
\[
{
\mathcal{F}=\Big\{f_{(M \cdot \pi) * \cN(0,\Sigma)}:
d_H(f_{(M \cdot \pi) * \cN(0,\Sigma)},f_*)^2>\eps,
\,\pi \in \cP,\,\|M-M_*\|<\|M_*\|/2,\;
\|\Sigma-\Sigma_*\|<\|\Sigma_*\|/2\Big\}.
}
\]
We denote by $\inflim a_n \equiv \liminf_{m \to \infty} a_{n_m}$
and $\suplim a_n \equiv \limsup_{m \to \infty} a_{n_m}$
the limits along this subsequence.

Observe that from the form of \eqref{eq:Xmarginal}, the functions $f \in
\mathcal{F}$ are uniformly bounded by a constant $C_0>0$ and are equicontinuous.
Consider two constants $B,\eta>0$. Let
$\ball^k(B)=\{x \in \bbR^k:\|x\| \leq B\}$, and define the semi-norm
\begin{align*}
\|f\|_{\infty,B} = \sup_{x\in\ball^k(B)} |f(x)|.
\end{align*}
Then by Arzel\`a-Ascoli, there exists a finite ($n$-independent)
$\eta$-cover $\mathcal{C} \subset \mathcal{F}$ in this semi-norm: for
any $f \in \mathcal{F}$, there is some $g \in \mathcal{C}$ where
$\|f-g\|_{\infty,B}<\eta$. We define the smoothed indicator function
\begin{align*}
\mathfrak{f}(x)=\begin{cases} \eta & \text{ if } x \in \ball^k(B)\\
\eta \cdot (B/\|x\|)^{k+1} & \text{ if } x \not\in \ball^k(B).
\end{cases}
\end{align*}
For all large $n_m$, letting $g_{n_m} \in \mathcal{C}$ be such that
$\|f_{n_m}-g_{n_m}\|_{\infty,B}<\eta$, we then have the pointwise bounds
\begin{align*}
f_{n_m}(x) \leq 
\begin{cases}
g_{n_m}(x)+\mathfrak{f}(x) & \text{ if } x \in \ball^k(B) \\
C_0 & \text{ if } x \not \in \ball^k(B).
\end{cases}
\end{align*}

The given condition that $\pi_n$ is an approximate NPMLE implies that
\begin{align*}
0 \leq \inflim
\frac{1}{n}\sum_{i=1}^n \log \frac{f_n(x_i)}{f_*(x_i)}
&\leq \suplim \frac{1}{n}
\sum_{i=1}^n \log \frac{g_n(x_i)+\mathfrak{f}(x_i)}{f_*(x_i)}
+\suplim \frac{1}{n} \sum_{i:x_i \notin \ball^k(B)}
\log \frac{C_0}{g_n(x_i)+\mathfrak{f}(x_i)}\\
&=\mathrm{(I)}+\mathrm{(II)}.
\end{align*}
For $\mathrm{(I)}$, observe that
since $\mathcal{C}$ is a finite $n$-independent cover, we have
\[\mathrm{(I)} \leq
\max_{g \in \mathcal{C}}\;\suplim
\frac{1}{n}\sum_{i=1}^n \log \frac{g(x_i)+\mathfrak{f}(x_i)}{f_*(x_i)}.\]
For any $f=f_{(M \cdot \pi) * \cN(0,\Sigma)} \in \mathcal{F}$,
its posterior mean denoising function $\theta(x \mid M,\Sigma,\pi)$ is given by
\eqref{eq:tweedie}. Then applying Assumption \ref{assump:model}(d), for each
fixed $f \in \mathcal{F}$ and a constant $C_f>0$, we have
\[\|\der^2 \log f(x)\|=\|\Sigma^{-1} \cdot (\Id+M \cdot \der \theta(x \mid
M,\Sigma,\pi))\|<C_f.\]
Thus $\log f_* \in \PL(2)$
and $\log g \in \PL(2)$ for all $g \in \mathcal{C}$. It may be checked that
$\log \mathfrak{f} \in \PL(2)$, and since $(a,b) \mapsto \log(e^a+e^b)$ is
Lipschitz, this implies that $x \mapsto \log(g(x)+\mathfrak{f}(x))$ also
belongs to $\PL(2)$. Then by the convergence $\bX \toW X \sim f_*$, we obtain
\[\mathrm{(I)} \leq \max_{g \in \mathcal{C}}
\bbE_{X \sim f_*}\bigg[\log \frac{g(X)+\mathfrak{f}(X)}{f_*(X)}\bigg].\]
For each $g \in \mathcal{C}$, applying $(\log x)/2=\log \sqrt{x} \leq
\sqrt{x}-1$, $\sqrt{x+y} \leq \sqrt{x}+\sqrt{y}$, and Cauchy-Schwarz,
\begin{align*}
\frac{1}{2}\,\bbE_{X \sim f_*}\bigg[\log \frac{g(X)+\mathfrak{f}(X)}
{f_*(X)}\bigg]
&\leq \bbE_{X \sim f_*}
\bigg[\sqrt{\frac{g(X)+\mathfrak{f}(X)}{f_*(X)}}\bigg]-1\\
&=\int \sqrt{(g(x)+\mathfrak{f}(x))f_*(x)}\,dx -1\\
&\leq \int \sqrt{g(x)f_*(x)}\,dx-1+\int \sqrt{\mathfrak{f}(x)f_*(x)}\,dx\\
&\leq -d_H(g,f_*)^2+\sqrt{\int \mathfrak{f}(x)\,dx}.
\end{align*}
The condition $g \in \mathcal{C} \subset \mathcal{F}$ implies
$d_H(g,f_*)^2>\eps$. Furthermore,
$\int \mathfrak{f}(x)\,dx=\int \mathfrak{f}(By)B^kdy=C_k\eta \cdot B^k$
for a constant $C_k>0$. Thus
$\mathrm{(I)} \leq -2\eps+2\sqrt{C_k \eta \cdot B^k}$.

For $\mathrm{(II)}$, let us set
\[\psi_{\eta,B}(x)
=\log\;\max\bigg(\frac{C_0\|x\|^{k+1}}{\eta B^{k+1}},\;1\bigg).\]
Then clearly $\psi_{\eta,B} \in \PL(2)$, so
\[\mathrm{(II)}
\leq \suplim
\frac{1}{n}\sum_{i:x_i \notin \ball^k(B)} \log \frac{C_0}{\mathfrak{f}(x_i)}
\leq \suplim \frac{1}{n}\sum_{i=1}^n \psi_{\eta,B}(x_i)
=\bbE_{X \sim f_*}[\psi_{\eta,B}(X)].\]
Combining with the above bound for $\mathrm{(I)}$, we obtain
\[0 \leq -2\eps+2\sqrt{C_k \eta \cdot B^k}
+\bbE_{X \sim f_*}[\psi_{\eta,B}(X)].\]
This must hold for all $B,\eta>0$. However, taking $B \to \infty$ and $\eta \to
0$ such that $\eta B^{k+1} \to \infty$ but $\eta B^k \to 0$, we have
$2\sqrt{C_k \eta \cdot B^k} \to 0$ and
$\bbE_{X \sim f_*}[\psi_{\eta,B}(X)] \to 0$ by the dominated convergence
theorem. This yields a contradiction for sufficiently large $B$ and small
$\eta$. Then we must have $\limsup_{n \to \infty} d_H(f_n,f_*)^2 \leq \eps$.
Here $\eps>0$ is arbitrary, so we have shown \eqref{eq:hellingerconvergence}.

This implies the weak convergence of $(M_n \cdot \pi_n) * \cN(0,\Sigma_n)$
to $(M_* \cdot \pi_*) * \cN(0,\Sigma_*)$. If $\psi_n$ and $\psi_*$ are the
characteristic functions of $\pi_n$ and $\pi_*$, then for all $t \in \bbR^k$,
\[\psi_n(M_n^\top t) \cdot e^{-\frac{t^\top \Sigma_n t}{2}}
\to \psi_*(M_*^\top t) \cdot e^{-\frac{t^\top \Sigma_* t}{2}}\]
as the left side is the characteristic function of
$(M_n \cdot \pi_n) * \cN(0,\Sigma_n)$ while the right side is that of
$(M_* \cdot \pi_*) * \cN(0,\Sigma_*)$. Since $\Sigma_n \to \Sigma_*$ and
$M_n \to M_*$, this implies $\psi_n(M_*^\top t) \to \psi_*(M_*^\top t)$.
Then $\psi_n(t) \to \psi_*(t)$ for all $t \in \bbR^k$, because $M_*$ is
invertible. So $\pi_n$ converges weakly to $\pi_*$.
\end{proof}

\subsection{Bayes posterior mean and its derivatives}

We now show Corollary \ref{cor:empBayes_relativeError} and an analogous
statement for the derivative of $\theta(\cdot)$.

\begin{lemma}\label{lemma:weakconvergence}
Let $\pi_n,\pi_*$ be probability distributions on $\bbR^k$ such that $\pi_n$
converges weakly to $\pi_*$. Let $g_n,g_*:\bbR^k \times \bbR^k \to \bbR^m$ be
such that
\begin{enumerate}
\item[(a)] For any $B>0$,
$g_n(x,\theta) \to g_*(x,\theta)$ uniformly over $(x,\theta) \in \bbR^k
\times \ball^k(B)$.
\item[(b)] For some constant $C_0>0$ and all $(x,\theta) \in \bbR^k \times \bbR^k$,
$\|g_*(x,\theta)\|<C_0$ and $\|g_n(x,\theta)\|<C_0$.
\item[(c)] For any $B>0$, some constant $L_B>0$, and all 
$(x,\theta) \in \bbR^k \times \ball^k(B)$,
$\|\der_\theta g_*(x,\theta)\|<L_B$.
\end{enumerate}
Then as $n \to \infty$,
\[\sup_{x \in \bbR^k} \bigg\|\int g_n(x,\theta)\der \pi_n(\theta)
-\int g_*(x,\theta)\der \pi_*(\theta)\bigg\| \to 0.\]
\end{lemma}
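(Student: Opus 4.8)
The plan is to split the quantity to be bounded into two pieces,
\[
\int g_n(x,\theta)\,\der\pi_n(\theta)-\int g_*(x,\theta)\,\der\pi_*(\theta)
=\underbrace{\int \big(g_n-g_*\big)(x,\theta)\,\der\pi_n(\theta)}_{A_n(x)}
+\underbrace{\int g_*(x,\theta)\,\der\pi_n(\theta)-\int g_*(x,\theta)\,\der\pi_*(\theta)}_{B_n(x)},
\]
and to show $\sup_{x}\|A_n(x)\|\to0$ and $\sup_{x}\|B_n(x)\|\to0$ separately. The ingredient shared by both parts is \emph{tightness}: since $\pi_n$ converges weakly to $\pi_*$ on $\bbR^k$, the family $\{\pi_n\}\cup\{\pi_*\}$ is tight, so for every $\delta>0$ there is a radius $B=B(\delta)$ with $\sup_n\pi_n(\ball^k(B)^c)\le\delta$ and $\pi_*(\ball^k(B)^c)\le\delta$.

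For $A_n(x)$, on $\ball^k(B)$ the integrand is at most $\eps_n(B):=\sup_{x\in\bbR^k,\,\theta\in\ball^k(B)}\|g_n(x,\theta)-g_*(x,\theta)\|$, which tends to $0$ by hypothesis (a), while off $\ball^k(B)$ it is at most $2C_0$ by hypothesis (b). Hence $\sup_x\|A_n(x)\|\le\eps_n(B)+2C_0\delta$, so $\limsup_n\sup_x\|A_n(x)\|\le 2C_0\delta$; letting $\delta\to0$ gives $\sup_x\|A_n(x)\|\to0$. This part uses only (a), (b), and tightness.

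For $B_n(x)$ the subtle point is to upgrade the pointwise convergence (immediate from weak convergence of $\pi_n$, since each $g_*(x,\cdot)$ is bounded and continuous) to convergence \emph{uniform in $x$}; this is where hypothesis (c) enters, via Arzel\`a--Ascoli. Fix a globally Lipschitz cutoff $\chi_B:\bbR^k\to[0,1]$ with $\chi_B\equiv1$ on $\ball^k(B)$ and $\chi_B\equiv0$ off $\ball^k(2B)$, and decompose $g_*(x,\theta)=g_*(x,\theta)\chi_B(\theta)+g_*(x,\theta)(1-\chi_B(\theta))$. The contribution of the second summand to $B_n(x)$ is at most $C_0\big(\pi_n(\ball^k(B)^c)+\pi_*(\ball^k(B)^c)\big)\le2C_0\delta$, uniformly in $x$. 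For the first summand, the family $\{\,\theta\mapsto g_*(x,\theta)\chi_B(\theta):x\in\bbR^k\,\}$ is uniformly bounded by $C_0$, supported in $\ball^k(2B)$, and uniformly Lipschitz (with constant at most $L_{2B}+C_0\,\mathrm{Lip}(\chi_B)$, using (c) on $\ball^k(2B)$); by Arzel\`a--Ascoli it is totally bounded in $C(\ball^k(2B);\bbR^m)$, hence admits a finite $\eta$-net $g_*(x_1,\cdot)\chi_B,\dots,g_*(x_N,\cdot)\chi_B$. For each $j\le N$, $\int g_*(x_j,\theta)\chi_B(\theta)\,\der\pi_n(\theta)\to\int g_*(x_j,\theta)\chi_B(\theta)\,\der\pi_*(\theta)$ by weak convergence; approximating an arbitrary $x$ by the nearest $x_j$ and using that $\pi_n,\pi_*$ are probability measures, the first summand contributes at most $2\eta+\max_{j\le N}\big\|\int g_*(x_j,\cdot)\chi_B\,\der(\pi_n-\pi_*)\big\|$, which is $<3\eta$ for all large $n$, uniformly in $x$. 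Thus $\limsup_n\sup_x\|B_n(x)\|\le2C_0\delta+3\eta$, and sending $\eta,\delta\to0$ gives $\sup_x\|B_n(x)\|\to0$.

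Combining the two bounds yields the claim. The only genuine obstacle is the uniformity in $x$ in $B_n(x)$: plain weak convergence gives only pointwise convergence, and one must invoke the equicontinuity of $\{g_*(x,\cdot)\}_x$ from (c) together with tightness (to confine attention to a compact set of $\theta$'s) to reduce to finitely many test functions; the remainder is bookkeeping.
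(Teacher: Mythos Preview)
Your proof is correct and follows essentially the same strategy as the paper: truncate to a compact ball via tightness, use Arzel\`a--Ascoli on the equicontinuous family $\{g_*(x,\cdot)\}_x$ (from (c)) to reduce the $x$-uniform weak convergence statement to finitely many test functions, and handle the $g_n-g_*$ piece with (a). The only cosmetic differences are that the paper writes a single four-term decomposition rather than your two-step $A_n+B_n$ split, and it restricts integrals to $\ball^k(B)$ directly rather than introducing a Lipschitz cutoff $\chi_B$.
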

\begin{proof}
Fix $\eps>0$. Then there is some $B>0$ for which $\pi_*(\bbR^k \setminus
\ball^k(B)) \leq \eps$ and $\pi_n(\bbR^k \setminus \ball^k(B)) \leq \eps$
for all $n$. Conditions (b) and (c) imply that the functions
$\theta \mapsto g_*(x,\theta)$ are uniformly bounded and equicontinuous over
$\theta \in \ball^k(B)$, for all $x \in \bbR^k$.
Then by Arzel\`a-Ascoli, there is a finite set $\mathcal{C}$ of continuous 
bounded functions on $\ball^k(B)$ such that for each $x \in \bbR^k$, there
exists $h_x \in \mathcal{C}$ for which
\[\sup_{\theta \in \ball^k(B)} \|g_*(x,\theta)-h_x(\theta)\| \leq \eps.\]
Let us write
\[\sup_{x \in \bbR^k} \bigg\|\int g_n(x,\theta) \der \pi_n(\theta)
-\int g_*(x,\theta) \der \pi_*(\theta) \bigg\|
\leq R_1+R_2+R_3+R_4\]
where
\begin{align*}
R_1&=\sup_{x \in \bbR^k} \int_{\bbR^k \setminus \ball^k(B)}
\|g_n(x,\theta)\|\der \pi_n(\theta)+
\int_{\bbR^k \setminus \ball^k(B)} \|g_*(x,\theta)\|\der \pi_*(\theta)\\
R_2&=\sup_{x \in \bbR^k} \int_{\ball^k(B)}
\|g_n(x,\theta)-g_*(x,\theta)\|\der \pi_n(\theta)
+\int_{\ball^k(B)} \|g_n(x,\theta)-g_*(x,\theta)\|\der \pi_*(\theta)\\
R_3&=\sup_{x \in \bbR^k} \int_{\ball^k(B)}
\|g_*(x,\theta)-h_x(\theta)\|\der \pi_n(\theta)
+\int_{\ball^k(B)} \|g_*(x,\theta)-h_x(\theta)\|\der \pi_*(\theta)\\
R_4&=\sup_{h \in \mathcal{C}} \bigg|\int_{\ball^k(B)}
h(\theta)\der \pi_n(\theta)-
\int_{\ball^k(B)} h(\theta)\der \pi_*(\theta)\bigg|.
\end{align*}
We have $R_1 \leq 2C_0\eps$ by condition (b), and $R_3 \leq 2\eps$.
The uniform convergence of condition (a) implies $R_2 \to 0$.
Since $\mathcal{C}$ is a finite set and each $h \in \mathcal{C}$ is continuous
and bounded, $R_4 \to 0$. As $\eps>0$ is arbitrary, this shows the lemma.
\end{proof}

\begin{corollary}\label{cor:derf_uniform_convergence}
In the setting of Lemma~\ref{lemma:mNPMLE}, denote
\[f_n(x)=f_{(M_n \cdot \pi_n) * \cN(0,\Sigma_n)}(x),
\qquad f_*(x)=f_{(M_* \cdot \pi_*) * \cN(0,\Sigma_*)}(x).\]
Then as $n \to \infty$, for each fixed $j \geq 1$
\[\sup_{x\in \mathbb{R}^k} |f_n(x)-f_*(x)| \to 0, \quad
\sup_{x\in \mathbb{R}^k} \|\der^j f_n(x)-\der^j f_*(x)\| \to 0.\]
\end{corollary}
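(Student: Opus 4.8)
The plan is to reduce everything to Lemma~\ref{lemma:weakconvergence}, applied to the Gaussian convolution kernel and its derivatives. Recall from (\ref{eq:Xmarginal}) that $f_n(x) = \int \phi_{\Sigma_n}(x - M_n\theta)\,\der\pi_n(\theta)$ and $f_*(x) = \int \phi_{\Sigma_*}(x - M_*\theta)\,\der\pi_*(\theta)$. For any fixed $j \geq 0$ the $j$-th derivative tensor $\der^j\phi_\Sigma(y)$ equals a fixed tensor-valued polynomial in $y$, with coefficients polynomial in the entries of $\Sigma^{-1}$, multiplied by $\phi_\Sigma(y)$; since a polynomial times a Gaussian is bounded, both $\der^j\phi_\Sigma$ and $\der^{j+1}\phi_\Sigma$ are bounded on $\bbR^k$, uniformly as $\Sigma$ ranges over a compact neighborhood of $\Sigma_*$ inside the positive-definite cone. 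Differentiation under the integral sign is therefore justified by dominated convergence, and since the affine map $x\mapsto x - M_n\theta$ has identity linear part we get $\der^j f_n(x) = \int (\der^j\phi_{\Sigma_n})(x-M_n\theta)\,\der\pi_n(\theta)$, and likewise for $f_*$. I would then invoke Lemma~\ref{lemma:weakconvergence} with $g_n(x,\theta) = (\der^j\phi_{\Sigma_n})(x - M_n\theta)$ and $g_*(x,\theta) = (\der^j\phi_{\Sigma_*})(x - M_*\theta)$, viewed as maps into $\bbR^{k^j}$, the case $j=0$ being $g_n(x,\theta)=\phi_{\Sigma_n}(x - M_n\theta)$.

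Next I would check the three hypotheses of Lemma~\ref{lemma:weakconvergence}, using $\Sigma_n \to \Sigma_*$, $M_n \to M_*$, and the weak convergence $\pi_n \to \pi_*$ already supplied by Lemma~\ref{lemma:mNPMLE}. Hypothesis (b) is immediate from the uniform boundedness above: $\sup_n\sup_{y\in\bbR^k}\|\der^j\phi_{\Sigma_n}(y)\|<\infty$ (discarding finitely many $n$ if needed) and $\sup_y\|\der^j\phi_{\Sigma_*}(y)\|<\infty$. Hypothesis (c) follows from $\der_\theta g_*(x,\theta) = -(\der^{j+1}\phi_{\Sigma_*})(x-M_*\theta)\cdot M_*$, whose operator norm is at most $\|M_*\|\sup_y\|\der^{j+1}\phi_{\Sigma_*}(y)\|$, uniformly in $(x,\theta)$. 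For hypothesis (a), on $\bbR^k\times\ball^k(B)$ I would split
\[
g_n(x,\theta)-g_*(x,\theta)=\Big[(\der^j\phi_{\Sigma_n})(x-M_n\theta)-(\der^j\phi_{\Sigma_*})(x-M_n\theta)\Big]+\Big[(\der^j\phi_{\Sigma_*})(x-M_n\theta)-(\der^j\phi_{\Sigma_*})(x-M_*\theta)\Big].
\]
The first bracket is bounded by $\sup_{y\in\bbR^k}\|\der^j\phi_{\Sigma_n}(y)-\der^j\phi_{\Sigma_*}(y)\|$, which tends to $0$ because $\Sigma\mapsto\der^j\phi_\Sigma$ is continuous in the $\bbR^k$-supremum norm near $\Sigma_*$ (the polynomial-times-Gaussian form makes the difference uniformly small once $\Sigma_n$ is confined to a compact neighborhood of $\Sigma_*$), and this bound is uniform in $(x,\theta)$; the second bracket is bounded by $\|M_n-M_*\|\,B\,\sup_y\|\der^{j+1}\phi_{\Sigma_*}(y)\|\to 0$, uniformly over $\theta\in\ball^k(B)$. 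Lemma~\ref{lemma:weakconvergence} then yields $\sup_{x\in\bbR^k}\|\der^j f_n(x)-\der^j f_*(x)\|\to 0$ for each fixed $j\geq 1$, and $\sup_{x\in\bbR^k}|f_n(x)-f_*(x)|\to 0$ for $j=0$.

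Once Lemma~\ref{lemma:weakconvergence} is in hand this is essentially a bookkeeping exercise, and the only point requiring genuine care --- the main, if mild, obstacle --- is the family of uniform-over-$\bbR^k$ estimates on the Gaussian-derivative kernels: that $\der^j\phi_\Sigma$ and $\der^{j+1}\phi_\Sigma$ are bounded on all of $\bbR^k$ uniformly for $\Sigma$ in a compact subset of the positive-definite cone, and that $\Sigma\mapsto\der^j\phi_\Sigma$ is continuous in the supremum norm on $\bbR^k$. Both follow from the explicit Hermite-polynomial form of the derivatives of a Gaussian density together with Gaussian tail decay, and I would relegate the elementary polynomial estimates to the bookkeeping rather than grind through them.
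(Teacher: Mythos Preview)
Your proposal is correct and follows essentially the same route as the paper: reduce to Lemma~\ref{lemma:weakconvergence} with $g_n(x,\theta)=\der^j_x\phi_{\Sigma_n}(x-M_n\theta)$ and $g_*(x,\theta)=\der^j_x\phi_{\Sigma_*}(x-M_*\theta)$, and verify conditions (a)--(c) via the polynomial-times-Gaussian form of the derivatives. The paper handles condition~(a) by bounding the mixed derivatives of $g$ in $(x,\theta,M,\Sigma)$ directly, whereas you write out the two-term split in $\Sigma$ and $M$ explicitly, but this is only a presentational difference.
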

\begin{proof}
Lemma \ref{lemma:mNPMLE} shows that $\pi_n \to \pi_*$ weakly. Consider the
functions $g_n(x,\theta)=\phi_{\Sigma_n}(x-M_n\theta)$ and
$g_*(x,\theta)=\phi_{\Sigma_*}(x-M_*\theta)$. From the form
\eqref{eq:Xmarginal}, it suffices to show that the conditions of Lemma
\ref{lemma:weakconvergence} hold for $g_n \to g_*$ and also for
$\der_x^j g_n \to \der_x^j g_*$. Observe that for any fixed integers
$a,b,c \geq 0$,
\[\|x-M\theta\|^a\|M\|^b\|\Sigma^{-1}\|^c
\cdot \phi_\Sigma(x-M\theta)\]
is uniformly bounded over all $(x,\theta) \in \bbR^k \times \bbR^k$ 
and all $M,\Sigma$ in a sufficiently small neighborhood of
$M_*,\Sigma_*$. Then differentiating $g(x,\theta)=\phi_\Sigma(x-M\theta)$ by the
chain rule, the derivative of any fixed orders in $(x,\theta)$ is uniformly
bounded, checking conditions (b) and (c) of Lemma \ref{lemma:weakconvergence}.
Furthermore, this shows that the derivative
of $g(x,\theta)$ of any fixed orders in $(x,\theta,M,\Sigma)$ is uniformly
bounded over all $x \in \bbR^k$, all $(M,\Sigma)$ in a neighborhood of
$(M_*,\Sigma_*)$, and all $\theta$ in any compact ball $\ball^k(B)$.
This implies $g_n \to g_*$ and $\der_x^j g_n \to \der_x^j g_*$ uniformly over
$(x,\theta) \in \bbR^k \times \ball^k(B)$, checking condition (a).
\end{proof}

\begin{proof}[Proof of Corollary \ref{cor:empBayes_relativeError}]
	Let $f_*=f_{(M_* \cdot \pi_*) * \mathcal{N}(0,\Sigma_*)}$ and
$f_n=f_{(M_n \cdot \pi_n) * \mathcal{N}(0,\Sigma_n)}$.
From Tweedie's formula (\eqref{eq:tweedie}),
		\begin{equation}\label{eq:first_order_tweedies_formula}
\theta(x  \mid M_*,\Sigma_*, \pi_*)=
M_*^{-1}\Big(\Sigma_* \frac{\der f_*^\top}{f_*}(x) + x\Big),\quad
		\theta(x \mid M_n, {\Sigma}_n, {\pi}_n)
		=M_n^{-1}\Big({\Sigma}_n \frac{\der {{f_n}}^\top}{{f_n}}(x)
+ x\Big).
\end{equation}
Let us write $g_*=\der f_*/f_*$ and $g_n=\der f_n/f_n$. Then
	\begin{align*}
	&\frac{1}{n}\|\theta(\bX \mid M_*,\Sigma_*,\pi_*)
-\theta(\bX \mid M_n,\Sigma_n,\pi_n)\|_F^2\\
&\leq \frac{3}{n} \sum_{i=1}^n \|M_*^{-1} \Sigma_*\|^2 \cdot \|g_*(x_i)
-g_n(x_i)\|^2 \\
        &\hspace{1in}+ \frac{3}{n} \sum_{i=1}^n \|
M_n^{-1}\Sigma_n - M_*^{-1}\Sigma_* \|^2 \cdot \|g_n(x_i)\|^2 
+ \frac{3}{n} \sum_{i=1}^n \|M_n^{-1} - M_*^{-1}\|^2 \cdot \|x_i\|^2 \\
 		&= {R_1}+ {R_2}+ {R_3}.
	\end{align*}
	Note that $M_n \to M_*$ and $n^{-1}\sum_i \|x_i\|^2<C$ for a constant
$C>0$ and all large $n$, by the assumption $\bX \toW X$. Thus $R_3 \to 0$.
By Proposition \ref{prop:uniformlipschitz}, $\theta(x\mid M_n,\Sigma_n,\pi_n)$ and
thus $g_n(x)$ are both Lipschitz in $x$, uniformly for all large $n$.
So also $n^{-1}\sum_i \|g_n(x_i)\|^2<C$ for a constant $C>0$
and all large $n$. Then since $\Sigma_n \to \Sigma_*$, we have $R_2 \to 0$.

To show $R_1 \to 0$, fix any $B > 0$ and apply $g_*-g_n=(f_n-f_*)\der
f_*/(f_*f_n)+(\der f_*-\der f_n)/f_n$ to write
	\begin{align*}
	R_1 &\leq
	\frac{C}{n} \sum_{i=1}^n \Big\|\frac{(f_n-f_*)\der f_*}{f_*f_n}(x_i)
+\frac{\der f_* - \der f_n}{f_n}(x_i)\Big\|^2 \mathbbm{1}\{x_i\in \ball^k(B)\}\\
	&\hspace{1in}+\frac{C}{n} \sum_{i=1}^n  \Big(\|g_*(x_i)\|^2
+\|g_n(x_i)\|^2\Big)
\mathbbm{1}\{x_i \not\in \ball^k(B)\} \equiv \mathrm{(I)}+\mathrm{(II)}.
\end{align*}
We have the bounds $f_*(x)>c_B$ and $\|\der f_*(x)\|<C_B$ for some
constants $C_B,c_B>0$ and all $x \in \ball^k(B)$. Combining with the uniform
convergence in Corollary \ref{cor:derf_uniform_convergence}, this yields
\[\mathrm{(I)} \leq C_B' \cdot \sup_{x \in \ball^k(B)}
|f_n(x)-f_*(x)|+\|\der f_n(x)-\der f_*(x)\| \to 0\]
as $n \to \infty$. For $\mathrm{(II)}$, applying again
that $g_*$ and $g_n$ are uniformly Lipschitz in $x$ for large $n$, for some
constant $C>0$ (independent of $B$) we have
\[\|g_*(x_i)\| \leq \|g_*(0)\|+C\|x_i\|, \quad
\|g_n(x_i)\| \leq \|g_n(0)\|+C\|x_i\|.\]
Then applying $g_n(0) \to g_*(0)$ as $n \to \infty$, this yields for some
constants $C,C'>0$ that
\begin{equation}\label{eq:IIbound}
\mathrm{(II)} \leq \frac{C}{n}\sum_{i=1}^n (C'+\|x_i\|^2)
 	\mathbbm{1}\{{x_i} \not\in \ball^k(B)\}.
\end{equation}
Define
\begin{align*}
	h(x) = 
	\begin{cases}
		C'+\|x\|^2 & \text{ if } x\not\in \ball^k(B) \\
		B^2(1-\frac{B-1}{\|x\|^2})(C'+\|x\|^2) & \text{ if }
                x \in \ball^k(B) \backslash \ball^k(B-1)\\
		0 & \text{ if } x\in \ball^k(B-1).
	\end{cases}
\end{align*}
Then $h\in \PL(2)$, and $\mathrm{(II)} \leq n^{-1}\sum_i h(x_i)
\to \bbE_{X \sim f_*}[h(X)]$. Combining with the bound for $\mathrm{(I)}$,
$\limsup_{n \to \infty} R_1 \leq \bbE_{X \sim f_*}[h(X)]$. Taking $B \to \infty$
and applying the dominated convergence theorem, we obtain $R_1 \to 0$, 
and this completes the proof of~\eqref{eq:eb_optimal_bayes_risk}. 

As $\theta(x\mid M_*,\Sigma_*,\pi_*)$ is Lipschitz in $x$
by Assumption~\ref{assump:model}(d) and $(\bTheta, \bX)\toW(\Theta,X)$,
\begin{align*}
	\frac{1}{n} \| \theta(\bX | M_*,\Sigma_*,\pi_*) - \bTheta\|_F^2 \to 
	\mathbb{E}[\| \Theta - \mathbb{E}[\Theta\mid X]\|^2] = \mmse(\pi \mid M_*,\Sigma_*),
\end{align*}
which immediately implies the second part of the corollary.
\end{proof}

\begin{proposition}\label{prop:surrogate_denoiser_derivative}
	In the setting of Lemma~\ref{lemma:mNPMLE}, as $n\to\infty$,
	\begin{align}\label{eq:surrogate_denoiser_derivative}
		\frac{1}{n}\sum_{i=1}^n \|\der \theta(x_i\mid M_*,\Sigma_*, \pi_*) - 
		  \der \theta(x_i \mid M_n,\Sigma_n, \pi_n)\| \to 0.
	\end{align}
\end{proposition}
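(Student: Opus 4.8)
The plan is to mimic the proof of Corollary~\ref{cor:empBayes_relativeError}, replacing the first-order Tweedie formula by its derivative. Differentiating~(\ref{eq:tweedie}) in $x$ gives
\[
\der\theta(x \mid M,\Sigma,\pi) = M^{-1}\Big(\Id + \Sigma\Big[\frac{\der^2 f}{f} - \frac{\der f^\top\der f}{f^2}\Big](x)\Big),
\]
where $f=f_{(M\cdot\pi)*\cN(0,\Sigma)}$; write $A(x\mid M,\Sigma,\pi)$ for the bracketed matrix, and abbreviate $A_n(x)=A(x\mid M_n,\Sigma_n,\pi_n)$ and $A_*(x)=A(x\mid M_*,\Sigma_*,\pi_*)$, and similarly $f_n,f_*$ as in Corollary~\ref{cor:derf_uniform_convergence}. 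First I would decompose the summand $\der\theta(x_i\mid M_n,\Sigma_n,\pi_n)-\der\theta(x_i\mid M_*,\Sigma_*,\pi_*)$ into three pieces: one proportional to $M_n^{-1}-M_*^{-1}$, one equal to $(M_n^{-1}\Sigma_n-M_*^{-1}\Sigma_*)A_n(x_i)$, and one equal to $M_*^{-1}\Sigma_*\bigl(A_n(x_i)-A_*(x_i)\bigr)$.

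The first two pieces vanish on average. Since $\pi_n\to\pi_*$ weakly by Lemma~\ref{lemma:mNPMLE} and $M_n\to M_*$, $\Sigma_n\to\Sigma_*$, for all large $n$ the triple $(M_n,\Sigma_n,\pi_n)$ lies in the neighborhood $O$ of Proposition~\ref{prop:uniformlipschitz}, so $\theta(\cdot\mid M_n,\Sigma_n,\pi_n)$ is Lipschitz in $x$ uniformly in $n$; being also smooth (as $f_n$ is a positive $C^\infty$ Gaussian mixture density), this bounds $\|\der\theta(\cdot\mid M_n,\Sigma_n,\pi_n)\|$ uniformly over $x$ and $n$. Writing $A_n=M_n\Sigma_n^{-1}\bigl(\der\theta(\cdot\mid M_n,\Sigma_n,\pi_n)-M_n^{-1}\bigr)$ then shows $\sup_{x}\|A_n(x)\|$ is bounded uniformly in $n$, and the same for $A_*$. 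Hence $n^{-1}\sum_i\|M_n^{-1}-M_*^{-1}\|\to0$ and $n^{-1}\sum_i\|M_n^{-1}\Sigma_n-M_*^{-1}\Sigma_*\|\,\|A_n(x_i)\|\le\|M_n^{-1}\Sigma_n-M_*^{-1}\Sigma_*\|\cdot\sup_x\|A_n(x)\|\to0$.

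For the third piece I must show $n^{-1}\sum_i\|A_n(x_i)-A_*(x_i)\|\to0$. Fix $B>0$ and split the sum according to whether $x_i\in\ball^k(B)$ or not. On $\ball^k(B)$, the density $f_*$ is bounded below and $\der f_*,\der^2 f_*$ bounded above; combined with $\sup_x|f_n-f_*|\to0$, $\sup_x\|\der f_n-\der f_*\|\to0$, $\sup_x\|\der^2 f_n-\der^2 f_*\|\to0$ from Corollary~\ref{cor:derf_uniform_convergence} (which also forces $f_n$ bounded below on $\ball^k(B)$ for large $n$), continuity of $(a,b,c)\mapsto c/a-b^\top b/a^2$ away from $a=0$ yields $\sup_{x\in\ball^k(B)}\|A_n(x)-A_*(x)\|\to0$, so the partial average over $x_i\in\ball^k(B)$ tends to $0$. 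On the complement, $\|A_n(x_i)-A_*(x_i)\|$ is at most a constant $C$ by the uniform boundedness established above, so this part is at most $C\cdot n^{-1}\#\{i:x_i\notin\ball^k(B)\}$. Dominating $\mathbbm{1}\{x\notin\ball^k(B)\}\le h_B(x)\le\mathbbm{1}\{x\notin\ball^k(B-1)\}$ by a function $h_B\in\PL(2)$ and using $\bX\toW X\sim f_*$ gives $\limsup_n n^{-1}\#\{i:x_i\notin\ball^k(B)\}\le\bbP_{X\sim f_*}(X\notin\ball^k(B-1))$, which $\to0$ as $B\to\infty$. Assembling the three pieces proves~(\ref{eq:surrogate_denoiser_derivative}).

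The only real obstacle — and it is mild — is the $x_i$ lying outside a large ball, where Corollary~\ref{cor:derf_uniform_convergence} gives no control; the resolution is that Proposition~\ref{prop:uniformlipschitz} bounds $\der\theta$ (hence $A$) everywhere, after which the tail is handled by the standard $\PL(2)$-domination of the indicator. This step is in fact slightly easier here than in Corollary~\ref{cor:empBayes_relativeError}, where the first-order quantity $g_n$ had linear rather than bounded growth at infinity.
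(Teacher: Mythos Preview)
Your proof is correct and follows essentially the same approach as the paper's: the same second-order Tweedie formula, the same three-term decomposition (your $A$ is the paper's $g$), the same appeal to Proposition~\ref{prop:uniformlipschitz} for a uniform bound on $\der\theta$ and to Corollary~\ref{cor:derf_uniform_convergence} for the ball part, and the same tail control via $\bX\toW X$. One inconsequential slip: your inversion should read $A_n=\Sigma_n^{-1}M_n\,\der\theta(\cdot\mid M_n,\Sigma_n,\pi_n)-\Sigma_n^{-1}$ rather than with $M_n\Sigma_n^{-1}$ in front, but the intended conclusion that $\sup_x\|A_n(x)\|$ is bounded is unaffected.
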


\begin{proof}
The proof is similar to that of
Corollary \ref{cor:empBayes_relativeError} above.
Differentiating \eqref{eq:first_order_tweedies_formula} again in $x$,
	\begin{align}\label{eq:second_order_tweedies_formula}
		\der \theta(x\mid M_*,\Sigma_*, \pi_*) &=
		M_*^{-1}\bigg(\Sigma_*\Big(\frac{\der^2 f_*}{f_*} (x)
		-\frac{\der f_* \otimes \der f_*}{f_*^2} (x) \Big) + \Id\bigg),
	\end{align}
and similarly for $\der \theta(x \mid M_n,\Sigma_n,\pi_n)$.
Let us set $g_*=\der^2 f_*/f_*-\der f_* \otimes \der f_*/f_*^2$ and
$g_n=\der^2 f_n/f_n-\der f_n \otimes \der f_n/f_n^2$. Then
	\begin{align*}
		&\frac{1}{n}\sum_{i=1}^n \|\der \theta(x_i\mid M_*,\Sigma_*, \pi_*) - 
		  \der \theta(x_i \mid M_n,\Sigma_n, \pi_n)\| \\
&\leq \frac{3}{n} \sum_{i=1}^n \|M_*^{-1} \Sigma_*\| \cdot \|g_*(x_i)
-g_n(x_i)\|
        + \frac{3}{n} \sum_{i=1}^n \|
M_n^{-1}\Sigma_n - M_*^{-1}\Sigma_* \| \cdot \|g_n(x_i)\| 
+ 3\|M_n^{-1} - M_*^{-1}\|\\
 		&= {R_1}+ {R_2}+ {R_3}.
	\end{align*}
Here, by Proposition \ref{prop:uniformlipschitz}, $\|\der \theta(x \mid
M_n,\Sigma_n,\pi_n)\|$ and hence also $\|g_n(x)\|$ are uniformly bounded
in $x$ for all large $n$. Then since $M_n,\Sigma_n \to M_*,\Sigma_*$, we have
$R_2,R_3 \to 0$. For $R_1$, fix $B>0$ and write
	\begin{align*}
		R_1 &\leq \frac{C}{n}\sum_{i=1}^n \|g_*(x_i)-g_n(x_i)\|
\cdot \mathbbm{1}\{x_i \in \ball^k(B)\}
+\frac{C}{n}\sum_{i=1}^n \Big(\|g_*(x_i)\|+\|g_n(x_i)\|\Big)
\mathbbm{1}\{x_i \notin \ball^k(B)\}\\
&\equiv \mathrm{(I)}+\mathrm{(II)}.
\end{align*}
Applying the lower bounds $f_n(x),f_*(x) \geq c_B$ for all $x \in
\ball^k(B)$ and the uniform convergence of $f_n$, $\der f_n$, and $\der^2 f_n$
in Corollary \ref{cor:derf_uniform_convergence}, we have 
$\lim_{n \to \infty} \mathrm{(I)}=0$ for each fixed $B>0$.
Then applying $\bX \toW X$ and the uniform boundedness of $\|g_n(x)\|$ and
$\|g_*(x)\|$, we have
$\lim_{B \to \infty} \limsup_{n \to \infty} \mathrm{(II)}=0$. This shows $R_1
\to 0$, concluding the proof.
\end{proof}

Finally, we combine the above to prove Corollary \ref{cor:PCAmmse} of the
Main Text. Note that we have shown Lemma \ref{lemma:mNPMLE} and its
corollaries for any $\pi_n$ satisfying the condition (\ref{eq:approxNPMLE}).
Then in particular, these results hold for the NPMLE
$\pi_n=\MLE(\bX \mid M_n,\Sigma_n,\cP)$ defined by the maximization
(\ref{eq:NPMLE}), and we record this implication here.

\begin{corollary}\label{cor:trueNPMLE}
The conclusions of Lemma \ref{lemma:mNPMLE},
Corollary \ref{cor:empBayes_relativeError}, and
Proposition \ref{prop:surrogate_denoiser_derivative} hold for any NPMLE
$\pi_n=\MLE(\bX \mid M_n,\Sigma_n,\cP)$ solving the
maximization (\ref{eq:NPMLE}).
\end{corollary}

\begin{proof}
Applying the argument of Corollary \ref{cor:derf_uniform_convergence} with
$\pi_n=\pi_*$, we have
\[\sup_{x \in \mathbb{R}^k} \Big|f_{(M_n \cdot \pi_*)*\cN(0,\Sigma_n)}(x)
-f_{(M_* \cdot \pi_*)*\cN(0,\Sigma_*)}(x)\Big| \to 0.\]
Fixing any $B>0$, since $f_{(M_* \cdot \pi_*)*\cN(0,\Sigma_*)}(x)$ is bounded
away from 0 for $x \in \ball^k(B)$, this implies
\[\frac{1}{n}\sum_{i=1}^n
\left|\log \frac{f_{(M_n \cdot \pi_*)*\cN(0,\Sigma_n)}(x_i)}
{f_{(M_* \cdot \pi_*)*\cN(0,\Sigma_*)}(x_i)}\right| \cdot
\mathbbm{1}\{x_i \in \ball^k(B)\} \to 0.\]
To control this sum for $x_i \notin \ball^k(B)$, note that for some constant
$C>0$ independent of $n$, and for all $x \in \mathbb{R}^k$ and all large $n$,
\[C \geq \log f_{(M_n \cdot \pi_*)*\cN(0,\Sigma_n)}(x)
=\log \mathbb{E}\Big[\phi_{\Sigma_n}(x-M_n\Theta)\Big]
\geq \mathbb{E}\Big[\log \phi_{\Sigma_n}(x-M_n\Theta)\Big] \geq -C(1+\|x\|^2).\]
The same bounds hold for $\log f_{(M_* \cdot \pi_*)*\cN(0,\Sigma_*)}(x)$.
Hence
\[\Big|\log f_{(M_* \cdot \pi_*)*\cN(0,\Sigma_*)}(x)\Big|
+\Big|\log f_{(M_n \cdot \pi_*)*\cN(0,\Sigma_n)}(x)\Big|
\leq C(1+\|x\|^2),\]
so
\[\frac{1}{n}\sum_{i=1}^n
\left|\log \frac{f_{(M_n \cdot \pi_*)*\cN(0,\Sigma_n)}(x_i)}
{f_{(M_* \cdot \pi_*)*\cN(0,\Sigma_*)}(x_i)}\right| \cdot
\mathbbm{1}\{x_i \notin \ball^k(B)\}
\leq \frac{2C}{n}\sum_{i=1}^n (1+\|x_i\|^2)\mathbbm{1}\{x_i \notin \ball^k(B)\}.\]
As shown for (\ref{eq:IIbound}), this vanishes in the limit $n \to \infty$
followed by $B \to \infty$. Thus
\[\frac{1}{n}\sum_{i=1}^n
\log \frac{f_{(M_n \cdot \pi_*)*\cN(0,\Sigma_n)}(x_i)}
{f_{(M_* \cdot \pi_*)*\cN(0,\Sigma_*)}(x_i)} \to 0.\]
Now letting $\pi_n=\MLE(\bX \mid M_n,\Sigma_n,\cP)$, by definition
\[\frac{1}{n}\sum_{i=1}^n \log \frac{f_{(M_n \cdot \pi_n)*\cN(0,\Sigma_n)}(x_i)}
{f_{(M_n \cdot \pi_*)*\cN(0,\Sigma_n)}(x_i)} \geq 0.\]
Combining the above two displays shows that $\pi_n$ satisfies the condition
(\ref{eq:approxNPMLE}). Then Lemma \ref{lemma:mNPMLE},
Corollary \ref{cor:empBayes_relativeError}, and
Proposition \ref{prop:surrogate_denoiser_derivative} hold for $\pi_n$.
\end{proof}

\begin{proof}[Proof of Corollary \ref{cor:PCAmmse}]
By Proposition \ref{prop:PCAgaussian}, we have $(\bU,\bF) \toW (U,F)$ 
and $(\bV,\bG) \toW (V,G)$. The convergence of $\sqrt{\gamma}
\cdot \lambda_i$ in Lemma \ref{lemma:BBP} implies that
$\hat{S}$ as defined in (\ref{eq:shat}) is consistent for $S$.
Then we have also the consistency of the plug-in estimators
$\bar{M},M,\bar{\Sigma},\Sigma$ for
$\bar{M}_*,M_*,\bar{\Sigma}_*,\Sigma_*$. Corollary \ref{cor:PCAmmse}
now follows directly
from Corollary \ref{cor:empBayes_relativeError}, applied to the NPMLE $\pi_n$.
\end{proof}

% !TEX root = ../EBPCA.tex

\def\ampbv{\tilde{\mathbf{v}}}
\def\ampbu{\tilde{\mathbf{u}}}
\def\ampbf{\tilde{\mathbf{f}}}
\def\ampbg{\tilde{\mathbf{g}}}
\def\ampbV{\tilde{\mathbf{V}}}
\def\ampbU{\tilde{\mathbf{U}}}
\def\ampbF{\tilde{\mathbf{F}}}
\def\ampbG{\tilde{\mathbf{G}}}
\def\ampv{\tilde{v}}
\def\ampu{\tilde{u}}
\def\ampf{\tilde{f}}
\def\ampg{\tilde{g}}
\def\ampb{\tilde{{b}}}
\def\ampB{\tilde{{B}}}
\def\ampbbar{\tilde{\bar{b}}}
\def\ampsigmabar{\tilde{\bar{\Sigma}}}
\def\ampmubar{\tilde{\bar{M}}}

\section{State evolution of EB-PCA}\label{appendix:oracleSE}

We prove Theorem~\ref{thm:EBPCA} of the Main Text by comparing the EB-PCA trajectory with the
trajectory of the oracle Bayes AMP iterates. Let $\bG^t,\bV^t,\bF^t,\bU^t$
denote the iterates of EB-PCA. We use $\tilde{\phantom{0}}$ to denote the
analogous iterates of the oracle Bayes AMP algorithm. For example,
corresponding to the EB-PCA iterates
\begin{align}
\bV^t&=\theta(\bG^t \mid M_t, \Sigma_t, \pi_t)\nonumber\\
\bF^t&=\bY\bV^t - \bU^{t-1}\cdot \gamma B_t^{\top} \quad \text{ where }
B_t = \langle \der \theta(\bG^t \mid M_t, \Sigma_t, \pi_t) \rangle,\label{eq:Bt}
\end{align}
we have the oracle Bayes AMP iterates
\begin{align}
\tilde{\bV}^t&=\theta(\tilde{\bG}^t \mid M_{*,t}, \Sigma_{*,t}, \pi_*)\nonumber\\
\tilde{\bF}^t&= \bY\tilde{\bV}^t - \tilde{\bU}^{t-1}\cdot \gamma
\tilde{B}_t^\top \quad \text{ where } \tilde{B_t} =	\langle \der
\theta(\tilde{\bG}^t \mid M_{*,t}, \Sigma_{*,t}, \pi_*)
\rangle.\label{eq:tildeBt}
\end{align}
This oracle Bayes AMP algorithm is initialized with $\tilde{\bG}^0=\bG$,
$\tilde{\bU}^{-1}=\bF \cdot \Sigma_{*,0}^{1/2}$, and $M_{*,0}=M_*$
and $\Sigma_{*,0}=\Sigma_*$ as described in \eqref{eq:PCAM} and \eqref{eq:PCASigma}.
The matrices $\{\bar{M}_{*,t},\bar{\Sigma}_{*,t}\}$ and
$\{M_{*,t+1}, \Sigma_{*,t+1}\}$ are defined iteratively by the true state
evolution (\eqref{eq:SE}), where $u_t(\cdot)$ and $v_t(\cdot)$
are the true posterior mean denoisers defined by \eqref{eq:bayes_amp_denoisors}.

\subsection{State evolution of oracle Bayes AMP}

The validity of the state evolution for this oracle Bayes AMP procedure was
shown in \cite{montanari2017estimation}.

\begin{theorem}
\label{thm:bayesAMP}
Consider the rank-$k$ signal-plus-noise model of \eqref{eq:model}, and suppose that
Assumption~\ref{assump:model} holds.  Then for each fixed iterate $t$, as
$n,d\to\infty$, almost surely
\begin{align*}
	(\bU, \tilde{\bF}^t) \toW (U, F_t) \quad \textnormal{and} \quad
	(\bV, \tilde{\bG}^t) \toW (V, G_t),
\end{align*}
where $U \sim \bar{\pi}_*$, $F_t \mid U \sim \cN(\bar{M}_{*,t} \cdot U,\;
\bar{\Sigma}_{*,t})$, $V \sim \pi_*$, and
$G_t \mid V \sim \cN(M_{*,t} \cdot V,\;\Sigma_{*,t})$.
\end{theorem}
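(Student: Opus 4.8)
The plan is to derive Theorem~\ref{thm:bayesAMP} from the general state-evolution theorem for rectangular ($\bbR^k$-valued) Approximate Message Passing with spectral initialization established in \cite{montanari2017estimation} (itself building on \cite{bayati2011dynamics,rangan2012iterative}). That theorem already accommodates arbitrary Lipschitz denoisers and the PCA initialization at the sample singular vectors, so what remains for us is to (i) reduce to the case of exactly orthonormal true PCs; (ii) verify the Lipschitz hypothesis for the Bayes posterior-mean denoisers; and (iii) check that the state-evolution recursion produced by the general theorem coincides with \eqref{eq:SE}.

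For (i), I would carry out the Gram--Schmidt reduction already used in the proof of Lemma~\ref{lemma:BBP}: replace $\bU,\bV$ by their orthogonalizations $\tilde{\bU},\tilde{\bV}$ scaled so that $n^{-1}\|\tilde{\bu}_i\|^2=d^{-1}\|\tilde{\bv}_i\|^2=1$, and put $\tilde{\bY}=n^{-1}\tilde{\bU}S\tilde{\bV}^\top+\bW$. By \eqref{eq:gramschmidtU}--\eqref{eq:gramschmidtV} we have $\|\tilde{\bY}-\bY\|\to0$ in operator norm, whence by the distinctness of the limiting singular values in Assumption~\ref{assump:model}(b) together with the Weyl and Davis--Kahan inequalities the top-$k$ sample singular vectors satisfy $n^{-1}\|\bF-\tilde{\bF}\|_F^2\to0$ and $d^{-1}\|\bG-\tilde{\bG}\|_F^2\to0$. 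Using that the Bayes denoisers are uniformly Lipschitz near the relevant state parameters — this is (ii), which Proposition~\ref{prop:uniformlipschitz} supplies from Assumption~\ref{assump:model}(d) — an induction on $t$ shows that the oracle AMP iterates driven by $\bY$ and by $\tilde{\bY}$, with the spectral initializations transported accordingly (including the Onsager term $\tilde{\bU}^{-1}=\bF\,\Sigma_{*,0}^{1/2}$, cf.\ Remark~\ref{remark:initialization}), stay $o(1)$-close in normalized Frobenius norm. Hence it suffices to apply the general theorem to $\tilde{\bY}$, whose spike has exactly orthonormal columns.

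The general theorem applied to $\tilde{\bY}$ then yields $(\bV,\tilde{\bG}^t)\toW(V,G_t)$ and $(\bU,\tilde{\bF}^t)\toW(U,F_t)$ with $F_t,G_t$ Gaussian conditionally on $U,V$, the means and covariances evolving by the recursion dictated by the rectangular-AMP Onsager corrections $-\tilde{\bU}^{t-1}\gamma\langle\der v_t(\tilde{\bG}^t)\rangle^\top$ and $-\tilde{\bV}^t\langle\der u_t(\tilde{\bF}^t)\rangle^\top$. The remaining task (iii) is to match this recursion to \eqref{eq:SE}: the mean parameters acquire the diagonal signal matrix $S$ and the factor $\gamma=d/n$ from the $1/n$ scaling of $\bW$ and the exact normalizations $n^{-1}\tilde{\bU}^\top\tilde{\bU}=d^{-1}\tilde{\bV}^\top\tilde{\bV}=\Id$, while the covariance parameters are the second-moment matrices $\bbE[u_t(F_t)u_t(F_t)^\top]$ and $\bbE[v_t(G_t)v_t(G_t)^\top]$, exactly as in \eqref{eq:SE}; for the Bayes denoisers \eqref{eq:bayes_amp_denoisors} the identities $\bbE[u_t(F_t)U^\top]=\bbE[u_t(F_t)u_t(F_t)^\top]$ and $\bbE[v_t(G_t)V^\top]=\bbE[v_t(G_t)v_t(G_t)^\top]$ additionally yield the simplification \eqref{eq:SigmaMrelation}. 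I expect steps (i) and (iii) to be routine bookkeeping; the one genuinely delicate ingredient is the spectral-initialization part of the general theorem of \cite{montanari2017estimation} — a reduction to AMP with a $\bW$-independent initialization whose state-evolution fixed point matches the BBP alignment of $\bG$ from Lemma~\ref{lemma:BBP} and Corollary~\ref{cor:PCAgaussian} — which is where both the dependence of $\tilde{\bG}^0=\bG$ on $\bW$ and the necessity of the Onsager term $\tilde{\bU}^{-1}=\bF\,\Sigma_{*,0}^{1/2}$ (in place of $0$) are handled.
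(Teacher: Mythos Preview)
Your proposal is correct and follows the same route as the paper: both derive the result from the general spectral-initialization AMP state-evolution theorem in \cite{montanari2017estimation}. The paper's proof is in fact a one-line citation of \cite[Theorem 7]{montanari2017estimation}, specialized to the case where all $k$ signal values are distinct and super-critical; your sketch simply unpacks the hypotheses that need to be verified (Lipschitz denoisers via Assumption~\ref{assump:model}(d), matching the recursion to \eqref{eq:SE}) and adds the Gram--Schmidt reduction of step~(i), which the paper absorbs into the cited reference rather than repeating here.
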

\begin{proof}
This follows from \cite[Theorem 7]{montanari2017estimation}, specializing to the
setting where all $k$ signal values $s_1,\ldots,s_k$ are distinct and
super-critical.
\end{proof}

We elaborate on the oracle Bayes initialization
\[\bU^{-1}=\bF \cdot \Sigma_*^{1/2}\]
discussed in Remark \ref{remark:initialization} of the Main Text,
corresponding to $\bF \cdot \Sigma_0^{1/2}$ in Algorithm \ref{alg:EBPCA}.
Informally, initializing AMP with the sample PCs
$\bG$ may be understood as first applying a large number of iterations
of a \emph{linear} AMP iteration whose fixed points are approximately
the sample PCs in the $n,d \to \infty$ limit, and then transitioning the
algorithm to apply the non-linear oracle Bayes denoisers.
This linear AMP algorithm applies the functions $u_{t,\text{lin}}(\bF)
=\bF \cdot D$ and $v_{t,\text{lin}}(\bG)=\bG \cdot E$ for two fixed diagonal
matrices
$D,E \in \bbR^{k \times k}$. Then $\langle \der u_{t,\text{lin}}(\bF)
\rangle^\top=D$ and $\langle \der v_{t,\text{lin}}(\bG) \rangle^\top=E$.
For $\bF,\bG$ to be approximate fixed
points of the linear AMP iterations, we require
\begin{align*}
\bF& \approx \bY v_{t,\text{lin}}(\bG)-u_{t-1,\text{lin}}(\bF) \cdot \gamma E
=\gamma \bF \Lambda E-\gamma \bF DE\\
\bG& \approx \bY^\top u_{t,\text{lin}}(\bF)-v_{t,\text{lin}}(\bG) \cdot D
=\bG \Lambda D-\bG ED.
\end{align*}
Thus $D,E$ should satisfy $\Id \approx \gamma(\Lambda-D)E$ and
$\Id \approx (\Lambda-E)D$. Here $\Lambda$ is random, but by
\eqref{eq:PCAsigma} converges to $\Sigma_*^{1/2} \cdot (S^2+\Id)$
for large $n,d$. Thus we choose $D,E$ to solve the pair of equations
\[\Id=\gamma\Big(\Sigma_*^{1/2} \cdot (S^2+\Id)-D\Big)E,
\qquad \Id=\Big(\Sigma_*^{1/2} \cdot (S^2+\Id)-E\Big)D,\]
yielding
\[D=\Sigma_*^{1/2}, \qquad E=\gamma^{-1/2}\bar{\Sigma}_*^{1/2}\]
where $\Sigma_*,\bar{\Sigma}_*$ are as described in \eqref{eq:PCASigma}.
Then corresponding to the PCA initialization $\bG^0=\bG$, we should set
\[\bU^{-1}=u_{t,\text{lin}}(\bF)=\bF \cdot \Sigma_*^{1/2}.\]

More formally, the analysis of \cite{montanari2017estimation} shows that the
state evolution for the iterates of the oracle Bayes AMP procedure initialized
at $\bG^0=\bG$ coincides with that of an AMP procedure using the matrix
\[\tilde{\bY}=\frac{1}{n}\bF \Lambda \bG^\top
+\bP_\bF^\perp\Big(\frac{1}{n}\bU S \bV^\top+\tilde{\bW}\Big)
\bP_\bG^\perp,\]
where $\bP_\bF^\perp$ and $\bP_\bG^\perp$ are the projections orthogonal to
the column spans of $\bF$ and $\bG$, and $\tilde{\bW}$ is a copy of $\bW$
independent of all other quantities. Following the
calculations in \cite{montanari2017estimation}, it may be verified that setting
$\bU^{-1}=\bF \cdot \Sigma_*^{1/2}$ rather than $\bU^{-1}=0$ is required for
the state evolution to be correct for describing
\[\bF^0=\tilde{\bY}v_0(\bG)-\bU^{-1} \cdot \gamma \langle v_0(\bG) \rangle.\]

\subsection{Comparison with oracle Bayes AMP}

\begin{proposition}\label{prop:states_non_degeneracy}
	Under Assumption~\ref{assump:model}, for all $t$, the matrices
$\Sigma_{*,t},\bar{\Sigma}_{*,t} \in \mathbb{R}^{k \times k}$ are invertible.
\end{proposition}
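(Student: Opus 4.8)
The plan is to prove by induction on $t$ that $\Sigma_{*,t}$ and $\bar{\Sigma}_{*,t}$ are positive-definite (hence invertible). The base case $t=0$ is immediate: from (\ref{eq:PCASigma}), $\Sigma_{*,0}=\Sigma_*=\diag(\sigma_{*,1}^2,\ldots,\sigma_{*,k}^2)$ with each $\sigma_{*,i}^2=\tfrac{1+\gamma s_i^2}{\gamma s_i^2(s_i^2+1)}>0$ because all $s_i>0$ by Assumption \ref{assump:model}(b); similarly $\bar{\Sigma}_{*,0}=\bar{\Sigma}_*\succ 0$. (One should also note $\Sigma_*\prec\Id$, since $\mu_{*,i}^2=1-\sigma_{*,i}^2\in(0,1)$ in the super-critical regime, which is what makes the oracle Bayes initialization well-defined.)

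For the inductive step, recall from the state evolution (\ref{eq:SE}) with the oracle Bayes denoisers (\ref{eq:bayes_amp_denoisors}) that $\bar{\Sigma}_{*,t}=\gamma\cdot\bbE[v_t(G_t)v_t(G_t)^\top]$ where $v_t(\cdot)=\theta(\cdot\mid M_{*,t},\Sigma_{*,t},\pi_*)$ and $G_t\mid V\sim\cN(M_{*,t}\cdot V,\Sigma_{*,t})$, and analogously for $\Sigma_{*,t+1}$ in terms of $u_t$ and $F_t$. Since $\bbE[v_t(G_t)v_t(G_t)^\top]$ is always positive-semidefinite, it suffices to rule out degeneracy, i.e.\ to show there is no nonzero $a\in\bbR^k$ with $a^\top v_t(G_t)=0$ almost surely. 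The cleanest way is to use the relation (\ref{eq:SigmaMrelation}), $\bar{M}_{*,t}=\bar{\Sigma}_{*,t}\cdot S$ and $M_{*,t+1}=\Sigma_{*,t+1}\cdot S$, together with the fact that for the oracle Bayes posterior mean, $\bbE[v_t(G_t)v_t(G_t)^\top]=\bbE[v_t(G_t)V^\top]$. The matrix $\bbE[v_t(G_t)V^\top]$ is the second-moment matrix of the posterior mean against the truth; if $a^\top v_t(G_t)=0$ a.s.\ then $a^\top\bbE[v_t(G_t)V^\top]=0$, so $a$ lies in the kernel of $M_{*,t+1}^\top=S^\top\Sigma_{*,t+1}$. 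We would argue instead directly: by the inductive hypothesis $\Sigma_{*,t}$ is invertible, so the compound-decision model $G_t\mid V\sim\cN(M_{*,t}V,\Sigma_{*,t})$ is non-degenerate, and since $\bbE_{V\sim\pi_*}[VV^\top]=\Id$ by the normalization in Assumption \ref{assump:model}(c), the random vector $V$ is genuinely $k$-dimensional (its covariance is $\Id$). Then $\bbE[v_t(G_t)V^\top]=\bbE[\bbE[V\mid G_t]V^\top]=\bbE[\bbE[V\mid G_t]\bbE[V\mid G_t]^\top]+\text{(uncorrelated terms)}$; more precisely, by the tower property $\bbE[v_t(G_t)V^\top]=\bbE[v_t(G_t)v_t(G_t)^\top]\succeq 0$, and this is strictly positive-definite unless $a^\top v_t(G_t)=0$ a.s.\ for some $a\neq 0$, which would force $a^\top\bbE[V\mid G_t]=0$ a.s., i.e.\ $\bbE[a^\top V\mid G_t]=0$. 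But one checks that $G_t$ retains full information in the directions spanned by $V$ because $M_{*,t}$ is invertible (being $\Sigma_{*,t}\cdot S$ with both factors invertible by the inductive hypothesis and Assumption \ref{assump:model}(b)): conditioning on $G_t$ is at least as informative as conditioning on $M_{*,t}^{-1}G_t=V+M_{*,t}^{-1}\Sigma_{*,t}^{1/2}Z$, a non-degenerate Gaussian channel, under which no nonzero linear functional of $V$ has vanishing conditional expectation given the normalization $\bbE[VV^\top]=\Id$. This yields $\bar{\Sigma}_{*,t}\succ 0$, and the same argument with $(u_t,F_t,\bar{\pi}_*,\bar{M}_{*,t},\bar{\Sigma}_{*,t})$ in place of $(v_t,G_t,\pi_*,M_{*,t},\Sigma_{*,t})$, using $\bbE_{U\sim\bar{\pi}_*}[UU^\top]=\Id$ and invertibility of $\bar{M}_{*,t}=\bar{\Sigma}_{*,t}S$, gives $\Sigma_{*,t+1}\succ 0$, completing the induction.

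The main obstacle is the non-degeneracy claim itself: showing that the posterior-mean second-moment matrix $\bbE[\bbE[V\mid G_t]\bbE[V\mid G_t]^\top]$ has no nontrivial kernel. I expect this to follow cleanly from (i) invertibility of $M_{*,t}$ (from the inductive hypothesis plus Assumption \ref{assump:model}(b), via $M_{*,t}=\Sigma_{*,t}S$), which makes the channel $V\mapsto G_t$ equivalent to an invertible linear map plus non-degenerate additive Gaussian noise, and (ii) the normalization $\bbE[VV^\top]=\Id$, which guarantees $V$ is not supported on a proper subspace. Together these give, for any $a\neq 0$, that $a^\top V$ is a nonconstant random variable whose conditional expectation given a non-degenerate Gaussian observation cannot be identically zero — indeed $\bbE[(a^\top\bbE[V\mid G_t])^2]=\bbE[(\bbE[a^\top V\mid G_t])^2]$, and if this were zero then $a^\top V$ would be independent of $G_t$, contradicting that $G_t=M_{*,t}V+\Sigma_{*,t}^{1/2}Z$ with $M_{*,t}$ invertible and $\Var(a^\top V)=\|a\|^2>0$. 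Packaging this observation carefully is the only real content; the rest is bookkeeping with (\ref{eq:SE}) and (\ref{eq:SigmaMrelation}).
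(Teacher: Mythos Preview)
Your inductive framework is right and matches the paper: the only content is showing that if $\Sigma_{*,t}\succ 0$ (so that $M_{*,t}=\Sigma_{*,t}S$ is invertible by (\ref{eq:SigmaMrelation})), then $\bbE\big[\bbE[V\mid G_t]^{\otimes 2}\big]\succ 0$. One minor correction to your base case: $\bar{\Sigma}_{*,0}$ is \emph{not} the matrix $\bar{\Sigma}_*$ of (\ref{eq:PCASigma}); that matrix describes the sample PCs $\bF$, whereas $\bar{\Sigma}_{*,0}$ comes from the state-evolution update (\ref{eq:SE}) applied at $t=0$ and depends on $\pi_*$. So the base case should only assert $\Sigma_{*,0}=\Sigma_*\succ 0$, with $\bar{\Sigma}_{*,0}\succ 0$ obtained from the inductive step.

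The real gap is your final inference: ``$\bbE[a^\top V\mid G_t]=0$ a.s.\ implies $a^\top V$ is independent of $G_t$'' is false in general (e.g.\ $V\sim\cN(0,1)$, $G=V^2$). What you actually get is \emph{uncorrelatedness}, and that is already enough. From $\bbE[a^\top V\mid G_t]=0$ a.s.\ one has $\bbE[a^\top V]=0$ and $\bbE[(a^\top V)\,G_t]=\bbE\big[G_t\,\bbE[a^\top V\mid G_t]\big]=0$. On the other hand, using $Z\perp V$ and $\bbE[VV^\top]=\Id$ from Assumption~\ref{assump:model}(c),
\[
\bbE[(a^\top V)\,G_t]=M_{*,t}\,\bbE[(a^\top V)\,V]=M_{*,t}\,\bbE[VV^\top]a=M_{*,t}a,
\]
so $M_{*,t}a=0$ and hence $a=0$. (Your side assertion $\operatorname{Var}(a^\top V)=\|a\|^2$ also requires $\bbE[a^\top V]=0$, which is not part of Assumption~\ref{assump:model}(c) but does follow once you suppose the conditional mean vanishes; in any case it is not needed in the corrected argument.)

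For comparison, the paper closes this step analytically rather than through second moments: it writes the posterior mean $f(x)=\bbE[\phi^\top V\mid V+M_{*,t}^{-1}\Sigma_{*,t}^{1/2}Z=x]$ as a ratio whose numerator is $\bbE\big[\phi^\top V\exp(-\tfrac12 V^\top\Omega V+x^\top\Omega V)\big]$, observes that $f\equiv 0$ on $\bbR^k$ forces the numerator to vanish identically, differentiates the numerator in $x$, and contracts with $\phi^\top\Omega^{-1}$ to obtain $\bbE[(\phi^\top V)^2\exp(\cdots)]=0$, whence $\phi^\top V=0$ a.s., contradicting $\bbE[(\phi^\top V)^2]=\|\phi\|^2$. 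Once corrected, your covariance route is shorter and uses only $\bbE[VV^\top]=\Id$ and the invertibility of $M_{*,t}$.
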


\begin{proof}
The initialization $\Sigma_{*,0}$ is a diagonal matrix with non-zero diagonal
entries, thus invertible. We will show, if $\Sigma_{*,t}$ is invertible, then
$\bar{\Sigma}_{*,t}$ is also invertible.

Suppose by contradiction that for some non-zero $\phi \in \mathbb{R}^k$,
\[0=\phi^\top \bar{\Sigma}_{*,t}\phi=\gamma \cdot
\mathbb{E}\Big[\mathbb{E}[\phi^\top V \mid M_{*,t} V +
\Sigma_{*,t}^{1/2}Z]^2\Big].\]
Note that by \eqref{eq:SigmaMrelation}, $M_{*,t}$ must be invertible if
$\Sigma_{*,t}$ is invertible. Then almost surely with respect to the
distribution of $V+M_{*,t}^{-1}\Sigma_{*,t}^{1/2}Z$, we must have
$\mathbb{E}[\phi^\top V \mid V+M_{*,t}^{-1}\Sigma_{*,t}^{1/2}Z]=0$.
Denote $\Omega=\operatorname{Cov}[M_{*,t}^{-1}\Sigma_{*,t}^{1/2}Z]^{-1}
=\Sigma_{*,t}^{-1/2}M_{*,t}^2\Sigma_{*,t}^{-1/2}$.
Since $V+M_{*,t}^{-1}\Sigma_{*,t}^{1/2}Z$ is supported on all of
$\mathbb{R}^k$, this implies that the (continuous) function
\[f(x)=\mathbb{E}\Big[\phi^\top V \;\Big|\;V + M_{*,t}^{-1}
\Sigma_{*,t}^{1/2}Z=x \Big]
=\frac{\mathbb{E}\Big[\phi^\top V \cdot
\exp\big(-\frac{1}{2}V^\top \Omega V+x^\top \Omega V\big)\Big]}
{\mathbb{E}\Big[\exp\big(-\frac{1}{2}V^\top \Omega
V+x^\top \Omega V\big)\Big]}\]
must be identically 0 for all $x \in \mathbb{R}^k$. Then the gradient of its
numerator in $x$,
\[g(x)=\mathbb{E}\Big[\Omega V \cdot \phi^\top V \cdot
\exp\big(-\tfrac{1}{2}V^\top \Omega V+x^\top \Omega V\big)\Big],\]
is also identically 0 for all $x \in \mathbb{R}^k$, so
\[0=\phi^\top \Omega^{-1}g(x)
=\mathbb{E}[(\phi^\top V)^2 \exp\big(-\tfrac{1}{2}V^\top \Omega V
+x^\top \Omega V\big)\Big].\]
Then we must have $\phi^\top V=0$ almost surely. However, by assumption
$\mathbb{E}[\phi^\top V V^\top \phi] = \phi^\top \Id_{k\times k} \phi = \|\phi\|^2 \neq 0$, a contradiction.
So $\bar{\Sigma}_{*,t}$ is invertible.

If $\bar{\Sigma}_{*,t}$ is invertible, an analogous argument shows
$\Sigma_{*,t+1}$ is invertible, concluding the proof.
\end{proof}	

\begin{proof}[Proof of Theorem~\ref{thm:EBPCA}]
Let $H_t$ be the hypothesis that the following hold almost surely as $n,d \to
\infty$:
\begin{enumerate}
	\item $(M_t, \Sigma_t) \to (M_{*,t}, \Sigma_{*,t})$,
	\item $\frac{1}{n} \|\bU^{t-1} - \ampbU^{t-1} \|_F^2 \to 0$,
	\item $\frac{1}{d} \| \bG^t - \ampbG^t\|_F^2 \to 0$.
\end{enumerate}
Similarly, let $\bar{H}_t$ be the hypothesis that the following hold almost
surely as $n,d \to \infty$:
\begin{enumerate}
	\item $(\bar{M}_t,\bar{\Sigma}_t) \to (\bar{M}_{*,t},\bar{\Sigma}_{*,t})$,
	\item $\frac{1}{d} \|\bV^t - \ampbV^t\|_F^2 \to 0$,
	\item $\frac{1}{n}\|\bF^t -\ampbF^t \|_F^2 \to 0$.
\end{enumerate}
Note that by Theorem~\ref{thm:bayesAMP}, $(\bV, \ampbG^t) \toW (V, G_t)$.
Then $H_t.3$ implies $(\bV,\bG^t) \toW (V,G_t)$. Similarly, $\bar{H}_t.3$
implies $(\bU,\bF^t) \toW (U,F_t)$, which establishes Theorem \ref{thm:EBPCA}.

To complete the proof, we show that $H_0$ holds, that $H_t$ implies $\bar{H}_t$,
and that $\bar{H}_t$ implies $H_{t+1}$.

\paragraph{Step 1: $H_0$ holds.} 

Lemma~\ref{lemma:BBP} implies that the estimates $\hat{S}$, $M_0$, and
$\Sigma_0$ in Algorithm~\ref{alg:EBPCA} are consistent for $S$, $M_{*,0}$, and
$\Sigma_{*,0}$ as $n,d \to \infty$. Then $H_0.1$ holds. $H_0.3$ holds trivially
because $\bG^0=\tilde{\bG}^0$, as the two algorithms have the same
initialization. $H_0.2$ also holds, because
\begin{align*}
\frac{1}{n} \|\bU^{-1} - \tilde{\bU}^{-1}\|_F^2 &\leq
\frac{1}{n}  \| \bF \|_F^2 \cdot \| \Sigma_0 - \Sigma_{*,0}\| \to 0
\end{align*}
by the consistency of $\Sigma_{0}$ for $\Sigma_{*,0}$ and the normalization
$n^{-1}\|\bF\|_F^2=k$.

\paragraph{Step 2: $H_t \Rightarrow \bar{H}_t.2$.} Let $g_j^t$ and
$\tilde{g}_j^t$ denote the rows of $\bG^t$ and $\tilde{\bG}^t$. Then
\begin{align*}
	\frac{1}{d} \|\bV^t - \ampbV^t\|_F^2
	 &= \frac{1}{d} \sum_{j=1}^d  \| \theta(\ampg^t_j\mid M_{*,t}, \Sigma_{*,t}, \pi_*) - \theta(g_{j}^t\mid M_t, \Sigma_t, \pi_t)\|^2 \\
	 &\leq 
	 \frac{2}{d} \sum_{j=1}^d  \| \theta(g^t_j\mid M_{*,t}, \Sigma_{*,t}, \pi_*) - \theta(g_{j}^t\mid M_t, \Sigma_t, \pi_t)\|^2 && \mathrm{(I)} \\
	 &\hspace{1in}+ 
	 \frac{2}{d} \sum_{j=1}^d  \| \theta(\ampg^t_j\mid M_{*,t}, \Sigma_{*,t}, \pi_*) - \theta(g_{j}^t\mid M_{*,t}, \Sigma_{*,t}, \pi_*)\|^2. && \mathrm{(II)}
\end{align*}
By Corollary~\ref{cor:empBayes_relativeError}, $\mathrm{(I)} \toas 0$. By
Assumption~\ref{assump:model}, $\theta(\cdot \mid M_{*,t}, \Sigma_{*,t}, \pi_*)$
is Lipschitz continuous with some Lipschitz constant $L$. Applying this and
$H_t.3$,
\begin{align*}
	\mathrm{(II)} \leq \frac{2L^2}{d} \sum_{j=1}^d \|\ampg_j^t - g_j^t\|^2 =
2L^2 \cdot \frac{1}{d} \|\ampbG^t - \bG^t\|_F^2 \toas 0.
\end{align*}

\paragraph{Step 3: $H_t \Rightarrow \bar{H}_t.3$.} Recalling \eqref{eq:Bt} and
\eqref{eq:tildeBt},
\begin{align*}
	\frac{1}{\sqrt{n}}\| \bF^t - \ampbF^t\|_F
	&\leq \|\bY\|\cdot \frac{1}{\sqrt{n}}
\|\bV^t - \ampbV^t\|_F + \|B_t - \ampB_t\| \cdot \frac{\gamma}{\sqrt{n}}
\|\ampbU^{t-1}\|_F + \|B_t\| \cdot \frac{\gamma}{\sqrt{n}}
\| \ampbU^{t-1} - \bU^{t-1}\|_F\\
	&=R_1+R_2+R_3.
\end{align*}
Since $\|\bY\|$ converges to a constant almost surely, and $n^{-1}\|\bV^t -
\ampbV^t\|_F^2\toas 0$, we have $R_1 \toas 0$. To show $R_2 \toas 0$, note that
$n^{-1}\|\ampbU^{t-1}\|_F^2$ converges to a constant almost surely by
Theorem~\ref{thm:bayesAMP}, so we only need to show $\|B_t - \ampB_t\|\toas 0$.
Let us write
\begin{align*}
	\|B_t - \ampB_t\|
	&=  \bigg\|\frac{1}{d} \sum_{j=1}^d \der \theta(\ampg^t_j\mid M_{*,t},
\Sigma_{*,t}, \pi_*) - \frac{1}{d} \sum_{j=1}^d \der \theta(g_{j}^t\mid M_t,
\Sigma_t, \pi_t)\bigg\|\\
	&\leq
	\frac{1}{d}\sum_{j=1}^d \|\der \theta(g^t_j\mid M_{*,t}, \Sigma_{*,t}, \pi_*) - \der \theta(g_{j}^t\mid M_t, \Sigma_t, \pi_t) \| && \mathrm{(I)} \\
	&\hspace{1in}+ 
	\frac{1}{d}\sum_{j=1}^d \|\der \theta(\ampg^t_j\mid M_{*,t}, \Sigma_{*,t}, \pi_*) - \der \theta(g_{j}^t\mid M_{*,t}, \Sigma_{*,t}, \pi_*) \| && \mathrm{(II)}.
\end{align*}
By Proposition~\ref{prop:surrogate_denoiser_derivative}, $\mathrm{(I)}\toas 0$.
To show $\mathrm{(II)} \toas 0$, note that the Lipschitz assumption implies
$\|\der \theta(\cdot \mid M_{*,t}, \Sigma_{*,t}, \pi_*)\| \leq L$.
Then for any fixed $B > 0$,
\begin{align*}
	\mathrm{(II)} &\leq \frac{1}{d}\sum_{j=1}^d \|  \der
\theta(\ampg^t_j\mid M_{*,t}, \Sigma_{*,t}, \pi_*) - \der \theta(g_{j}^t\mid
M_{*,t}, \Sigma_{*,t}, \pi_*) \| \cdot \mathbbm{1}\{g_j^t,\ampg_j^t \in \ball^k(B)\}  \\
	&\hspace{1in}+ \frac{2L}{d} \sum_{j=1}^d \mathbbm{1}\{g_j^t \not\in \ball^k(B)\}
	+ \frac{2L}{d} \sum_{j=1}^d \mathbbm{1}\{\ampg_j^t \not\in \ball^k(B)\}.
\end{align*}
The last two terms are at most
\begin{align*}
	\frac{2L}{d B^2} (\|\bG^t\|_F^2 + \|\ampbG^t\|_F^2).
\end{align*}
To control the first term, observe that 
$f(x)>c$ and $\|\der^i f(x)\|<C$ for all $x \in \ball^k(B)$, all $i=1,2,3$,
and some constants $C,c>0$ (depending on $B$). Then differentiating
\eqref{eq:second_order_tweedies_formula} again in $x$ yields
$\|\der^2 \theta(x \mid M_{*,t}, \Sigma_{*,t},
\pi_*)\|<C_B$ for a constant $C_B>0$ and all $x \in
\ball^k(B)$. Thus the first term is at most
\begin{align*}
	\frac{C_B}{d} \sum_{j=1}^d \| \ampg_j^t - g_j^t \| \leq
\frac{C_B}{\sqrt{d}} \| \ampbG^t - \bG^t \|_F.
\end{align*}
Applying $H_t.3$ and taking first the limit $n,d \to \infty$, followed by the
limit $B \to \infty$, we have $\mathrm{(II)}\toas 0$, and thus $\|B_t -
\ampB_t\|\toas 0$. This shows $R_2 \toas 0$. This further implies
$\limsup_n \|B_t\|<\infty$ because Theorem \ref{thm:bayesAMP} guarantees that
$\limsup_n \|\ampB_t\|<\infty$. Combining this with $H_t.2$ shows $R_3\toas 0$.

\paragraph{Step 4: $H_t \Rightarrow \bar{H}_t.1$.}
As $\theta(\cdot \mid M_{*,t}, \Sigma_{*,t}, \pi_t) \otimes \theta(\cdot \mid
M_{*,t}, \Sigma_{*,t}, \pi_t)\in\PL(2)$, Theorem~\ref{thm:bayesAMP} implies
\begin{align*}
	\bar{\Sigma}_{*,t}&=\lim_{n,d \to \infty} \frac{\gamma}{d} \sum_{j=1}^d
\tilde{v}_j^t\otimes \tilde{v}_j^t=(\tilde{\bV}^t)^\top \tilde{\bV}^t/n.
\end{align*}
Applying the definition of $\bar{\Sigma}_t$ in Algorithm~\ref{alg:EBPCA},
together with $\bar{H}_t.2$ already shown and Cauchy-Schwarz, 
this limit is the same as
\[\lim_{n,d \to \infty} \bar{\Sigma}_t=\lim_{n,d \to \infty}
(\bV^t)^\top\bV^t/n,\]
so $\bar{\Sigma}_t \to \bar{\Sigma}_{*,t}$. Recalling \eqref{eq:SigmaMrelation},
$\bar{M}_t=\bar{\Sigma}_t \cdot \hat{S}$, and applying the consistency of
$\hat{S}$, this implies also $\bar{M}_t \to \bar{M}_{*,t}$.

This completes the proof that $H_t$ implies $\bar{H}_t$. The proof that
$\bar{H}_t$ implies $H_{t+1}$ is the same as steps 2--4 above.
\end{proof}

% !TEX root = ../supplement.tex

\section{Analysis of the limiting risk}\label{sec:limiting_risk}

\def\Ebb{\mathbb{E}}
\def\Rbb{\mathbb{R}}

\subsection{Reparametrization of the states}\label{sec:reparametrize_states}

Recall from Section \ref{sec:bayesoptimal} of the Main Text the definitions
\[\bar{Q}_{*,t}= \frac{1}{\gamma} S^{-1/2}\bar{M}_{*,t}^\top \bar{\Sigma}_{*,t}^{-1}
\bar{M}_{*,t}S^{-1/2},
\qquad Q_{*,t}=S^{-1/2}M_{*,t}^\top \Sigma_{*,t}^{-1} M_{*,t}S^{-1/2}\]
and the map
\[F_\pi(Q)=\bbE\Big[\bbE[\Theta \mid \Theta+Q^{-1/2}Z]^{\otimes 2}\Big],
\qquad \Theta \sim \pi,\;Z \sim \cN(0,\Id) \text{ independent}.\]
For each $\bar{Q}_{*,t}$ and $Q_{*,t+1}$ with $t \geq 0$,
recalling the identities $\bar{M}_{*,t}=\bar{\Sigma}_{*,t} \cdot S$
and $M_{*,t+1}=\Sigma_{*,t+1} \cdot S$ from \eqref{eq:SigmaMrelation}, we have
\[\bar{Q}_{*,t}=\frac{1}{\gamma} S^{1/2}  \bar{\Sigma}_{*,t}  S^{1/2},
\qquad Q_{*,t+1}=S^{1/2}  \Sigma_{*,t+1} S^{1/2}.\]
Then substituting from \eqref{eq:SE} the definition
\begin{align*}
\bar{\Sigma}_{*,t}&=\gamma \cdot \bbE\Big[\Ebb[V \mid M_{*,t} V +
\Sigma_{*,t}^{1/2}Z]^{\otimes 2}\Big]\\
&=\gamma \cdot \bbE\Big[\Ebb[V \mid  S^{1/2}V
+S^{1/2}M_{*,t}^{-1}\Sigma_{*,t}^{1/2}Z]^{\otimes 2}\Big]
=\gamma \cdot \bbE\Big[\Ebb[V \mid S^{1/2}V+Q_{*,t}^{-1/2}Z]^{\otimes 2}\Big],
\end{align*}
we obtain
\begin{align*}
	\bar{Q}_{*,t}&=\frac{1}{\gamma} S^{1/2} \bar{\Sigma}_{*,t} S^{1/2} 
	=\Ebb\Big[\Ebb[S^{1/2} V \mid
S^{1/2}V+Q_{*,t}^{-1/2}Z]^{\otimes_2}\Big]=F_{S^{1/2}\pi_*} (Q_{*,t}).
\end{align*}
Similarly,
\begin{align*}
	{Q}_{*,t+1}&= S^{1/2} {\Sigma}_{*,t+1} S^{1/2} 
	=\Ebb\Big[\Ebb[S^{1/2} U \mid S^{1/2}U+(\gamma
\bar{Q}_{*,t})^{-1/2}Z]^{\otimes_2}\Big]=F_{S^{1/2}\bar{\pi}_*} (\gamma\cdot \bar{Q}_{*,t}).
\end{align*}
This verifies the equivalent forms of \eqref{eq:SEQ} for the state evolution
as stated in the Main Text.

\subsection{Progression of SNR}

We now prove Proposition \ref{prop:SE} of the Main Text. In the standardized compound decision
model
\[\Theta \sim \pi, \qquad X \mid \Theta \sim \cN(0,Q^{-1})\]
parametrized by $Q$ and $\pi$, let us define
\[M_\pi(Q)=\bbE\big[\operatorname{Cov}[\Theta \mid X]\big]
=\bbE\big[(\Theta-\bbE[\Theta \mid X])^{\otimes 2}\big]\]
so that
\begin{equation}\label{eq:FMrelation}
F_\pi(Q)=\bbE[\Theta\Theta^\top]-M_\pi(Q).
\end{equation}
We will use the following properties of $F_\pi(Q)$ and $M_\pi(Q)$ established in
\cite{miolane2017fundamental} and \cite{reeves2018mutual}.

\begin{lemma}[Lemma 9 in \cite{miolane2017fundamental}]\label{lemma:Fnondecreasing}
If $Q_1 \preceq Q_2$, then $F_{\pi}(Q_1) \preceq F_{\pi}(Q_2)$.
\end{lemma}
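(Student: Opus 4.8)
The plan is to show that $Q \mapsto F_\pi(Q)$ is monotone in the Loewner order by realizing the two observation models on a common probability space, so that the higher-noise model is a ``garbled'' version of the lower-noise one, and then applying the Pythagorean identity for conditional expectations. Concretely, since $Q_1 \preceq Q_2$ and matrix inversion is operator-antitone on positive-definite matrices, $Q_1^{-1}-Q_2^{-1} \succeq 0$. Let $\Theta \sim \pi$, $Z \sim \cN(0,\Id)$, and $W \sim \cN(0,\,Q_1^{-1}-Q_2^{-1})$ be mutually independent, and set
\[
X_2=\Theta+Q_2^{-1/2}Z, \qquad X_1=X_2+W.
\]
Then $X_1 \mid \Theta \sim \cN(\Theta,Q_1^{-1})$ and $X_2 \mid \Theta \sim \cN(\Theta,Q_2^{-1})$ are exactly the conditional laws defining $F_\pi(Q_1)$ and $F_\pi(Q_2)$, and because $W$ is independent of $(\Theta,X_2)$, conditionally on $X_2$ the variable $\Theta$ and the variable $X_1$ are independent; i.e.\ $(\Theta,X_2,X_1)$ is a Markov chain in this order.

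The second step is the tower property: conditional independence $\Theta \perp X_1 \mid X_2$ gives $\bbE[\Theta \mid X_1,X_2]=\bbE[\Theta \mid X_2]$, hence $\bbE[\Theta \mid X_1]=\bbE[\bbE[\Theta \mid X_2] \mid X_1]$. Writing $A=\bbE[\Theta \mid X_2]$, which lies in $L^2$ since $\pi \in \cP$ has finite second moment, and $B=\bbE[A \mid X_1]=\bbE[\Theta \mid X_1]$, the orthogonality $\bbE[(A-B)B^\top]=0$ yields the coordinatewise Pythagorean identity
\[
\bbE\big[AA^\top\big]=\bbE\big[BB^\top\big]+\bbE\big[(A-B)(A-B)^\top\big],
\]
whose last term is positive-semidefinite. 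Since $\bbE[AA^\top]=F_\pi(Q_2)$ and $\bbE[BB^\top]=F_\pi(Q_1)$, this gives $F_\pi(Q_2) \succeq F_\pi(Q_1)$, proving the lemma. (Equivalently, using $F_\pi(Q)=\bbE[\Theta\Theta^\top]-M_\pi(Q)$ and the law of total covariance along the same Markov chain, one obtains $M_\pi(Q_1)=M_\pi(Q_2)+\bbE[\operatorname{Cov}(A \mid X_1)] \succeq M_\pi(Q_2)$, i.e.\ monotonicity of the MMSE matrix.)

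The routine parts are the Gaussian-convolution bookkeeping (verifying $X_1 \mid \Theta$ has the stated conditional law) and the Pythagorean identity, both standard. The one point genuinely requiring care is the coupling: one must confirm that adding the independent Gaussian $W$ preserves the Markov structure $\Theta \to X_2 \to X_1$ in precisely the form needed for the tower property, and that $Q_1^{-1}-Q_2^{-1} \succeq 0$ (operator-antitonicity of $Q \mapsto Q^{-1}$). All expectations are finite because $\pi$ has finite second moment, so no further regularity is needed. Since this statement is a verbatim restatement of Lemma 9 of \cite{miolane2017fundamental}, an acceptable alternative is simply to cite that reference; the self-contained argument above is included for completeness.
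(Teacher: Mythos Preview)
Your proof is correct. The paper does not actually prove this lemma: it is stated as Lemma~9 of \cite{miolane2017fundamental} and invoked without argument, so there is no ``paper's own proof'' to compare against. Your self-contained coupling argument (degrade the higher-SNR observation by adding independent Gaussian noise, obtain the Markov chain $\Theta \to X_2 \to X_1$, and apply the tower/Pythagorean identity) is the standard way to establish this monotonicity, and you correctly note at the end that simply citing the reference would also suffice.
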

\begin{lemma}[Theorem 2 in \cite{reeves2018mutual}]\label{lemma:Mbound}
$M_{\pi}(Q) \preceq (\operatorname{Cov}[\Theta]^{-1} + Q )^{-1}$.
\end{lemma}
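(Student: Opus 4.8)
The plan is to deduce Lemma~\ref{lemma:Mbound} from two elementary facts: that the posterior mean minimizes the error covariance matrix in the Loewner order, and that the error covariance of the best \emph{linear} estimator of $\Theta$ from $X$ equals $(\operatorname{Cov}[\Theta]^{-1}+Q)^{-1}$. Throughout I write $\Sigma_\Theta=\operatorname{Cov}[\Theta]$, which is invertible in every use of this lemma in the present paper since the normalizations of Assumption~\ref{assump:model}(c) force $\operatorname{Cov}[V]=\operatorname{Cov}[U]=\Id$; in general one may restrict $\Theta$ to the affine hull of its support and read $\Sigma_\Theta^{-1}$ as the inverse within that subspace. In the standardized model, $X=\Theta+N$ with $N\sim\cN(0,Q^{-1})$ independent of $\Theta$, and $M_\pi(Q)=\bbE\big[(\Theta-\bbE[\Theta\mid X])^{\otimes 2}\big]$.

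First I would record the optimality of the posterior mean. For any square-integrable estimator $\hat\Theta=\hat\Theta(X)$, decompose $\Theta-\hat\Theta=(\Theta-\bbE[\Theta\mid X])+(\bbE[\Theta\mid X]-\hat\Theta)$. Conditioning on $X$, the second summand is deterministic while the first has conditional mean zero, so both cross terms in the outer product vanish in expectation, giving
\[\bbE\big[(\Theta-\hat\Theta)^{\otimes 2}\big]
=M_\pi(Q)+\bbE\big[(\bbE[\Theta\mid X]-\hat\Theta)^{\otimes 2}\big]\succeq M_\pi(Q).\]

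Next I would specialize to the linear MMSE estimator $\hat\Theta_{\mathrm{lin}}(X)=\bbE[\Theta]+\Sigma_\Theta(\Sigma_\Theta+Q^{-1})^{-1}(X-\bbE[X])$. A direct second-moment computation (using $\operatorname{Cov}(\Theta,X)=\Sigma_\Theta$ and $\operatorname{Cov}(X)=\Sigma_\Theta+Q^{-1}$) gives its error covariance as $\Sigma_\Theta-\Sigma_\Theta(\Sigma_\Theta+Q^{-1})^{-1}\Sigma_\Theta$, and multiplying this expression on the right by $\Sigma_\Theta^{-1}+Q$ and simplifying (the Woodbury matrix identity) shows it equals $(\Sigma_\Theta^{-1}+Q)^{-1}$. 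Plugging $\hat\Theta=\hat\Theta_{\mathrm{lin}}$ into the previous display yields $M_\pi(Q)\preceq(\Sigma_\Theta^{-1}+Q)^{-1}$, which is the claim.

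There is no substantive obstacle here: the statement and proof are standard (see \cite{reeves2018mutual}), and the only point meriting a sentence is the invertibility of $\operatorname{Cov}[\Theta]$ handled above. Accordingly, one may also simply cite the inequality from \cite{reeves2018mutual} rather than reproducing this argument.
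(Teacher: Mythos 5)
Your proof is correct, but it takes a different route from the paper only in the sense that the paper does not prove this lemma at all: it is imported verbatim as Theorem 2 of \cite{reeves2018mutual}, and the paper's argument consists of the citation. What you supply is the standard self-contained derivation of that cited fact: the orthogonality principle (the posterior mean minimizes the error covariance in the Loewner order, since the cross terms vanish after conditioning on $X$), followed by the computation that the linear MMSE estimator of $\Theta$ from $X=\Theta+Q^{-1/2}Z$ has error covariance $\Sigma_\Theta-\Sigma_\Theta(\Sigma_\Theta+Q^{-1})^{-1}\Sigma_\Theta=(\Sigma_\Theta^{-1}+Q)^{-1}$ by Woodbury. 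Both steps check out (square-integrability is guaranteed since $\cP$ has finite second moment), so your argument is a perfectly acceptable substitute for the citation; it buys self-containedness at the cost of a short page of routine computation, while the citation buys brevity and also the surrounding machinery of \cite{reeves2018mutual} if one wants refinements. One small correction to your parenthetical: Assumption \ref{assump:model}(c) normalizes the \emph{second-moment} matrices $\bbE[UU^\top]=\bbE[VV^\top]=\Id$, not the covariances, so $\operatorname{Cov}[\Theta]$ need not be the identity (it is $\Id-\bbE[\Theta]\bbE[\Theta]^\top$ after the $S^{1/2}$-free normalization); this does not affect the lemma or your proof, and in fact the paper later applies the bound with $\bbE[\Theta\Theta^\top]^{-1}$ in place of $\operatorname{Cov}[\Theta]^{-1}$, a weaker inequality that follows from the lemma when $\operatorname{Cov}[\Theta]$ is nonsingular, or directly from your argument by using the zero-intercept linear estimator.
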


\begin{proof}[Proof of Proposition \ref{prop:SE}]
For part (a), observe that
by~\eqref{eq:PCAsigma}, $Q_{*,0}$ is the diagonal
matrix
\begin{align*}
Q_{*,0}=\diag\bigg(\frac{1}{s_i} \cdot \frac{\gamma s_i^4 - 1}{\gamma s_i^2
+1}\bigg).
\end{align*}
Furthermore, applying \eqref{eq:FMrelation}, Lemma \ref{lemma:Mbound}, and
$\bbE[VV^\top]=\bbE[UU^\top]=\Id$,
\begin{align*}
\bar{Q}_{*,0}= F_{S^{1/2} \cdot \pi_*}(Q_0) 
&=\Ebb[S^{1/2}VV^\top S^{1/2}]-M_{S^{1/2} \cdot \pi_*}(Q_{*,0}) \\
&\succeq \Ebb[S^{1/2}VV^\top S^{1/2}]-\Big(\Ebb[S^{1/2}VV^\top
S^{1/2}]^{-1}+Q_{*,0}\Big)^{-1}\\
&=S-(S^{-1}+Q_0)^{-1}=\diag\bigg(\frac{\gamma s_i^4 - 1}{\gamma s_i
(1+s_i^2)}\bigg),\\
Q_1=F_{S^{1/2} \cdot \bar{\pi}_*}(\gamma \bar{Q}_{*,0}) 
&= \Ebb[S^{1/2}UU^\top S^{1/2}]-M_{S^{1/2} \cdot \bar{\pi}_*}(\gamma
\bar{Q}_{*,0})  \\
&\succeq \Ebb[S^{1/2}UU^\top S^{1/2}] - \Big(\Ebb[S^{1/2}UU^\top
S^{1/2}]^{-1} + \gamma \bar{Q}_{*,0}\Big)^{-1}\\
&=S-\bigg(S^{-1} + \diag\bigg(\frac{\gamma s_i^4 - 1}{s_i (1+s_i^2)}\bigg)
\bigg)^{-1}=Q_{*,0}.
\end{align*}
Thus $Q_{*,1} \succeq Q_{*,0}$. Then $\bar{Q}_{*,1} \succeq \bar{Q}_{*,0}$,
$Q_{*,2} \succeq Q_{*,1}$, etc.\ by \eqref{eq:SEQ} and the monotonicity of
$F_\pi$ established in Lemma \ref{lemma:Fnondecreasing}.
Since $\bar{Q}_{*,t} \preceq
\bbE[S^{1/2}VV^\top S^{1/2}]=S$ and similarly $Q_{*,t} \preceq S$,
this implies that $\{\bar{Q}_{*,t}\}_{t \geq 0}$ and $\{Q_{*,t}\}_{t \geq 0}$
must converge to some limits $\bar{Q}$ and $Q$, which must satisfy
\eqref{eq:fixedpoint}. This shows part (a).

For part (b), in the compound decision model $X=M\Theta+\Sigma^{1/2}Z$ where
$\Theta \sim \pi$ and $Z \sim \cN(0,\Id)$,
set $Q=\Sigma^{-1/2}MS^{-1/2}$ and note that
\begin{align*}
\mmse(\pi \mid M, \Sigma) 
	&=\bbE\Big[\|\Theta - \Ebb[\Theta \mid M \Theta + \Sigma^{1/2} Z ]\|^2\Big] \\
	&=\Tr \bbE\Big[\operatorname{Cov}[\Theta \mid M \Theta + \Sigma^{1/2}
Z]\Big]\\
&=\Tr \bbE\Big[\operatorname{Cov}[\Theta \mid S^{1/2}\Theta + S^{1/2}M^{-1}
\Sigma^{1/2} Z]\Big]\\
&=\Tr S^{-1/2}
\bbE\Big[\operatorname{Cov}[S^{1/2}\Theta \mid S^{1/2}\Theta + Q^{-1/2}Z]\Big]
S^{-1/2}\\
&= \Tr S^{-1/2} M_{S^{1/2} \cdot \pi}(Q)S^{-1/2}.
\end{align*}
The relation of \eqref{eq:FMrelation} and part (a) imply
$M_{S^{1/2} \cdot \pi_*}(Q_{*,t+1}) \preceq M_{S^{1/2} \cdot \pi_*}(Q_{*,t})$,
so
\begin{align*}
\mmse(\pi_* \mid M_{*,t+1}, \Sigma_{*,t+1})&=
\Tr S^{-1/2} M_{S^{1/2} \cdot \pi_*}(Q_{*,t+1})S^{-1/2}\\
&\leq \Tr S^{-1/2} M_{S^{1/2} \cdot \pi_*}(Q_{*,t})S^{-1/2}
=\mmse(\pi_* \mid M_{*,t}, \Sigma_{*,t}).
\end{align*}
Similarly $\mmse(\bar{\pi}_* \mid \bar{M}_{*,t+1}, \bar{\Sigma}_{*,t+1})
\leq \mmse(\bar{\pi}_* \mid \bar{M}_{*,t}, \bar{\Sigma}_{*,t})$.
Since $\bV^0=\hat{\bV}$ is exactly the initial empirical Bayes estimate for
$\bV$, we have $\mmse(\pi_* \mid M_{*,0},\Sigma_{*,0})
=\mmse(\pi_*~\mid~M_*~,~\Sigma_*)$.
For $\bU$, we observe that the signal-to-noise matrix analogous to $Q_{*,0}$
defined by $\bar{M}_*$ and $\bar{\Sigma}_*$ for the left sample PCs is,
from~\eqref{eq:PCAsigma},
\[\frac{1}{\gamma}S^{-1/2}\bar{M}_*^\top \bar{\Sigma}_*^{-1}
\bar{M}_*S^{-1/2}
=\diag\bigg(\frac{1}{\gamma s_i} \cdot \frac{\gamma s_i^4-1}{s_i^2+1}\bigg).\]
This is exactly the lower bound established above for $\bar{Q}_{*,0}$, so
we also have $\mmse(\bar{\pi}_* \mid \bar{M}_{*,0}, \bar{\Sigma}_{*,0})
\leq \mmse(\pi_* \mid \bar{M}_*,\bar{\Sigma}_*)$. This shows part (b).
\end{proof}

\subsection{Bayes optimality}
\begin{proof}[Proof of Proposition~\ref{prop:bayes_optimality}]
Expanding the square, we have
\[\|\bU^t \hat{S} (\bV^t)^\top - \bU S \bV^{\top}\|_F^2
=\Tr (\bU^t)^\top\bU^t \hat{S} (\bV^t)^\top \bV^t \hat{S}
-2\Tr \bU^\top\bU^t \hat{S} (\bV^t)^\top \bV S
+\Tr \bU^\top\bU S \bV^\top \bV S.\]
Theorem \ref{thm:EBPCA} implies
\[n^{-1}(\bU^t)^\top \bU^t \to \bbE[U_tU_t^\top],
\quad n^{-1}(\bU^t)^\top \bU \to \bbE[U_tU^\top]
=\bbE[U_tU_t^\top], \quad n^{-1} \bU^\top \bU \to \bbE[UU^\top]\]
\[d^{-1}(\bV^t)^\top \bV^t \to \bbE[V_tV_t^\top],
\quad d^{-1}(\bV^t)^\top \bV \to \bbE[V_tV^\top]
=\bbE[V_tV_t^\top], \quad d^{-1} \bV^\top \bV \to \bbE[VV^\top]\]
where $\bbE[U_tU_t^\top]=S^{-1/2}Q_{t+1}S^{-1/2}$
and $\bbE[V_tV_t^\top]=S^{-1/2}\bar{Q}_tS^{-1/2}$. Together with the consistency
$\hat{S} \to S$,
\[\frac{1}{nd}\|\bU^t \hat{S} (\bV^t)^\top - \bU S \bV^{\top}\|_F^2
\to \Tr \bbE[UU^\top SVV^\top S]-\Tr \bar{Q}_tQ_{t+1}.\]
By Proposition \ref{prop:SE}, we have
$\Tr \bar{Q}_tQ_{t+1}=\Tr \bar{Q}Q+o_t(1)$ for the unique fixed point
$(\bar{Q},Q)$ of \eqref{eq:fixedpoint}, establishing the result.
\end{proof}

\section{Details of simulations and data analyses}\label{sec:data}

\subsection{Details of EB-PCA.}

A software implementation of EB-PCA is publicly available
at \url{https://github.com/TraceyZhong/EBPCA}.

In our implementation, we take 
$\cP$ as the class of all probability distributions
on $\bbR^k$,
and approximate this class $\cP$ using a discrete support. We
apply the ``exemplar method'' of \cite{lashkari2008convex} and take
the support points to be
\[\{z_1,\ldots,z_n\}=\{M^{-1}x_1,\ldots,M^{-1}x_n\}.\]
This is motivated by the observation that these values should provide a fine
grid that covers the high density regions under any true prior $\pi_* \in \cP$.
This grid automatically adapts to these high density regions, and the number
of support points is independent of the dimension $k$. Thus we express
\begin{equation}\label{eq:discretepi}
\pi=\sum_{i=1}^n w_i \delta_{z_i}=\sum_{i=1}^n w_i \delta_{M^{-1}x_i}
\end{equation}
and maximize \eqref{eq:NPMLE} over the probability weights $w_1,\ldots,w_n$.

The resulting maximization problem is concave
over the weights $w_1,\ldots,w_n$. We solved this maximization using the 
generic interior point solver implemented in MOSEK. When analyzing real data of
high dimension, we set the maximum number of prior support points to be 2000
to reduce the computational cost. More specifically, when either dimension
($n$ or $d$) exceeded 2000, we drew a random subsample of size 2000 to be the 
prior support points.

\subsection{Details of simulations and applications}\label{subsec:practical}

\paragraph{Simulation details.}

To provide a direct comparison between mean-field VB and EB-PCA,
we used the same NPMLE procedure in the two methods. EB-PCA, oracle
Bayes AMP, and EBMF were all run for 10 iterations, which was sufficient for
convergence.

For \texttt{spca}, we used the implementation in the R package \texttt{elasticnet}.
We tested \texttt{spca} with
a range of sparsity tuning parameter between 0.025 and 0.175 and report the
best results.
\paragraph{Genotype data pre-processing.}

For the 1000 Genomes Project, we used the Phase III genotypes publicly available at
\url{https://www.internationalgenome.org}. For the International HapMap Project, we used
the third phase genotype data available at 
\url{https://www.sanger.ac.uk/resources/downloads/human/hapmap3.html}.

For both 1000 Genomes and HapMap3, we used
\texttt{Plink(v1.90b6.9)} to retain only common variants with minor allele
frequency $>0.1$, and generated a set of such variants in approximate linkage
disequilibrium \texttt{(--indep 50 5 1.5)}. This yielded 466{,}431 SNPs for
1000 Genomes and 142{,}185 SNPs for HapMap3. For 1000 Genomes, we chose a random subset of
100{,}000 SNPs and used this to compute the ground truth. For HapMap3, 
we used all 142{,}185 selected SNPs to compute the ground truth.

\paragraph{Gene expression data pre-processing.}

We used raw PBMC single-cell data from 10X Genomics, publicly available at
\url{https://cf.10xgenomics.com/samples/cell/pbmc3k/pbmc3k_filtered_gene_bc_matrices.tar.gz.}
We performed quality control following
\url{https://satijalab.org/seurat/v3.2/pbmc3k_tutorial.html} 
and applied an iterative procedure to clean the gene expression count matrix. We
first removed genes with no variation across cells. We then centered and
scaled the counts for each gene, and computed sample PCs. We used the PCs
to identify outlier cells, and repeated this procedure with outliers removed.
In total, we removed 12 cells and 3 genes in this cleaning step.

\paragraph{General pre-processing.}

In all data examples, we centered and scaled the samples (SNPs for genotype
data, and genes for expression data) before applying either PCA or EB-PCA.
This reduced the influence of single SNPs or single genes on the PCs, and
improved their interpretability.

\paragraph{Accuracy of the noise model.}

Figure~\ref{fig:singvals} displays a scree plot of all singular values, and a
histogram of all bulk singular values, for each of the three data examples to
which EB-PCA was applied. The distributions of singular values predicted by the
square-root of the Marcenko-Pastur law, corresponding to the modeling
assumption $w_{ij} \sim \cN(0,1/n)$, are overlaid on the histograms. We observe
a near-perfect fit for the 1000 Genomes example, suggesting that a model of
independent noise entries with homoscedastic variance may be quite accurate for
the subsampled genotype data. The fit to the HapMap3 and the single-cell gene expression data
are not as close, indicating that the noise model $w_{ij} \sim \cN(0,1/n)$ is a
rougher approximation for these data, and suggesting the possibility for further
improvement using a method developed around more general models of
correlated noise.

\paragraph{Re-estimation of priors.}\label{par:iterNPMLE}

Figure~\ref{fig:iterNPMLE} compares estimation accuracy with and without
re-estimating the priors $\bar{\pi}_*$ and $\pi_*$ after the first iteration,
as discussed in Remark~\ref{remark:iterNPMLE}, on the 1000 Genomes example.
For both subsample sizes of $100$ and $1000$ SNPs, the difference in estimation
error between these approaches is minimal.

The runtime for EB-PCA in our implementation is dominated by the NPMLE
computation. (For subsamples of 1000 SNPs, total runtime for 5 EB-PCA
iterations was 53 seconds using the NPMLE update in each iteration, compared to
11 seconds using only the NPMLE in the first iteration.) These observations
suggest that re-estimating the priors may be avoided without compromising
accuracy, if computational cost is a concern.

\paragraph{Contribution of iterative refinement.}\label{par:AMPvsInit}

Figure~\ref{fig:iterNPMLE} depicts also the estimation errors across EB-PCA
iterations, on the same 1000 Genomes example.
The decrease in error from Iteration 0 (sample PCs)
to Iteration 1 reflects the initial denoising step, and subsequent decreases in
error indicate gains from iterative refinement using AMP. As discussed in the
Main Text, for this data example, gains in accuracy for EB-PCA resulted
mostly from the initial denoising, and EB-PCA typically converged within 1--2
iterations.
This is in contrast to the simulated examples of
Figures~\ref{fig:iteratesuniform}(c) and \ref{fig:iteratestwopoint}(c) for
weaker signal strengths, where both initial denoising and iterative refinement
contribute to the improved accuracy of EB-PCA.

\begin{figure}
\minipage{0.5\columnwidth}
\xincludegraphics[width=0.95\textwidth,label=(a)]{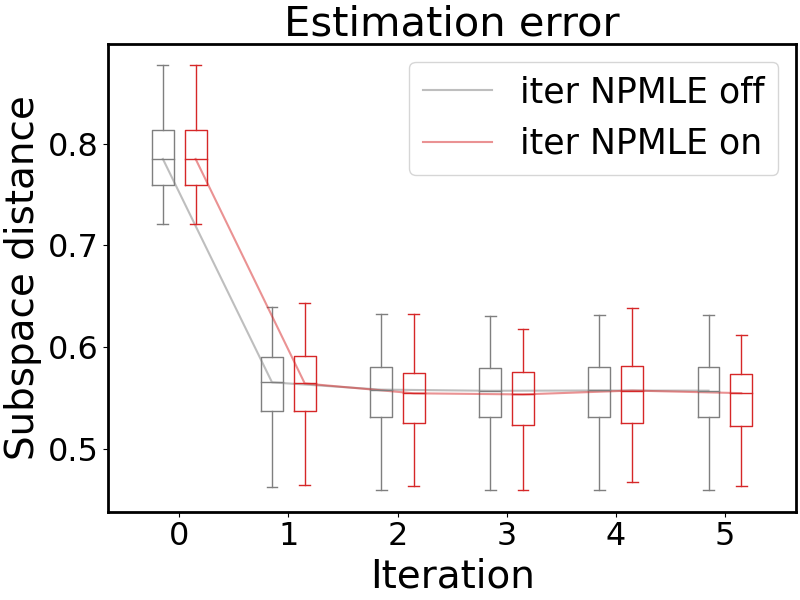}
\endminipage
\minipage{0.5\columnwidth}
\xincludegraphics[width=0.95\textwidth,label=(b)]{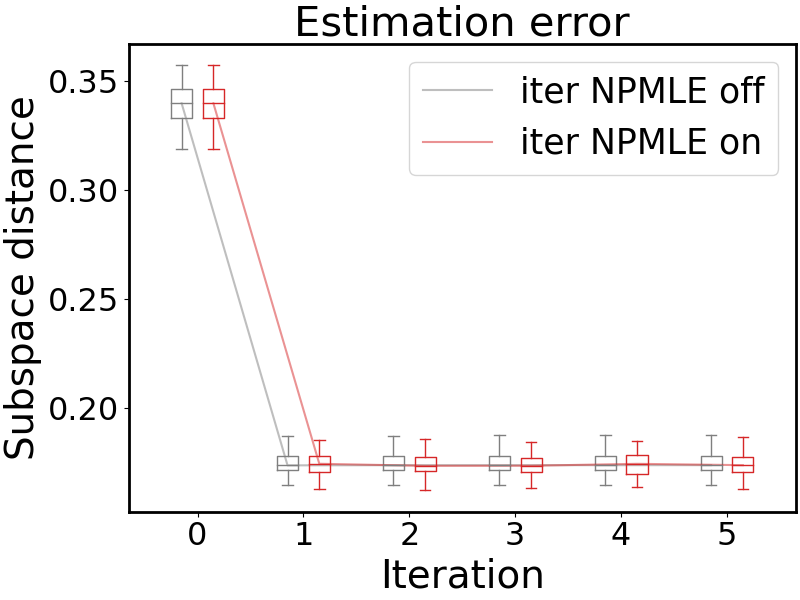}
\endminipage
\caption{Estimation errors for EB-PCA with and without iterative updates of priors using NPMLE. 
(a)-(b) Comparison on 50 random subsamples of $100$ 
SNPs or $1000$ SNPs from the 1000 Genomes project.}\label{fig:iterNPMLE}
\end{figure} 

%%% Each figure should be on its own page

% singular value distribution
\begin{figure}[t]
\xincludegraphics[width=0.33\textwidth,label=(a)]{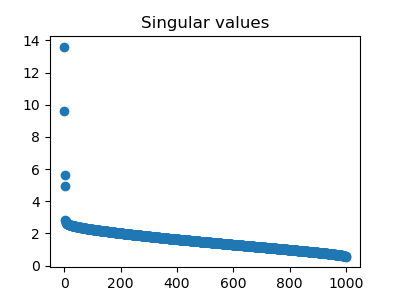}%
\xincludegraphics[width=0.33\textwidth,label=(b)]{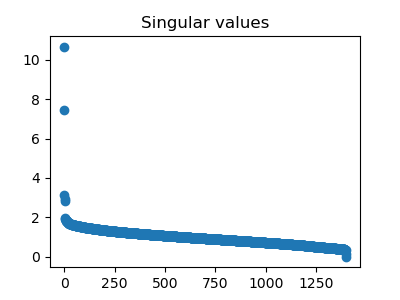}%
\xincludegraphics[width=0.33\textwidth,label=(c)]{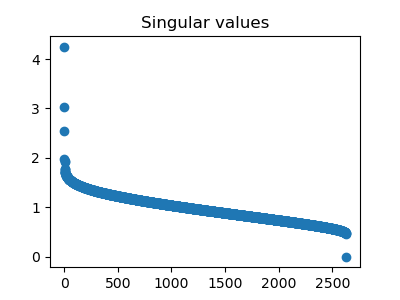}\\
\xincludegraphics[width=0.33\textwidth]{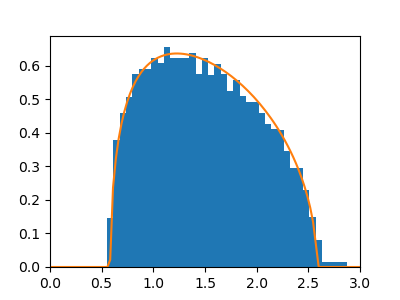}%
\xincludegraphics[width=0.33\textwidth]{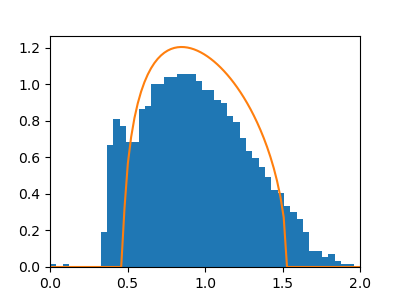}%
\xincludegraphics[width=0.33\textwidth]{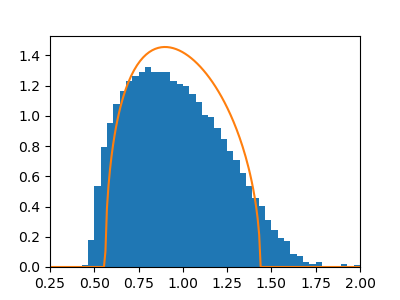}
\caption{Scree plots and histograms of the singular values for (a) the matrix
of 1000 subsampled SNPs from the 1000 Genomes Project, corresponding to Figure
\ref{fig:1000genomesintro}, (b) the matrix of 5000 subsampled SNPs from the HapMap3, corresponding to Figure \ref{fig:hapmap3}, and (c) the matrix of all
13{,}711 gene expressions from the PBMC single-cell RNA-seq data,
corresponding to Figure \ref{fig:PBMC}. A theoretical bulk distribution for the
singular values, predicted by the Marcenko-Pastur law under the noise model
$w_{ij} \overset{iid}{\sim} \cN(0,1/n)$, is overlaid on the histograms.}
\label{fig:singvals}
\end{figure}

% % Two point iterates plot % maybe we can remove it.
\begin{figure}
\minipage{0.33\columnwidth}
\xincludegraphics[width=\textwidth,label=(a)]{Fig/univariate/Figure2/Two_points/EB-PCA_u_iter00.png}\\\\
\xincludegraphics[width=\textwidth]{Fig/univariate/Figure2/Two_points/EB-PCA_u_iter01.png}\\\\
\xincludegraphics[width=\textwidth]{Fig/univariate/Figure2/Two_points/EB-PCA_u_iter04.png}
\endminipage%
\minipage{0.33\columnwidth}
\xincludegraphics[width=\textwidth,label=(b)]{Fig/univariate/Figure2/Two_points/EBMF_u_iter00.png}\\\\
\xincludegraphics[width=\textwidth]{Fig/univariate/Figure2/Two_points/EBMF_u_iter01.png}\\\\
\xincludegraphics[width=\textwidth]{Fig/univariate/Figure2/Two_points/EBMF_u_iter04.png}
\endminipage%
\minipage{0.33\columnwidth}
\xincludegraphics[width=\textwidth,label=(c)]{Fig/univariate/Figure3/Two_points_u.png}
\endminipage
\caption{Same comparisons as in Figure \ref{fig:iteratesuniform},
with a two point $\operatorname{Bernoulli}\{+1,-1\}$ prior
and overlaid convolution densities
$\operatorname{Bernoulli}\{+1,-1\} * \cN(0,\bar{\sigma}_t^2/\bar{\mu}_t^2)$.
}
\label{fig:iteratestwopoint}
\end{figure}

% bivariate table
\begin{table}
  \caption{PC estimation errors for two simulated bivariate priors\label{tab:bivariate}}
  \centering
  % \addtolength{\tabcolsep}{4pt}
  \begin{tabular}[c]{c|ccc|ccc}
      \toprule
       & \multicolumn{3}{c|}{Concentric circle} & \multicolumn{3}{c}{Three-point
mixture}\\ \hline
      Error & PC1 & PC2 & Joint & PC1 & PC2 & Joint \\ \hline\hline
      PCA  & .25(.0060) & .52(.014) & .41(.0080) & .25(.0064) & .52(.016) & .40(.011)  \\ 
      Marginal EB-PCA & .23(.0068) & .47(.014) & .37(.0088) & .081(.020) & .30(.027) & .22(.019)  \\ 
      Joint EB-PCA    & .22(.0065) & .37(.019) & .30(.011)  & .046(.018) & .080(.033) & .067(.025)  \\
      \bottomrule
  \end{tabular}
\end{table}

% HapMap3 Table
\begin{table}
  \caption{PC estimation errors on subsampled genotype matrices from the HapMap3\label{tab:hapmap3}}
  \centering
  % \addtolength{\tabcolsep}{4pt}
  \begin{tabular}[c]{c|c|ccccc}
      \toprule
      & Error & PC1 & PC2 & PC3 & PC4 & Joint \\ \hline\hline
      \multirow{2}{*}{1000 SNPs} & PCA &.11(.0069) & .16(.0059) & .45(.060) & .54(.094) & .51(.098) \\ 
      & EB-PCA &.064(.012) & .084(.0091) & .33(.088) & .37(.14) & .36(.14) \\ \hline
      \multirow{2}{*}{5000 SNPs} & PCA  & .049(.0029) & .071(.0023) & .22(.043) & .27(.051) & .25(.043) \\ 
      & EB-PCA & .032(.0044) & .043(.0038) & .17(.052) & .20(.068) & .17(.053) \\\hline
      \multirow{2}{*}{10000 SNPs} & PCA &.034(.0023) & .049(.0020) & .15(.021) & .18(.029) & .17(.027) \\ 
      & EB-PCA &.024(.0027) & .032(.0026) & .12(.027) & .14(.035) & .13(.033) \\ 
      \bottomrule
  \end{tabular}
\end{table}

\clearpage

\small{
\bibliographystyle{alpha}
\bibliography{EBPCA}}

\end{document}